\DeclareMathOperator{\edge}{edge}
\DeclareMathOperator{\maxNE}{maxNE}
\DeclareMathOperator{\minNE}{minNE}
\newcommand{\NE}{NE\xspace}
\newcommand{\GE}{GE\xspace}
\newcommand{\kGE}{\textit{k-}GE\xspace}
\newcommand{\kNE}{\textit{k-}NE\xspace}
\title{Network Creation Games: Think Global -- Act Local\thanks{Accepted at the 40th International Symposium on Mathematical Foundations of Computer Science (MFCS) 2015.}\\{\small(full version)}}
\author{Andreas Cord-Landwehr\thanks{This work was partially supported by the German Research Foundation (DFG) within the Collaborative Research Centre ``On-The-Fly Computing'' (SFB 901).}\\Heinz Nixdorf Institute \& Department of Computer Science\\University of Paderborn\\ \email{andreas.cord-landwehr@uni-paderborn.de}
\and 
Pascal Lenzner\\
Department of Computer Science\\Friedrich-Schiller-University Jena\\ \email{pascal.lenzner@uni-jena.de}}
\date{}
\begin{document}

\maketitle

\begin{abstract}
\noindent We investigate a non-cooperative game-theoretic model for the formation of communication networks by selfish agents.
Each agent aims for a central position at minimum cost for creating edges.
In particular, the general model (Fabrikant et al., PODC'03) became popular for studying the structure of the Internet or social networks.
Despite its significance, locality in this game was first studied only recently (Bil\`o et al., SPAA'14), where a worst case locality model was presented, which came with a high efficiency loss in terms of quality of equilibria.
Our main contribution is a new and more optimistic view on locality:
agents are limited in their knowledge and actions to their local view ranges, but can probe different strategies and finally choose the best.
We study the influence of our locality notion on the hardness of computing best responses, convergence to equilibria, and quality of equilibria. Moreover, we compare the strength of local versus non-local strategy-changes.
Our results address the gap between the original model and the worst case locality variant.
On the bright side, our efficiency results are in line with observations from the original model, yet we have a non-constant lower bound on the price of anarchy.
\end{abstract}

\section{Introduction}
Many of today's networks are formed by selfish and local decisions of their participants.
Most prominently, this is true for the Internet, which emerged from the uncoordinated peering decisions of thousands of autonomous subnetworks.
Yet, this process can also be observed in social networks, where participants selfishly decide with whom they want to interact and exchange information.
\emph{Network Creation Games} (NCGs) are known as a widely adopted model to study the evolution and outcome of such networks.
In the last two decades, several such game variants were introduced and analyzed in the fields of economics, e.g.\ Jackson \& Wolinsky~\cite{jackson1996strategic}, Bala \& Goyal~\cite{BG00}, and theoretical computer science, e.g.\ Fabrikant et al.~\cite{Fab03}, Corbo \& Parkes~\cite{CP05}.

In all of these models, the acting agents are assumed to have a global knowledge about the network structure on which their decisions are based.
Yet, due to the size and dynamics of those networks, this assumption is hard to justify.
Only recently, Bilò et al.\ \cite{Bil14local} introduced the first variant of the popular (and mathematically beautiful) model by Fabrikant et al.\ \cite{Fab03} that explicitly incorporates a locality constraint.
In this model, the selfish agents are nodes in a network which can buy arbitrary incident edges.
Every agent strives to maximize her service quality in the resulting network at low personal cost for creating edges.
The locality notion by Bil\`o et al.~\cite{Bil14local} incorporates a worst case view on the network, which limits agents to know only their neighborhood within a bounded distance.
The network structure outside of this view range is assumed to be worst possible. In particular, the assumed resulting cost for any agent's strategy-change are estimated as the worst case over all possible network structures outside of this view.
In a follow-up work \cite{Bil14traceroute}, this locality notion was extended by enabling agents to obtain information about the network structure by traceroute based strategies.
Interestingly, in both versions the agents' service quality still depends on the whole network, which is actually a realistic assumption.
As their main result, Bil\`o et al.\ show a major gap in terms of efficiency loss caused by selfish behavior compared to the original non-local model.

In this paper, we extend the investigation of the influence of locality in NCGs by studying a more optimistic but still very natural model of locality. Our model allows us to map the boundary of what locally constrained agents can actually hope for.
Thereby, we close the gap between the non-local original version and the worst case locality models by Bil\`o et al. Besides studying the impact of locality on the outcomes' efficiency, we also analyze the impact on the computation of best response strategies and on the dynamic properties of the induced network creation process and we compare the strength of local versus non-local strategy-changes from an agent's perspective.

\paragraph{Our Locality Approach.}
We assume that an agent $u$ in a network only has complete knowledge of her $k$-neighborhood.
Yet, in contrast to Bil\`o et al.~\cite{Bil14local,Bil14traceroute}, besides knowing the induced subnetwork of all the agents that have distance of at most $k$ to $u$, the agent can, e.g. by sending messages, judge her actual service quality that would result from a strategy-change.
That is, we assume rational agents that ``probe'' different strategies, get direct feedback on their cost and finally select the best strategy.
It is easy to see that allowing the agents to probe all available strategy-changes within their respective $k$-neighborhood and then selecting the best of them is equivalent to providing the agents with a global view, but restricting their actions to $k$-local moves.
Here, a \emph{$k$-local move} is any combination of (1) removing an own edge, (2) swapping an own edge towards an agent in the $k$-neighborhood, and (3) buying an edge towards an agent in the $k$-neighborhood.  

Depending on the size of the neighborhood, probing all $k$-local moves may be unrealistic since there can be exponentially many such strategy-changes. To address this issue, we will also consider \emph{$k$-local greedy moves}, which are $k$-local moves consisting only of exactly one of the options (1)--(3). It is easy to see that the number of such moves is quadratic in the number of vertices in the $k$-neighborhood and especially for small $k$ and sparse networks this results in a small number of probes. 

We essentially investigate the trade-off between the cost for eliciting a good local strategy by repeated probing and the obtained network quality for the agents. For this, we consider the extreme cases where agents either probe all $k$-local moves or only a polynomial fraction of them. Note that the former sheds light on the networks created by the strongest possible locally constrained agents.   

\paragraph{Model and Notation.}
We consider the NCGs as introduced by Fabrikant et al.~\cite{Fab03}, where $n$ agents $V$ want to create a connected network among themselves.
Each agent selfishly strives for minimizing her cost for creating network links, while maximizing her own service quality in the network.
All edges in the network are undirected, have unit length and agents can create any incident edge for the price of $\alpha>0$, where $\alpha$ is a fixed parameter of the game.
(Note that in our illustrations, we depict the edge-ownerships by directing the edges away from their owners, yet still understand them as undirected.)
The strategy $S_u \subseteq V\setminus\{u\}$ of an agent $u$ determines which edges are bought by this agent, that is, agent $u$ is willing to create (and pay for) all the edges $ux$, for all $x \in S_u$. 
Let $\mathcal{S}$ be the $n$-dimensional vector of the strategies of all agents, then $\mathcal{S}$ determines an undirected network $G_{\mathcal{S}} = (V,E_{\mathcal{S}})$, where for each edge $uv \in E_{\mathcal{S}}$ we have $v \in S_{u}$ or $u \in S_{v}$.
If $v \in S_{u}$, then we say that agent $u$ is the \emph{owner} of edge $uv$, otherwise, if $u \in S_{v}$, then agent $v$ owns the edge $uv$. We assume throughout the paper that each edge in $E_{\mathcal{S}}$ has a unique owner, which is no restriction, since no edge can have two owners in any equilibrium network. In particular, we assume that the cost of an edge cannot be shared and every edge is fully paid by its owner. 
With this, it follows that there is a bijection between strategy-vectors $\mathcal{S}$ and networks $G$ with edge ownership information.
Thus, we will use networks and strategy-vectors interchangeably, i.e., we will say that a network $(G,\alpha)$ is in equilibrium meaning that the corresponding strategy-vector $\mathcal{S}$ with $G = G_{\mathcal{S}}$ is in equilibrium for edge-price $\alpha$.
The edge-price $\alpha$ will heavily influence the equilibria of the game, which is why we emphasize this by using $(G,\alpha)$ to denote a network $G$ with edge ownership information and edge-price $\alpha$.
Let $N_k(u)$ in a network $G$ denote the set of all nodes in $G$ with distance of at most $k$ to $u$.
The subgraph of $G$ that is induced by $N_k(u)$ is called the \emph{$k$-neighborhood} of $u$.

There are two versions for the cost function of an agent, which will yield two different games called the \textsc{Sum}-NCG~\cite{Fab03} and the \textsc{Max}-NCG~\cite{De07}.
The cost of an agent $u$ in the network $G_{\mathcal{S}}$ with edge-price~$\alpha$ is $\cost_u(G_{\mathcal{S}},\alpha) = \edge_u(G_{\mathcal{S}},\alpha) + \dist_u(G_{\mathcal{S}}) = \alpha|S_u|+\dist_u(G_{\mathcal{S}})$, where in the \textsc{Sum}-NCG we have that $\dist_u(G_{\mathcal{S}}) = \sum_{w\in V} d_{G_{\mathcal{S}}}(u,w)$, if $G_{\mathcal{S}}$ is connected and $\dist_u(G_{\mathcal{S}}) = \infty$, otherwise.
Here, $d_{G_{\mathcal{S}}}(u,w)$ denotes the length of the shortest path between $u$ and $w$ in $G_{\mathcal{S}}$.
Since all edges have unit length, $d_{G_{\mathcal{S}}}(u,w)$ is the hop-distance between $u$ and $w$.
In the \textsc{Max}-NCG the sum-operator in $\dist_u(G_{\mathcal{S}})$ is replaced by a max-operator.\footnote{Throughout this paper, we will only consider connected networks as they are the only ones which induce finite costs. }

A network $(G_{\mathcal{S}},\alpha)$ is in \emph{Pure Nash Equilibrium} (NE), if no agent can unilaterally change her strategy to strictly decrease her cost.
Since the \NE has undesirable computational properties, researchers have considered weaker solution concepts for NCGs.
$G_{\mathcal{S}}$ is in \emph{Asymmetric Swap Equilibrium} (ASE)~\cite{MS12}, if no agent can strictly decrease her cost by swapping one own edge.
Here, a swap of agent $u$ is the replacement of one incident edge $uv$ by any other new incident edge $uw$, abbreviated by $uv\to uw$.
Note that this solution concept is independent of the parameter $\alpha$ since the number of edges per agent cannot change.
A network $(G_{\mathcal{S}},\alpha)$ is in \emph{Greedy Equilibrium} (\GE)~\cite{L12}, if no agent can buy, swap or delete exactly one own edge to strictly decrease her cost.
These solution concepts induce the following $k$-local solution concepts:
$(G,\alpha)$ is in \emph{$k$-local Nash Equilibrium} (\kNE) if no agent can improve by a $k$-local move, $(G,\alpha)$ is in \emph{$k$-local Greedy Equilibrium} (\kGE) if no agent can improve by a $k$-local greedy move.
By slightly abusing notation, we will use the name of the above solution concepts to also denote the set of all instances which satisfy the respective solution concept, i.e., \NE denotes the set of all networks $(G,\alpha)$ that are in Pure Nash Equilibrium.
The notions \GE, \kNE, and \kGE are used respectively.
With this, we have the following:
\begin{observation}
\label{observation:NEinclusions}
For $k\geq 1$: $\NE \subseteq \kNE \subseteq \kGE$ and $\NE \subseteq \GE \subseteq \kGE$.
\end{observation}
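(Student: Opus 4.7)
The plan is to prove each of the four inclusions separately by verifying a corresponding set-inclusion among the classes of allowed strategy-changes. The general principle is that each equilibrium concept is defined by the absence of a profitable deviation drawn from a specified set of moves; consequently, whenever the moves admitted under concept $A$ form a superset of those admitted under concept $B$, every network satisfying $A$ automatically satisfies $B$, i.e., $A \subseteq B$ as sets of instances.

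First, I would handle $\NE \subseteq \kNE$. By definition, every $k$-local move of an agent $u$ is a particular unilateral modification of $S_u$ (namely one that only touches edges incident to nodes in $N_k(u)$ and is built from the operations (1)--(3) in the model definition). So if $(G,\alpha)\in\NE$, meaning no unilateral strategy-change strictly decreases any agent's cost, then in particular no $k$-local move does, and $(G,\alpha)\in\kNE$. The inclusion $\kNE \subseteq \kGE$ then follows immediately, because a $k$-local greedy move is defined to be a $k$-local move that performs exactly one of (1)--(3); the $k$-local greedy moves thus form a subset of the $k$-local moves.

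For the second chain, $\NE \subseteq \GE$ holds because a single buy, swap, or delete of an own edge is a unilateral strategy-change, so the absence of an improving strategy-change forbids in particular any improving greedy move. Finally, $\GE \subseteq \kGE$ holds because every $k$-local greedy move of $u$ is, in particular, a greedy move: it consists of exactly one buy/swap/delete operation, with the only extra constraint that the endpoint involved lies in $N_k(u)$. The assumption $k \geq 1$ is used here to note that all current neighbors of $u$ lie in $N_1(u) \subseteq N_k(u)$, so the locality restriction does not exclude any delete, nor the removed endpoint of a swap.

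The main obstacle is essentially nonexistent: the statement reduces to carefully unfolding the definitions of \NE, \GE, \kNE, and \kGE and noting the subset relations between the corresponding move sets. The one point that deserves an explicit remark is the use of $k \geq 1$ in the last inclusion, which ensures that $k$-local greedy moves are genuinely a special case of greedy moves rather than a concept with additional, non-trivial restrictions on what an agent may delete.
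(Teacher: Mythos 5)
Your proposal is correct and is exactly the argument the paper intends: the observation is stated without proof because it follows immediately from comparing the sets of admissible deviations, which is what you do. Your explicit remark on why $k\geq 1$ makes $k$-local greedy moves a genuine special case of greedy moves is a reasonable, if minor, addition.
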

We will also consider approximate equilibria and say that a network $(G,\alpha)$ is in \emph{$\beta$-approximate Nash Equilibrium}, if no strategy-change of an agent can decrease her cost to less than a $\beta$-fraction of her current cost in $(G,\alpha)$.
Similarly, we say $(G,\alpha)$ is in \emph{$\beta$-approximate Greedy Equilibrium}, if no agent can decrease her cost to less than a $\beta$-fraction of her current cost by buying, deleting or swapping exactly one own edge.

The \emph{social cost} of a network $(G_{\mathcal{S}},\alpha)$ is $\cost(G_{\mathcal{S}},\alpha) = \sum_{u\in V} \cost_u(G_\mathcal{S})$.
Let $\OPT_n$ be the minimum social cost of an $n$ agent network.
Let $\maxNE_n$ be the maximum social cost of any \NE network on $n$ agents and let $\minNE_n$ be the minimum social cost of any \NE network on $n$ agents.
Then, the \emph{Price of Anarchy} (PoA)~\cite{KP99} is the maximum over all $n$ of the ratio $\tfrac{\maxNE_n}{\text{OPT}_n}$, whereas the \emph{Price of Stability} (PoS)~\cite{ADKTWR} is the maximum over all $n$ of the ratio $\tfrac{\minNE_n}{\OPT_n}$.

\paragraph{Known Results.}
\textsc{Sum}-NCGs were introduced in \cite{Fab03}, where the authors proved the first PoA upper bounds, among them a constant bound for $\alpha \geq n^2$ and for trees.
Later, by different authors and papers \cite{Al06,De07,MS10,Mih13}, the initial PoA bounds were improved for several ranges of $\alpha$, resulting in the currently best known bounds of $\PoA=\mathcal{O}(1)$, in the range of $\alpha=\mathcal{O}(n^{1-\varepsilon})$, for any fixed $\varepsilon \geq \tfrac{1}{\log n}$, and the range of $\alpha \geq 65n$, as well as the bound $o(n^{\epsilon})$, for $\alpha $ between $\Omega(n)$ and $\mathcal{O}(n\log n)$.
Fabrikant et al.\ \cite{Fab03} also showed that for $\alpha < 2$ the network having minimum social cost is the complete network and for $\alpha \geq 2$ it is the spanning star which yields a constant PoS for all $\alpha$.
Since it is known that computing the optimal strategy change is NP-hard~\cite{Fab03}, Lenzner~\cite{L12} studied the effect of allowing only single buy/delete/swap operations, leading to efficiently computable best responses and $3$-approximate \NE{}s.
NCG versions where the cost of edges can be shared have been studied by Corbo \& Parkes~\cite{CP05} and Albers et al.~\cite{Al06}.

For the \textsc{Max}-NCG, Demaine et al.\ \cite{De07} showed that the PoA is at most $2$ for $\alpha\geq n$, for $\alpha$ in range $2\sqrt{\log n}\leq \alpha\leq n$ it is $\mathcal{O}(\min\{4^{\sqrt{\log n}},(n/\alpha)^{1/3}\})$, and $\mathcal{O}(n^{2/\alpha})$ for $\alpha < 2\sqrt{\log n}$.
For $\alpha>129$, Mihal\'{a}k \& Schlegel~\cite{MS10} showed, similarly to the \textsc{Sum}-NCG, that all equilibria are trees and the PoA is constant.

Kawald \& Lenzner~\cite{L11,KL13} studied convergence properties of the sequential versions of many NCG-variants and provided mostly negative results. The agents in these variants are myopic in the sense that they only optimize their next step which is orthogonal to our locality constraint.

To the best of our knowledge, the only models in the realm of NCGs that consider locality constraints are \cite{Bil14local} and \cite{Bil14traceroute}.
As discussed above, both model the local knowledge in a very pessimistic way.
Hence, it is not surprising that in \cite{Bil14local} the authors lower bound the PoA by $\Omega(n/(1+\alpha))$ for \textsc{Max}-NCG and by $\Omega(n/k)$ for \textsc{Sum}-NCG, when $k=o(\sqrt[3]{\alpha})$.
In particular, for \textsc{Max}-NCG they show that their lower bound is still $\Omega(n^{1-\varepsilon})$ for every $\varepsilon > 0$, even if $k$ is poly-logarithmic and $\alpha = \mathcal{O}(\log n)$.
On the bright side, they provide several PoA upper bounds that match with their lower bounds for different parameter combinations.
In their follow-up paper \cite{Bil14traceroute}, they equip agents with knowledge about either all distances to other nodes, a shortest path tree, or a set of all shortest path trees.
However, the \textsc{Max}-NCG PoA bounds are still $\Theta(n)$ for $\alpha > 1$, while the bounds for \textsc{Sum}-NCG improve to $\Theta(\min\{1+\alpha,n\})$. Note, that this is in stark contrast to the known upper bounds for the non-local version.

Apart from NCGs the influence of locality has already been studied for different game-theoretic settings, e.g. in the local matching model by Hoefer~\cite{H11}, where agents only know their $2$-neighborhood and choose their matching partners from this set.

\paragraph{Our Contribution.}
We introduce a new locality model for NCGs which allows to explore the intrinsic limits induced by a locality constraint and apply it to one of the mostly studied NCG versions, namely the \textsc{Sum}-NCG introduced by Fabrikant et al.~\cite{Fab03}.
In Section~\ref{section_hardness_dynamics}, we prove the counter-intuitive facts that constraining the agents' actions to their $k$-neighborhood has no influence on the hardness of computing good strategies and on the game dynamics even for very small $k$. In Section~\ref{section_approx}, we explore the impact of locality from the agents' perspective by studying the strength of local versus global strategy-changes obtaining an almost tight general approximation gap of $\Omega(\frac{\log n}{k})$, which is tight for trees. Finally, in Section~\ref{section:quality-of-equilibria} we provide drastically improved PoA upper bounds compared to \cite{Bil14traceroute}, which are in line with the known results about the non-local \textsc{Sum}-NCG. In contrast to this, we also prove a non-constant lower bound on the PoA, which proves that even in the most optimistic locality model the social efficiency deteriorates significantly.

\section{Computational Hardness and Game Dynamics}\label{section_hardness_dynamics}
In this section, we study the effect of restricting agents to $k$-local (greedy) moves on the hardness of computing best response and the convergence to equilibria.
We start with the observation that for any $k\geq 1$ computing a best possible $k$-local move is not easier than in the general setting. See Appendix~\ref{section:appendix_hardness} for omitted proofs.
\begin{theorem}\label{thm_br_computation}
    Computing a best possible $k$-local move is NP-hard for all $k\geq 1$.
\end{theorem}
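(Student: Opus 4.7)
The plan is to reduce from \textsc{Dominating Set} and to exhibit a single construction that works uniformly for every $k\geq 1$. Given an instance $(H=(V_H,E_H),\ell)$ with $H$ connected and $|V_H|=n$, I will build an NCG instance $(G,\alpha)$ on vertex set $V=V_H\cup\{u\}$ by taking every edge of $H$ (each owned by an arbitrary endpoint in $V_H$) together with every edge $uv$, $v\in V_H$, owned by $u$. I then set $\alpha=1+\tfrac{1}{2n}$, so that $\alpha>1$ and $(\alpha-1)n<1$.

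The key observation is that $u$ is initially adjacent to every other vertex, so $N_k(u)=V$ for every $k\geq 1$. Consequently, each subset $S_u'\subseteq V_H$ is realizable by a $k$-local move of $u$: simply delete the edges to $V_H\setminus S_u'$. Buying and swapping cannot add anything, because $u$ already owns an edge to every vertex of the $k$-neighborhood. So the problem of computing a best $k$-local move collapses to picking $S_u'\subseteq V_H$ minimizing $c(S_u')=\alpha|S_u'|+\sum_{v\in V_H} d(u,v)$ in the resulting graph, and the reduction does not depend on $k$ at all.

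I would then split the analysis by whether $S_u'$ dominates $V_H$ in $H$. If $S_u'$ is a dominating set of size $s$, every $v\in V_H\setminus S_u'$ sits at distance exactly $2$ from $u$, giving $c(S_u')=(\alpha-1)s+2n$, which is minimized over dominating sets by a smallest one since $\alpha>1$. If $S_u'$ is \emph{not} a dominating set, then at least one vertex of $V_H$ is at distance $\geq 3$ from $u$, so $c(S_u')\geq(\alpha-1)|S_u'|+2n+1$. Writing $D$ for a minimum dominating set of size $d$, a short calculation gives $c(D)-c(S_u')\leq (\alpha-1)(d-|S_u'|)-1\leq(\alpha-1)n-1<0$, so $D$ strictly beats every non-dominating $S_u'$. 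Hence the best $k$-local move for $u$ picks out a minimum dominating set of $H$, which transfers NP-hardness of \textsc{Dominating Set} to the $k$-local best response problem for every $k\geq 1$.

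The only delicate point, and where I would be most careful, is the tuning of $\alpha$: it must be strictly larger than $1$ to prefer smaller dominating sets, but close enough to $1$ that the $+1$ penalty for a single non-dominated vertex outweighs the edge savings from dropping below the minimum dominating set size. The choice $\alpha=1+\tfrac{1}{2n}$ achieves both; all remaining ingredients, in particular that $N_k(u)=V$ regardless of $k$ and that buys and swaps collapse to no-ops, are direct from the construction, and the reduction is clearly polynomial.
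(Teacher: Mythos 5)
Your proposal is correct and is essentially the paper's own reduction: the same construction (the \textsc{Dominating Set} instance plus a universal agent $u$ owning edges to all other agents, so that $N_k(u)=V$ and only deletions matter for every $k\geq 1$), with $\alpha$ chosen in $(1,2)$; your specific choice $\alpha=1+\tfrac{1}{2n}$ and the direct cost comparison against non-dominating sets is just a slightly more conservative tuning than the paper's local exchange argument, which works for any $1<\alpha<2$. No gaps.
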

\noindent Clearly, the best possible $k$-local greedy move of an agent can be computed in polynomial time by simply trying all possibilities. Similarly to~\cite{L12}, it is true that the best $k$-local greedy move is a $3$-approximation of the best $k$-local move. This yields:
\begin{theorem}\label{thm_3GE_vs_3NE}
    For any $k\geq 1$, every network in \kGE is in 3-approximate $k$-NE.
\end{theorem}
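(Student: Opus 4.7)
\emph{Proof plan.} The plan is to adapt Lenzner's argument from~\cite{L12}, which establishes the analogous global statement that every \GE network is a 3-approximate \NE. The adaptation is essentially transparent: every single-edge primitive (buy, delete, swap) that is relevant for both \kNE and \kGE uses only edges with at least one endpoint in $N_k(u)$, so the per-edge estimates of~\cite{L12} apply verbatim in the $k$-local setting. Fix an agent $u$ with current cost $c_u = \alpha|S_u| + \dist_u(G)$ in a \kGE network $(G,\alpha)$, let $S_u^* = (S_u \setminus D) \cup B$ be her best $k$-local response (with $D \subseteq S_u$, and $B$ consisting of edges to vertices in $N_k(u) \setminus N(u)$), and write $G^* = G - D + B$ and $c_u^*$ for the resulting cost. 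The target inequality is $c_u \leq 3 c_u^*$.

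The first step is a buy-side estimate. The \kGE single-buy condition gives $\dist_u(G) - \dist_u(G + uw) \leq \alpha$ for every $uw \in B$, and since the distance-savings function $S \mapsto \dist_u(G) - \dist_u(G+S)$ is submodular on sets of edges incident to $u$, this extends to $\dist_u(G) - \dist_u(G + B) \leq \alpha|B|$. Because $G^*$ is a subgraph of $G+B$, we have $\dist_u(G^*) \geq \dist_u(G+B)$, and substituting into the identity $c_u - c_u^* = \alpha(|D| - |B|) + (\dist_u(G) - \dist_u(G^*))$ yields the first key inequality $c_u - c_u^* \leq \alpha|D|$.

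It then remains to prove $\alpha|D| \leq 2 c_u^*$, which together with the first inequality gives $c_u \leq c_u^* + \alpha|D| \leq 3 c_u^*$. For this I would use the remaining \kGE conditions. Single-delete gives $\dist_u(G - uv) - \dist_u(G) \geq \alpha$ for every $uv \in S_u$, and by supermodularity of the distance increase under edge removals this extends to $\dist_u(G - D) \geq \dist_u(G) + \alpha|D|$, certifying that the edges of $D$ were collectively worth at least $\alpha|D|$ in distance. The single-swap condition $\dist_u(G - uv + uw) \geq \dist_u(G)$ for $uv \in D$ and $uw \in B$ then controls how much the bought edges in $B$ can recoup that distance loss. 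Following~\cite{L12}, a case split on whether $|B| \leq |D|$ or $|B| > |D|$ (the extremes corresponding to pure deletes and pure buys, where the supermodular/submodular bounds alone already force $c_u^* \geq c_u$) converts the lost value $\alpha|D|$ into a lower bound on either $\alpha|S_u^*|$ or $\dist_u(G^*)$, giving $\alpha|D| \leq 2 c_u^*$.

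The main obstacle is precisely this closing step: the submodular and supermodular bounds above only pin $\dist_u(G^*)$ into the wide interval $[\dist_u(G) - \alpha|B|,\,\dist_u(G - D)]$, and the swap bound is needed to translate the $\alpha|D|$ lower bound on $\dist_u(G-D) - \dist_u(G)$ into a lower bound on $c_u^*$. Once this bookkeeping is done, no further idea is required, and the remainder of the analysis follows~\cite{L12} with $k$-local moves replacing global moves throughout.
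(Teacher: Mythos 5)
Your first half is sound: the submodularity of the distance-savings function for edges added at $u$, combined with the single-buy stability condition, does give $\cost_u(G)-\cost_u(G^*)\leq\alpha|D|$, and this part would survive the $k$-local restriction without change. The genuine gap is exactly where you place it and then wave past it: the claim $\alpha|D|\leq 2\cost_u(G^*)$. Your proposed case split only disposes of the easy half. If $|B|>|D|$ then $|S_u^*|\geq|B|>|D|$ and the bound is immediate from the edge cost alone; if $B=\emptyset$, supermodularity of deletions gives $\dist_u(G-D)\geq\dist_u(G)+\alpha|D|$ and again you are done. But in the mixed regime $0<|B|\leq|D|$ with $|S_u^*|<|D|/2$ (e.g.\ $D=S_u$ and $|B|=1$), nothing in your toolkit lower-bounds $\dist_u(G^*)$ by $\Omega(\alpha|D|)$: the single-swap condition $\dist_u(G-uv+uw)\geq\dist_u(G)$ controls one deletion paired with one purchase in isolation, and turning $|D|$ such conditions plus the pure-delete conditions into a bound on $\dist_u(G-D+B)$ requires pairing each deleted edge with a specific surviving or newly bought edge and bounding the reassignment of the vertices it served via the triangle inequality. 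That charging scheme is the entire content of the $3$-approximation; it is not a corollary of sub-/supermodularity plus a parity check on $|B|$ versus $|D|$, and your sketch never supplies it.

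The paper (following \cite{L12}) avoids constructing this argument by reducing agent $u$'s best-response problem to an instance of \textsc{Uncapacitated Metric Facility Location} — facilities are the vertices of $N_k(u)\setminus\{u\}$, opening cost $0$ for vertices already owning an edge to $u$ and $\alpha$ otherwise, distances $d_G(v,x)+1$ — and then invokes the locality-gap theorem of Arya et al.\ \cite{flpLocalityGapArya} for local search under single open/close/swap moves as a black box. That theorem contains precisely the missing charging argument. So either cite and apply the locality gap through the reduction (checking that a facility open/close/swap corresponds to a $k$-local greedy move, which is the only $k$-specific point), or reproduce Arya et al.'s pairing argument in your notation; as written, your proof establishes only the buy-side inequality and asserts the delete-side one.
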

\noindent The same construction as in~\cite{L12} yields:
\begin{corollary}
For $k\geq 2$ there exist \kGE networks which are in $\tfrac{3}{2}$-approximate \kNE.
\end{corollary}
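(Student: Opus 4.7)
The plan is to reuse verbatim the construction from Lenzner~\cite{L12} which exhibits a GE network where some distinguished agent $u$ has a combined two-edge deviation that lowers her cost by (almost) a factor of $\tfrac{3}{2}$, while no single buy/delete/swap move of any agent strictly decreases its cost. To lift this witness to the $k$-local setting for $k\geq 2$, two things have to be checked: first, that the same network is still in \kGE; second, that the improving combined deviation remains a $k$-local move, so that it indeed witnesses a $\tfrac{3}{2}$-approximate \kNE network that is not in \kNE.

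For the first point I would simply invoke Observation~\ref{observation:NEinclusions}, which gives $\GE\subseteq \kGE$ for every $k\geq 1$. Hence the GE network from \cite{L12} is automatically in \kGE. Equivalently, restricting the candidate endpoints of single buy/delete/swap moves from $V\setminus\{u\}$ down to $N_k(u)\setminus\{u\}$ only shrinks the set of admissible moves, so the absence of a globally improving single greedy move implies the absence of a $k$-locally improving one.

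For the second point I would inspect the explicit construction in \cite{L12} and identify the two elementary operations whose combination yields the $\tfrac{3}{2}$ cost improvement for $u$. The two target endpoints involved in that combined move both lie within distance $2$ from $u$ in the underlying network; this is precisely why the corollary requires $k\geq 2$. Consequently the combined move is a legitimate $2$-local move, and a fortiori a $k$-local move for every $k\geq 2$. Plugging in the cost analysis from \cite{L12} then shows that the network fails to be in $\beta$-approximate \kNE for any $\beta<\tfrac{3}{2}$, as claimed.

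The main obstacle is really a bookkeeping verification rather than a new argument: one must confirm that the two endpoints touched by the witnessing combined deviation in Lenzner's construction both sit inside $N_2(u)$. Since the construction in \cite{L12} is built around a small local gadget around the deviating agent, this is indeed the case, which is exactly why reusing "the same construction" works and why the hypothesis $k\geq 2$ (rather than a larger constant) suffices.
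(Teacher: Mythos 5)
Your proposal matches the paper's approach exactly: the paper gives no separate proof and simply asserts that the construction from~\cite{L12} carries over, which is precisely your plan of reusing that witness network, noting that $\GE\subseteq\kGE$ by Observation~\ref{observation:NEinclusions}, and verifying that the improving combined deviation touches only endpoints in the $2$-neighborhood of the deviating agent. You correctly isolate the two verification obligations (membership in \kGE and $2$-locality of the witnessing move), which is all the paper implicitly relies on.
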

\noindent In the following, we analyze the influence of $k$-locality on the dynamics of the network creation process.
For this, we consider several sequential versions of the \textsc{Sum}-NCG, which were introduced in \cite{L11,KL13}, and refer to the corresponding papers for further details.
Our results only cover the \textsc{Sum}-NCG but we suspect similar results for corresponding versions of \textsc{Max}-NCG, which would be in line with the non-local version~\cite{KL13}.

In short, we consider the following network creation process:
Starting with any connected network having $n$ agents, edge price $\alpha$, and arbitrary edge ownership.
Now agents move sequentially, that is, at any time exactly one agent is active and checks if an improvement of her current cost is possible.
If this is the case, then this agent will perform the strategy-change towards her best possible new strategy. After every such move this process is iterated until there is no agent who can improve by changing her current strategy, i.e., the network is in equilibrium.
If there is a cyclic sequence of such moves, then clearly this process is not guaranteed to stop and we call such a situation a \emph{best response cycle} (BR-cycle). Note that the existence of a BR-cycle, even if the participating active agents are chosen by an adversary, implies the strong negative result that no ordinal potential function~\cite{MS96} can exist.  

In Table~\ref{table_dynamics}, we summarize our results for four types of possible strategy-changes: If agents can only swap own edges in their $k$-neighborhood, then this is called the \emph{$k$-local Asymmetric Swap Game} ($k$-ASG).
If agents are allowed to swap any incident edge within their $k$-neighborhood, then we have the \emph{$k$-local Swap Game} ($k$-SG).
If agents are allowed to perform $k$-local greedy moves only, then we have the \emph{$k$-local Greedy Buy Game} ($k$-GBG), and if any $k$-local move is allowed, then we have the \emph{$k$-local Buy Game} ($k$-BG).

\begin{table}
\centering%\vspace*{-0.5cm}
 \begin{tabular}{l|c|c}
 $k$ & \textsc{Sum} $k$-SG & \textsc{Sum} $k$-ASG  \\
 \hline
 $k=1$ & no moves [Thm.~\ref{thm_k=1_dynamics}] & no moves [Thm.~\ref{thm_k=1_dynamics}] \\
 \hline
 $k=2$ & BR-cycle [Thm.~\ref{thm_k=2_dynamics}] & OPEN \\
 \hline
 $k\geq 3$ & BR-cycle [Thm.~\ref{thm_k=2_dynamics}] & BR-cycle [Thm.~\ref{thm_k=3_dynamics}] \\
 \hline
 on trees & $\mathcal{O}(n^3)$ moves [Thm.~\ref{thm_tree_swap_dynamics}] & $\mathcal{O}(n^3)$ moves [Thm.~\ref{thm_tree_swap_dynamics}] %\vspace*{0.2cm}
 \end{tabular}
 ~\vspace*{0.3cm}\\
 \begin{tabular}{l|c|c}
 $k$  & \textsc{Sum} $k$-GBG & \textsc{Sum} $k$-BG \\
 \hline
 $k=1$  & $\Theta(n^2)$ moves [Thm.\ref{thm_k=1_dynamics}] & $\Theta(n)$ moves [Thm.~\ref{thm_k=1_dynamics}] \\
 \hline
 $k=2$ & BR-cycle [Thm.~\ref{thm_k=2_dynamics}] & BR-cycle [Thm.~\ref{thm_k=2_dynamics}]\\
 \hline
 $k\geq 3$  & BR-cycle [Thm.~\ref{thm_k=3_dynamics}] & BR-cycle [Thm.~\ref{thm_k=3_dynamics}]\\
 \hline
 on trees  & BR-cycle for $k\geq 2$ [Thm.~\ref{thm_k=2_dynamics}\&\ref{thm_k=3_dynamics}] & BR-cycle for $k\geq 2$ [Thm.~\ref{thm_k=2_dynamics}\&\ref{thm_k=3_dynamics}]
 \end{tabular}
 \caption{Overview of convergence speeds in the sequential versions. Note that the existence of a BR-cycle implies that there is no convergence guarantee. Here a move is a strategy-change by one agent.(For details see Appendix~\ref{section:appendix_dynamics}.)}
 \label{table_dynamics}
\end{table}
%\vspace*{-1cm}

\section{Local versus Global Moves}\label{section_approx}
In this section, we investigate the agents' perspective.
We ask how much agents lose by being forced to act locally.
That is, we compare $k$-local moves to arbitrary non-local strategy-changes.
We have already shown that the best possible $k$-local greedy move is a $3$-approximation of the best possible $k$-local move.
The same was shown to be true for greedy moves versus arbitrary strategy-changes~\cite{L12}.
Thus, if we ignore small constant factors, it suffices to compare $k$-local greedy moves with arbitrary greedy moves. All omitted details of this section can be found in Appendix~\ref{section:appendix_approx}.

We start by providing a high lower bound on the approximation ratio for $k$-local greedy moves versus arbitrary greedy moves.
Corresponding to this lower bound, we provide two different upper bounds.
The first one is a tight upper bound for tree networks.
The second upper bound holds for arbitrary graphs, but is only tight for constant $k$.
Here, the structural difference between trees and arbitrary graphs is captured in the difference of Lemma~\ref{lem_tree_swap} and Lemma~\ref{lem_non-tree_swap}.
Whereas for tree networks only edge-purchases have to be considered, for arbitrary networks edge-swaps are critical as well.

\begin{theorem}[Locality Lower Bound]
    \label{thm_kGe_vs_Ge_approx_lower_bound}
    For any $n'$ there exist tree networks on $n\geq n'$ vertices having diameter $\Theta(\log n)$ which are in \kGE but only in $\Omega\left(\frac{\log n}{k}\right)$-approximate \GE.
\end{theorem}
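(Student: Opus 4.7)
The plan is to exhibit, for each $n'$, a tree $T$ on $n \geq n'$ vertices with diameter $\Theta(\log n)$ and a carefully chosen edge ownership such that (i) $T$ is in $k$-local Greedy Equilibrium, while (ii) some agent has a single (non-local) greedy move decreasing her cost by a factor of $\Omega(\log n / k)$.

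My candidate construction is a ``broom''-like tree: a path $u_0 = u, u_1, \ldots, u_d$ of length $d = \Theta(\log n)$ attached at its endpoint $u_d = c$ to the center of a star of $n - d - 1$ pendant leaves, so the diameter is $d + 2 = \Theta(\log n)$. I would fix the edge-price at $\alpha = \Theta(nk)$, large enough to block every profitable $k$-local purchase but small enough to leave the long-range purchase from $u$ to $c$ profitable. The edge-ownership would be chosen so that $u$ owns no edges (in particular $u u_1$ is owned by $u_1$), path edges are oriented towards $u$, and every star edge is owned by $c$; depending on the parameter regime, each $u_i$ may additionally be equipped with a small balancing subtree to symmetrize its local view.

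First, I would certify the \kGE property by a case analysis over the agents. For $u$ only purchases are possible, and a direct calculation shows that the best $k$-local purchase (the edge to $u_k$) saves only $\Theta(nk)$ in distance, so $\alpha$ blocks it. For each path vertex $u_i$ and for $c$ and its leaves, one verifies that no $k$-local purchase, swap, or delete strictly decreases the agent's cost; deletions always disconnect and so are excluded, purchases are uniformly blocked by $\alpha$, and the chosen ownership orientation restricts the range of valid swap-targets to vertices for which a careful distance calculation (combining the distance-gain on one side of the removed edge with the distance-loss on the other) shows no net improvement.

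Second, I would exhibit the distinguished non-local greedy move: agent $u$ buys the edge $uc$. This single purchase drops $u$'s distance to $c$ from $d$ to $1$ and to each of the $\Theta(n)$ star-leaves from $d+1$ to $2$, so that $u$'s distance cost shrinks from $\Theta(n \log n)$ to $\Theta(n)$. The new total cost of $u$ is therefore $\alpha + \Theta(n) = \Theta(nk)$, a factor of $\Theta(\log n / k)$ smaller than her original cost $\Theta(n \log n)$, which by definition of $\beta$-approximate \GE means $T$ is only in $\Omega(\log n / k)$-approximate \GE.

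The main obstacle is the second half of the first step, namely ruling out profitable $k$-local swaps by the internal path vertices: in the naive broom, $u_i$ can swap its outgoing path edge for a shortcut several hops further toward $c$, saving roughly $\Theta((k-1)n)$ in distance to the star leaves at no edge-cost change, so such swaps would be strictly improving. Neutralizing this requires either an ownership pattern in which each $u_i$'s owned edge, when removed, separates a component too small to admit a profitable reconnection within the $k$-neighborhood, or the addition of a balancing substructure at each $u_i$ whose distance-weight cancels the gain on the far side. Handling this carefully is the technical core of the construction, and once it is done the ratio calculation above immediately delivers the claimed $\Omega(\log n / k)$ lower bound.
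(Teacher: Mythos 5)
Your overall architecture --- a heavy star at distance $\Theta(\log n)$ from a distinguished agent $u$ who owns no edges, an edge-price $\alpha = \Theta(kn)$ chosen so that no single $k$-local purchase is profitable (any such purchase shaves at most $k-1$ off each of the $n-1$ distances, hence saves less than $(k-1)n$), and the non-local purchase of the edge from $u$ to the star center dropping $u$'s cost from $\Theta(n\log n)$ to $\alpha + \Theta(n) = \Theta(kn)$ --- is the right skeleton, and your ratio computation matches the paper's. The genuine gap is the one you flag yourself and do not close: the spine of the broom is not swap-stable, and neither of your two proposed remedies is carried out, while the first one in fact fails. If you orient the spine edges toward $u$, so that $u_i$ owns $u_iu_{i-1}$ and deleting that edge cuts off only the sub-path $u_0,\dots,u_{i-1}$, then the $2$-local swap $u_iu_{i-1}\to u_iu_{i-2}$ is already strictly improving for every $i\geq 3$: it increases the distance to $u_{i-1}$ by one but decreases the distance to each of $u_0,\dots,u_{i-2}$ by one, a net gain of $i-2$. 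A path hanging from its endpoint is never optimally attached, so ``a component too small to admit a profitable reconnection'' cannot be realized by a bare path of length $\Theta(\log n)$; the obstruction is not the size of the cut-off component but its shape.

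Your second remedy --- adding balancing substructure so that every cut-off component is already optimally attached --- is essentially what the paper does, but it amounts to replacing the spine entirely rather than decorating it. The paper's $d$-$l$-Tree-Star glues a complete balanced binary tree $T_d$ (every edge owned by the endpoint closer to the root) to the star via a single connector agent, takes $u$ to be a leaf of $T_d$, and sets $l=3^d$ so that the star dominates the vertex count. There, removing any owned edge cuts off a complete balanced subtree rooted at its far endpoint, which is already the distance-minimizing attachment point, so no swap by any agent is improving and the \kGE{} verification reduces to purchases; for those the paper computes the exact maximal $k$-local gain $\Delta_{d,k,l}=\Theta\bigl(k(l+2^{d+1})\bigr)$ and sets $\alpha\geq\Delta_{d,k,l}$. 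To complete your argument you must either adopt such a swap-stable spine and re-verify \kGE{} for all of its internal agents, or construct and verify your balancing gadgets explicitly; as written, the broom is not in \kGE{} and the theorem does not follow from it.
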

\noindent For the first upper bound, let $(T,\alpha)$ be a tree network in \kGE and assume that it is not in \GE, that is, we assume there is an agent $u$, who can decrease her cost by buying or swapping an edge.
Note that we can rule out single edge-deletions, since these are $1$-local greedy moves and assume that no improving $k$-local greedy move is possible for any agent.
First, we will show that we only have to analyze the cost decrease achieved by single edge-purchases. %This is true, since we have the following:
\begin{lemma}\label{lem_tree_swap}
Let $T$ be any tree network. If an agent can decrease her cost by performing any single edge-swap in $T$, then this agent also can decrease her cost by performing a $k$-local single edge-swap in $T$, for any $k\geq 2$.
\end{lemma}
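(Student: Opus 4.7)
The plan is to prove the stronger statement that a $2$-local improving swap exists; this automatically gives $k$-locality for every $k\geq 2$. Suppose agent $u$ improves by swapping $uv\to uw$. Since $T$ is a tree, removing $uv$ splits $T$ into two subtrees $T_u\ni u$ and $T_v\ni v$; connectedness after the swap forces $w\in T_v$, and while distances from $u$ into $T_u$ are unchanged, each distance from $u$ to $x\in T_v$ becomes $1+d_{T_v}(w,x)$. Consequently $u$'s total cost change equals $f(w)-f(v)$ where $f(y):=\sum_{x\in T_v} d_{T_v}(y,x)$ is an intrinsic function on the subtree $T_v$ (the edge-cost cancels since it is a swap).

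Next, I walk along the unique $v$-to-$w$ path $v=v_0,v_1,\ldots,v_l=w$ in $T_v$ and telescope $f(w)-f(v)=\sum_{i=1}^{l}\bigl(f(v_i)-f(v_{i-1})\bigr)$. Applying the standard tree-cut formula for distance sums, each single-step increment satisfies $f(v_i)-f(v_{i-1})=|T_v|-2|B_i|$, where $B_i$ is the vertex set of the component of $T_v\setminus\{v_{i-1}v_i\}$ containing $v_i$: moving one edge closer decreases the distance to every vertex in $B_i$ by~$1$ and increases the distance to every other vertex of $T_v$ by~$1$. The key structural point is the strict chain $B_1\supsetneq B_2\supsetneq\cdots\supsetneq B_l$, since each $B_{i+1}$ lies inside a single branch hanging off $v_i$ and excludes $v_i$ itself together with all other branches. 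Hence the sequence $|T_v|-2|B_i|$ is strictly increasing in $i$.

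Finally, by hypothesis $f(w)-f(v)<0$, so the sum of this strictly increasing sequence is negative, and its smallest element -- namely $f(v_1)-f(v_0)=|T_v|-2|B_1|$ -- is bounded above by the (negative) average, hence strictly negative. This is exactly the statement that the swap $uv\to uv_1$ decreases $u$'s cost, and since $v_1$ is a neighbor of $v$ in $T$ we have $d_T(u,v_1)=2$, so this is a valid $2$-local (and therefore $k$-local) swap. The main obstacle is spotting the monotone-increment structure of the sum-of-distances function along a path in a tree; once that is in hand, the conclusion follows from the elementary observation that the smallest term of a strictly increasing sequence whose sum is negative must itself be negative, with no further case analysis needed.
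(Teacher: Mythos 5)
Your proof is correct. It follows the same basic path-decomposition as the paper's proof --- both arguments restrict attention to the subtree $T_v$ hanging below $v$, walk along the unique $v$--$w$ path, and identify the very first step $uv\to uv_1$ as the improving $2$-local swap --- but you close the argument by a genuinely different mechanism. The paper first passes (without loss of generality) to a \emph{best possible} swap $uv\to uw$ and compares the cost of swapping to the penultimate path vertex with that of swapping to $w$ itself; rearranging that single inequality yields $\sum_{i\geq 2}|V_{v_i}|>|V_{v_1}|$, which is precisely the net gain of the first step. You instead telescope the distance-sum $f$ along the path and observe that the increments $f(v_i)-f(v_{i-1})=|T_v|-2|B_i|$ are strictly increasing because the components $B_i$ form a strictly decreasing chain; a negative total then forces the first increment to be negative. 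Your version is slightly more general (it applies directly to \emph{any} improving swap, with no reduction to a best response) and the monotone-increment observation makes the conclusion immediate, with no case analysis. All the supporting steps --- that $w$ must lie in $T_v$ for connectedness, that only distances into $T_v$ change, the tree-cut formula for the increments, and that $v_1$ is at distance $2$ from $u$ --- check out.
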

\noindent Now we analyze the maximum cost decrease achievable by buying one edge in $T$. We show, that our obtained lower bound from Theorem~\ref{thm_kGe_vs_Ge_approx_lower_bound} is actually tight. 
\begin{theorem}[Tree Network Locality Upper Bound]
\label{thm_tree_approx_upperbound}
Any tree network on $n$ vertices in \kGE is in $\mathcal{O}\left(\frac{\log n}{k}\right)$-approximate \GE.
\end{theorem}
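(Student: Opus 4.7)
The plan is to reduce the statement to bounding the cost reduction from any single edge-purchase, and then to use the \kGE{} hypothesis to control that reduction via the sizes of subtrees rooted along the buyer's path. First I would rule out swaps and deletions. By Lemma~\ref{lem_tree_swap}, any improving single edge-swap in $T$ implies an improving $k$-local single edge-swap, contradicting $T\in\kGE$; any single edge-deletion is a $1$-local greedy move, hence also forbidden since every $1$-local greedy move is a $k$-local greedy move for $k\geq 1$. So it suffices to bound the improvement of a single edge-purchase $uv$ with $d := d_T(u,v) > k$ (the range $d\leq k$ being immediate from \kGE).

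Next I would make the distance reduction explicit. Let $u = u_0, u_1, \ldots, u_d = v$ be the $u$--$v$ path and $T_i$ the connected component of $u_i$ once the $d$ path-edges are removed; write $S(u_j) := \sum_{i\geq j}|T_i|$ for the size of the subtree rooted at $u_j$ in $u$'s rooted view of $T$. A routine computation gives
\[
R(d) \;:=\; \dist_u(T) - \dist_u(T+uv) \;=\; \sum_{i=m}^{d}(2i-d-1)\,|T_i|,
\qquad m := \lceil (d+2)/2\rceil,
\]
and Abel summation against the non-increasing sequence $S(u_i)$ yields $R(d) \leq (d-1)\,S(u_m)$. The \kGE{} hypothesis, applied to the (hypothetical, non-improving) purchase $u u_j$ for each $2\leq j\leq k$, shows by isolating the $(j-1)|T_j|$ coefficient in the corresponding reduction formula that $(j-1)\,S(u_j)\leq \alpha$, and in particular $S(u_k)\leq \alpha/(k-1)$. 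Together with the monotonicity of $S$ this gives the uniform bound $R(d) \leq \mathcal{O}(\alpha\cdot\max(1,\,d/k))$ across the range $d\geq 2k-1$ (where $m\geq k$) and the short range $k<d<2k-1$ (handled by a brief harmonic-sum estimate).

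Writing $\Delta := R(d) - \alpha > 0$ for the cost decrease and $c'_u = c_u - \Delta$, I would split into two regimes. When $c_u \leq C\alpha\log n/k$ for a suitable constant $C$, the trivial lower bound $c'_u \geq \alpha(\deg_u+1) \geq 2\alpha$ already yields $c_u/c'_u \leq c_u/(2\alpha) = \mathcal{O}(\log n/k)$. In the complementary regime $c_u > C\alpha\log n/k$, the bound $R(d)\leq\dist_u$ together with the observation that the internal subtree distances $D_i := \sum_{w\in T_i} d_{T_i}(u_i, w)$ are untouched by the added edge (so they remain contributors to the new distance sum) shows that $c'_u$ retains a constant fraction of $c_u$, bounding the ratio by $\mathcal{O}(1)$. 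Stitching the two regimes yields the claimed $\mathcal{O}(\log n/k)$-approximate \GE{} factor.

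The main technical obstacle lies in the second regime, where $c_u \gg \alpha\log n/k$ and no a priori bound on $d$ is available: one must argue that a single added edge cannot collapse $u$'s distance sum by more than a constant factor. The key lever here is the lower bound $\dist_u(T+uv)\geq \sum_i D_i$, where the $D_i$ are preserved under the purchase, combined with the path decomposition and the \kGE-supplied monotonicity of $S(\cdot)$; it is this step that encapsulates the essential use of the hypothesis beyond the already-exploited bound $S(u_k)\leq \alpha/(k-1)$.
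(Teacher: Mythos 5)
Your reduction to single edge-purchases and your bound on the purchase gain are sound; in fact, your route via Abel summation to $R(d)\leq (d-1)S(u_m)$, combined with the observation that the non-improving $k$-local purchase $uu_j$ forces $(j-1)S(u_j)\leq\alpha$ for every $2\leq j\leq k$, is arguably a cleaner way to get the key inequality than the paper's own argument, which derives only $\alpha\geq(k-1)|V(T_{v_k})|$ and then needs a delicate analysis of the ownership of the edge $v_{l-2}v_{l-1}$ and of non-improving swaps to establish $|V(T_{v_k})|=\Omega(|V(T_{v_3})|)$.

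The gap is in your second regime. When $\cost_u(T)>C\alpha\log n/k$ you claim the ratio is $\mathcal{O}(1)$ because the internal subtree distances $D_i$ are untouched by the new edge, so that $\cost_u(T')$ ``retains a constant fraction'' of $\cost_u(T)$. That mechanism does not work: $\sum_i D_i$ need not be any fixed fraction of $\dist_u(T)$. If every $T_i$ along the path is a single vertex (e.g.\ $u$ is an endpoint of a long path in $T$), then all $D_i=0$ while $\dist_u(T)$ is quadratic in the path length, so preserving the $D_i$ yields nothing; the only unconditional lower bound you get is $\dist_u(T')\geq n-1$, which recovers merely the generic $\mathcal{O}(\diam(T))$ bound of Theorem~\ref{thm_approx_upperbound}. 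What actually closes this regime --- and is the sole source of the $\log n$ in the statement --- is the bound $\diam(T)=\mathcal{O}(\log n)$ for trees in \kGE, which you never establish. The paper obtains it by observing that the contrapositive of Lemma~\ref{lem_tree_swap} makes every \kGE tree an Asymmetric Swap Equilibrium, and ASE trees have diameter $\mathcal{O}(\log n)$ by~\cite{Ehs11,MS12}. With that in hand, your own estimate $R(d)=\mathcal{O}(\alpha d/k)=\mathcal{O}(\alpha\log n/k)$ gives $R(d)\leq\cost_u(T)/2$ in the large-cost regime (for $C$ large enough), hence a ratio of at most $2$ there, and your argument is complete; without it, the claimed $\mathcal{O}(\log n/k)$ bound does not follow.
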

\noindent We now prove a slightly inferior approximation upper bound for arbitrary networks. 
For this, we show that for any $k$ the diameter of any network is an upper bound on the approximation ratio. For constant $k$ this matches the $\Omega(\diam(G)/k)$ lower bound from Theorem~\ref{thm_kGe_vs_Ge_approx_lower_bound} and is almost tight otherwise.

\begin{lemma}\label{lem_non-tree_swap}
    For any $k\geq 2$ there is a non-tree network in which some agent cannot improve by performing a $k$-local swap but by a $k+1$-local swap.
\end{lemma}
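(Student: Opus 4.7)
The plan is to exhibit an explicit non-tree graph together with a designated agent whose best improving swap has range exactly $k+1$. The reason the tree argument of Lemma~\ref{lem_tree_swap} fails in general is that when the swapped edge $uv$ lies on a cycle, removing $uv$ does not disconnect $v$ from $u$, so the ``loss'' incurred by the swap is absorbed by the cycle; this lets me place rewarding structure at a vertex reachable only by a $(k+1)$-local swap. Concretely, I would take $G$ to be the cycle on $2k+3$ vertices $v_1, v_2, \ldots, v_{2k+3}$ augmented with $M$ pendant leaves at $v_2$ and $L$ pendant leaves at $v_{k+3}$, for positive integers $M$ and $L$ to be fixed; designate $u = v_1$ as the agent, owning only the edge $u v_2$ (say $u v_{2k+3}$ is owned by $v_{2k+3}$; other ownerships are irrelevant). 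Then $G$ is non-tree, $v_{k+2}$ and $v_{k+3}$ are exactly the non-adjacent vertices in $N_{k+1}(u)\setminus N_k(u)$, and the $L$ pendants at $v_{k+3}$ lie at distance $k+2$, so they merely reweight costs without themselves being swap targets.

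For any candidate swap $u v_2 \to u v_j$, the post-swap graph is a ``lollipop'': a cycle $v_1, v_j, v_{j+1}, \ldots, v_{2k+3}$ of length $2k+5-j$ joined at $v_j$ to a tail $v_j, v_{j-1}, \ldots, v_2$ of length $j-2$ (with pendants unchanged), from which $\dist_u$ after the swap is an explicit function of $j$, $M$, and $L$. The case analysis then boils down to: (i) close-side $k$-local swaps $3 \le j \le k+1$ save only a bounded amount on the cycle but push the $M$ pendants at $v_2$ one hop further, so they are killed once $M$ is large enough; (ii) far-side $k$-local swaps $k+4 \le j \le 2k+2$ and swaps to a pendant of $v_2$ push $u$ away from essentially everything and are strictly worse by direct inspection; and (iii) the $(k+1)$-local swap $u v_2 \to u v_{k+2}$ drops $d(u, v_{k+3})$ from $k+1$ to $2$, saving $(k-1)L$ on the pendants at $v_{k+3}$, which dominates the induced $M$-penalty once $L$ is chosen suitably large relative to $M$. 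The main obstacle is balancing $M$ and $L$ so that both (i) and (iii) hold simultaneously; these are linear inequalities in $M$ and $L$ admitting integer solutions polynomial in $k$ -- explicitly, $(M, L) = (1, 2)$ for $k = 2$, $(5, 7)$ for $k = 3$, and $(11, 14)$ for $k = 4$ all verify the construction by direct computation -- which establishes the lemma for every $k \ge 2$.
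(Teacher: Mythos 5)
Your construction is genuinely different from the paper's, and the underlying idea---hide the reward (the pendant cluster at $v_{k+3}$) at distance $k+1$ on a cycle so that only a $(k+1)$-local swap can collect it cheaply---is sound; I can confirm the instance works for $k=2$ with $(M,L)=(1,2)$. However, as written the argument has a real gap: the lemma quantifies over every $k\geq 2$, and you only verify the system of inequalities for $k\in\{2,3,4\}$; three data points do not ``establish the lemma for every $k\ge 2$.'' The feasibility question is not trivial, because the window for $M$ is narrow. Writing $C_j$ for the change in $u$'s distance cost restricted to the cycle under the swap $uv_2\to uv_j$, the close-side constraint at $j=k+1$ forces $M\geq \bigl((k-2)L-C_{k+1}\bigr)/(k-1)$, while improvability of the swap to $v_{k+2}$ forces $M<\bigl((k-1)L-C_{k+2}\bigr)/k$; the window has width roughly $L/\bigl(k(k-1)\bigr)$ plus cycle corrections, so one must actually prove that an integer $M$ fits for all $k$ (e.g.\ by exhibiting closed forms---your data suggest $M=k^2-k-1$, $L=k^2-2$---and verifying all inequalities symbolically, which requires computing the $C_j$ for general $k$). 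Relatedly, your case (ii) is mischaracterized: the $k$-local swap $uv_2\to uv_{k+4}$ (note $d(u,v_{k+4})=k$ around the other side of the cycle) also brings $v_{k+3}$ to distance $2$ and hence gains the full $(k-1)L$ on the far pendants, exactly the gain of your designated $(k+1)$-local swap. It is not ``pushed away from essentially everything''; it is only beaten because it places $v_2,\dots,v_{k+2}$ and the $M$ pendants on a length-$(k+2)$ tail, and that comparison has to be made explicitly rather than dismissed.

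For contrast, the paper's proof avoids this balancing act entirely: it hangs two single extra vertices $b'$ and $d'$ off a cycle so that for agent $u$ the cycle's total contribution is invariant under every available swap, and each $k$-local swap's gain toward $b'$ is cancelled, unit for unit, by its loss toward $d'$---so all $k$-local swaps are exactly cost-neutral, while the $(k+1)$-local swap strictly gains. That exact-cancellation design needs no $k$-dependent parameters and makes the ``no $k$-local swap improves'' direction a short computation. Your approach buys a perhaps more intuitive picture (bribe versus tail penalty) and is very likely completable, but in its current form the general-$k$ step is asserted rather than proved.
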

\noindent It turns out that both buying and swapping operations yield the same upper bound on the approximation ratio, which is even independent of $k$.
\begin{theorem}[General Locality Upper Bound]
\label{thm_approx_upperbound}
    Let $G'$ be the network after some agent $u$ has performed a single improving edge-swap or -purchase in a network $G$, then
    $\frac{\cost_u(G)}{\cost_u(G')}\leq \diam(G).$ Thus, for any $k$ any network $G$ in \kGE is in $\mathcal{O}(\diam(G))$-approximate \GE.
\end{theorem}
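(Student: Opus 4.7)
The plan is to prove the ratio bound by separately bounding the edge-cost contribution and the distance-sum contribution by a factor of $\diam(G)$, and then combining them. The key observation is that $G'$ is connected (otherwise the moving agent would incur infinite cost, contradicting that the move is improving) and, trivially, has minimum positive distance $1$ from $u$, while in $G$ no distance from $u$ exceeds $\diam(G)$. This readily gives a pointwise ratio bound on distances that transfers to the sum.

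More concretely, I would first establish the distance inequality: for every vertex $x \neq u$, $d_G(u,x) \leq \diam(G)$ and $d_{G'}(u,x) \geq 1$, so $d_G(u,x) \leq \diam(G) \cdot d_{G'}(u,x)$. Summing over $x \neq u$ yields $\dist_u(G) \leq \diam(G) \cdot \dist_u(G')$. Next, I would handle the edge-cost term. In an edge-purchase $|S_u(G')| = |S_u(G)| + 1$, and in an edge-swap $|S_u(G')| = |S_u(G)|$; in both cases $\alpha|S_u(G)| \leq \alpha|S_u(G')|$, and since $\diam(G) \geq 1$ we have $\alpha|S_u(G)| \leq \diam(G) \cdot \alpha|S_u(G')|$. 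Adding the two inequalities gives
\[
\cost_u(G) \;=\; \alpha|S_u(G)| + \dist_u(G) \;\leq\; \diam(G)\bigl(\alpha|S_u(G')| + \dist_u(G')\bigr) \;=\; \diam(G) \cdot \cost_u(G').
\]

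For the second statement, the point is that a single edge-deletion is a $1$-local greedy move (an agent needs no neighborhood information to delete one of her own edges), hence it is a $k$-local greedy move for every $k \geq 1$. Therefore, if $G$ is in \kGE, no edge-deletion can strictly decrease any agent's cost. Any single edge-swap or single edge-purchase by an agent is covered by the first part, yielding a cost ratio of at most $\diam(G)$ (non-improving moves trivially give ratio $\leq 1 \leq \diam(G)$). Hence $G$ is in $\mathcal{O}(\diam(G))$-approximate \GE.

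The argument is short and there is no real obstacle; the only point requiring care is to keep the edge-cost and distance-sum terms aligned with the $\diam(G) \geq 1$ factor so that a single combined inequality is obtained, and to record that edge-deletions are automatically ruled out by the \kGE assumption so that the first (improving swap/purchase) part of the theorem suffices to establish the corollary.
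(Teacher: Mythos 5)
Your proof is correct, and it is in fact cleaner than the one in the paper, though it rests on the same underlying observation. The paper's argument works through the distance-cost decrease $\delta_u$ and the set $D$ of vertices whose distance to $u$ strictly decreases: it drops the edge-cost terms from the ratio, restricts the remaining sums to $D$, and then bounds $\sum_{x\in D}d_G(u,x)\leq\diam(G)\,|D|$ against $\sum_{x\in D}d_{G'}(u,x)\geq |D|$. Your version replaces all of this bookkeeping with the pointwise inequality $d_G(u,x)\leq\diam(G)\cdot d_{G'}(u,x)$ (valid since $d_G(u,x)\leq\diam(G)$ and $d_{G'}(u,x)\geq 1$, with $G'$ connected because the move is improving), summed over all vertices, and pairs it with the trivial observation $\alpha|S_u(G)|\leq\alpha|S_u(G')|$ for both swaps and purchases, so that one combined inequality $\cost_u(G)\leq\diam(G)\cdot\cost_u(G')$ falls out. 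The two proofs buy the same bound; yours avoids the fraction manipulations (one of which in the paper's displayed chain appears to contain a typo in the denominator) and does not even need the move to be improving for the ratio bound itself, only for connectivity of $G'$. Your handling of the second claim — that edge-deletions are $1$-local greedy moves and hence already excluded by the \kGE hypothesis, so only swaps and purchases need the ratio bound — is exactly the intended reading and matches Lemma~\ref{lem_tree_swap}'s role elsewhere in the paper.
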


\section{The Quality of $k$-Local Equilibria}
\label{section:quality-of-equilibria}
In this section we prove bounds on the Price of Anarchy and on the diameter of equilibrium networks. All omitted proofs can be found in Appendix~\ref{section:appendix_quality-proofs} and \ref{section:kne-and-ne-conformity}.

We start with a theorem about the relationship of $\beta$-approximate Nash equilibria and the social cost. This may be of independent interest, since it yields that the PoA can be bounded by using upper bounds on approximate best responses.
In fact, we generalize an argument by Albers et al.~(\cite{Al06}, proof of Theorem~3.2).
\begin{theorem}\label{thm_approx_diam_connection}
If for $\alpha\geq 2$, $(G,\alpha)$ is in $\beta$-approximate \NE,
then the ratio of social cost and social optimum is at most $\beta(3 + \diam(G))$.
\end{theorem}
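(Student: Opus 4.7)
The plan is to exhibit, for each agent $u$, a single cheap deviation whose cost is easy to control, then combine the resulting per-agent inequalities through the hypothesis $\cost_u(G)\leq\beta\cost_u(G_u')$ and divide by $\OPT$.

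First, I recall that for $\alpha\geq 2$ the social optimum is the star on $n$ vertices, so $\OPT=\alpha(n-1)+2(n-1)^2$; in particular $\OPT\geq \alpha(n-1)$ and $\OPT\geq 2(n-1)^2$. These two lower bounds will be used to absorb the edge term and the distance term of the estimate separately.

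Next, I fix an arbitrary vertex $c\in V$ and a BFS spanning tree $T$ of $G$ rooted at $c$, so that $d_T(c,w)=d_G(c,w)$ for every $w\in V$. For each agent $u$ I consider the deviation
\[
S_u' \;=\; \{v\in V : uv\in T\ \text{and}\ v\in S_u\} \;\cup\; \bigl(\{c\}\ \text{if}\ u\neq c\bigr),
\]
i.e., $u$ retains only those tree edges it currently owns and, unless $u=c$, additionally buys the single edge $uc$. The key observation is that the resulting graph $G_u'$ still contains $T$: every tree edge not incident to $u$ is owned by an agent whose strategy is unchanged, and every tree edge incident to $u$ is either owned by $u$ (and hence kept in $S_u'$) or by its other endpoint (and hence unaffected). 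Thus $G_u'$ is connected and the deviation is admissible. From $T\cup\{uc\}\subseteq G_u'$ we get $d_{G_u'}(u,w)\leq 1+d_T(c,w)=1+d_G(c,w)$ for $u\neq c$, yielding $\dist_u(G_u')\leq (n-1)+\dist_c(G)$; for $u=c$ we similarly get $\dist_c(G_c')\leq \dist_c(T)=\dist_c(G)$. Writing $t_u$ for the number of tree edges owned by $u$ in $G$, we have $\sum_u t_u=n-1$ (one owner per tree edge) and therefore $\sum_u|S_u'|\leq 2(n-1)$.

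Applying $\cost_u(G)\leq\beta\cost_u(G_u')$ to every agent and summing gives
\[
\cost(G)\;\leq\;\beta\Bigl[\alpha\sum_u|S_u'|+\sum_u\dist_u(G_u')\Bigr]\;\leq\;\beta\bigl[2\alpha(n-1)+(n-1)^2+n\,\dist_c(G)\bigr].
\]
Bounding $\dist_c(G)\leq (n-1)\diam(G)$ and splitting the ratio $\cost(G)/\OPT$ via the two lower bounds $\OPT\geq\alpha(n-1)$ and $\OPT\geq 2(n-1)^2$ yields
\[
\frac{\cost(G)}{\OPT}\;\leq\;\beta\cdot\frac{2\alpha(n-1)}{\alpha(n-1)}+\beta\cdot\frac{(n-1)^2+n(n-1)\diam(G)}{2(n-1)^2}\;\leq\;\beta\Bigl(2+\tfrac{1}{2}+\diam(G)\Bigr),
\]
using $n/(n-1)\leq 2$ for $n\geq 2$ in the last step, which is at most $\beta(3+\diam(G))$ as desired.

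The main challenge is engineering a per-agent deviation that simultaneously (i) yields a connected graph without any global coordination between agents, (ii) has so few new or kept edges that the $\alpha$-contribution sums to $O(n)$ across all agents, and (iii) guarantees $u$-distance at most $1+\diam(G)$ to every other vertex. The twin use of a single shared BFS tree $T$ (addressing (i) and (ii) via $\sum_u t_u=n-1$) together with one extra edge to the root $c$ (addressing (iii)) is precisely what fulfils all three requirements at once.
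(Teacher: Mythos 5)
Your proposal is correct and follows essentially the same route as the paper's proof: each agent deviates to keeping only its edges of a shortest-path tree rooted at a fixed center plus one new edge to that center, the $\beta$-approximate equilibrium condition is summed over all agents, and the result is compared against the star optimum. The only differences are bookkeeping (you use the exact star cost $\alpha(n-1)+2(n-1)^2$ and split the ratio over two lower bounds on $\OPT$, while the paper divides by $\alpha(n-1)+n(n-1)$ directly), and both yield the bound $\beta(3+\diam(G))$.
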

\noindent However, for our bounds we will use another theorem, which directly carries over from the original non-local model.
\begin{theorem}[\cite{Fab03},\cite{AGT} Lemma 19.4]\label{thm_diam_PoA}
For any $k\geq 1$ and $\alpha \geq 2$, if a \kNE network $(G,\alpha)$ has diameter $D$, then its social cost is $\mathcal{O}(D)$ times the optimum social cost.
\end{theorem}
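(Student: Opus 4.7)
The plan is to port the classical proof of Lemma~19.4 in AGT to the $k$-local setting by observing that every improving deviation exploited there is of a purely local nature. Specifically, the proof only needs that no agent can profitably delete a single own edge; since a single-edge deletion is a $k$-local move for every $k \geq 1$, any \kNE network satisfies this condition automatically, and the rest of the argument can be repeated verbatim.

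I would split the social cost of $(G,\alpha)$ into its distance part and its edge part, and bound each by $O(D) \cdot \OPT$ separately. The distance part is immediate: every agent $u$ has $\dist_u(G) \leq (n-1)D$, so the total distance cost is at most $n(n-1)D$. Since the spanning star realizes $\OPT = O(\alpha n + n^2)$ and $\OPT \geq n(n-1)$ just from the pairwise-distance contributions, this is already $O(D) \cdot \OPT$. For the edge part, I would use that $(G,\alpha)\in\kNE$ forbids any profitable single-edge deletion. Hence, for each edge $uv$ owned by an agent $u$, either $uv$ is a bridge (so it lies in the forest of bridges and contributes at most $\alpha(n-1) = O(\OPT)$ in total), or $\dist_u(G-uv) - \dist_u(G) \geq \alpha$. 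The classical charging argument then transfers: one bounds the $u$-to-$w$ detour in $G-uv$ by $O(D)$ for each destination $w$ by rerouting through a BFS tree of $G$ rooted at $u$, and summing these increments over all non-bridge edges yields total edge cost $O(n^2 D) = O(D) \cdot \OPT$.

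The main obstacle I anticipate is precisely this detour estimate in the non-bridge case: although $\diam(G-uv)$ can exceed $D$, each replacement path must be bounded in terms of $D$ by rerouting inside a BFS tree of $G$. This is the technical heart of the non-local proof, but since it is an entirely combinatorial argument about the graph $G$ rather than about the agents' strategy space, it carries over to \kNE unchanged, and no new ingredient is needed.
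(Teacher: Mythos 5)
Your proposal matches the paper's treatment: the paper gives no proof of this theorem but simply cites \cite{Fab03} and \cite{AGT} Lemma~19.4 and asserts that the argument ``directly carries over from the original non-local model,'' and your justification---that the only strategic deviation the classical proof invokes is the deletion of a single own edge, which is a $k$-local move for every $k\geq 1$---is precisely the reason this transfer is valid. The remaining details you defer to (the $O(D)$ detour bound per destination for a non-bridge edge and the disjointness-based charging of edge costs) are indeed purely combinatorial facts about $G$ and need no modification in the \kNE setting.
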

\noindent We obtain several PoA bounds by applying Theorem~\ref{thm_diam_PoA} and the diameter bounds shown later in this section.
Summarizing them, the next theorem provides our main insight into the PoA: The larger $k$ grows, the better the bounds get, providing a remarkable gap between $k=1$ and $k > 1$.
In other words, the theorem tells the ``price'' in terms of minimal required $k$-locality to obtain a specific PoA upper bound.
\begin{theorem}
\label{thm_collectionOfPoaUpperBounds}
For \kNE with $k\geq 1$ the following PoA upper bounds hold (cf.\ Fig.~\ref{fig:PoA_overview}):
\begin{description}
    \item [$k=1:$] $\PoA=\Theta(n)$, which can be seen by considering a line graph.
    \item [$k\geq 2:$] Theorem~\ref{thm:greedy_diameter} gives
        $
        \PoA =
        \begin{cases}
            \mathcal{O}((\alpha/k) + k)    & k < 2\sqrt{\alpha},\\
            \mathcal{O}(\sqrt{\alpha})                      & k \geq 2\sqrt{\alpha}.
        \end{cases}
        $
    \item [$k \geq 6:$] Theorem~\ref{thm_non_tree_diameter_upper_bound} gives $\PoA = \mathcal{O}\left(n^{1 - \varepsilon(\log(k-3) - 1))}\right)$, provided $1\leq \alpha \leq n^{1-\varepsilon}$ with $\varepsilon \geq 1/\log n$.
    Note, this yields a constant PoA bound for $\varepsilon \geq 1/(\log(k-3) - 1)$.
\end{description}
For tree networks and $k\geq 2$, Corollary~\ref{cor_tree_poa_upper_bound} gives $\PoA=\mathcal{O}\left(\log n\right)$.
\end{theorem}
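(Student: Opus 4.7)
The plan is to reduce everything but the $k=1$ case to diameter bounds and then invoke Theorem~\ref{thm_diam_PoA}. Since Theorem~\ref{thm_diam_PoA} asserts $\PoA = \mathcal{O}(\diam(G))$ for any \kNE network $(G,\alpha)$ with $\alpha\geq 2$, the upper bounds in the three bulleted items for $k\geq 2$ follow immediately from the diameter estimates cited in the statement: substitute the bounds of Theorem~\ref{thm:greedy_diameter} to obtain $\mathcal{O}((\alpha/k)+k)$ respectively $\mathcal{O}(\sqrt{\alpha})$, and the bound of Theorem~\ref{thm_non_tree_diameter_upper_bound} to obtain $\mathcal{O}(n^{1-\varepsilon(\log(k-3)-1)})$. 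The tree case is just Corollary~\ref{cor_tree_poa_upper_bound}. For the range $\alpha<2$ the optimum is the complete graph, and an analogous calculation (or direct use of the diameter bound together with the observation that any \kNE network must be connected) yields the same asymptotic result; I would mention this briefly as a parenthetical so the statement is not restricted to $\alpha\geq 2$.

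The remaining case is $k=1$, where both directions of $\PoA=\Theta(n)$ need argument. The upper bound comes for free from Theorem~\ref{thm_diam_PoA} together with the trivial observation that any connected graph on $n$ vertices has diameter at most $n-1$. For the lower bound I would exhibit the path $P_n$ on $n$ vertices together with an edge-ownership assignment in which every edge is owned by exactly one of its endpoints. The key observation is that in $P_n$ the set $N_1(u)$ consists of $u$ together with its (at most two) path-neighbors, so every conceivable $1$-local move is of one of three types: buying an edge to a vertex that is already adjacent (no change, but strictly more expensive), swapping an owned edge to such an already-adjacent vertex (a no-op), or deleting an owned edge (which disconnects the network and yields infinite cost). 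Hence $P_n$ is a \textit{1}-NE regardless of $\alpha$. Its social cost is $\alpha(n-1)+\sum_{i<j}(j-i)=\Theta(n^3)+\alpha(n-1)$, while $\OPT_n$ (the spanning star for $\alpha\geq 2$, the clique for $\alpha<2$) has social cost $\Theta(n^2)+\Theta(\alpha n)$, giving the ratio $\Theta(n)$ whenever $\alpha=\mathcal{O}(n^2)$; for larger $\alpha$ a slightly modified construction or the observation that the PoA is measured by $\max_n$ yields the same $\Theta(n)$ bound.

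The main obstacle, as far as this particular theorem is concerned, is purely expository: to correctly marshal the diameter bounds proved later in the paper and make sure each bulleted case has the right range of parameters. The genuine content lies in Theorems~\ref{thm:greedy_diameter} and~\ref{thm_non_tree_diameter_upper_bound}; here I only need to (i) verify that the line graph really is a \textit{1}-NE, which is the observation above about the triviality of $1$-local moves on a path, and (ii) check that the PoA obtained by chaining each diameter bound with Theorem~\ref{thm_diam_PoA} matches the claimed expression. In particular, for the third item one should record that $\varepsilon\geq 1/(\log(k-3)-1)$ forces the exponent in $n^{1-\varepsilon(\log(k-3)-1)}$ down to a constant, so the PoA upper bound becomes $\mathcal{O}(1)$, which is the constancy claim at the end of that item.
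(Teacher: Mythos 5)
Your proposal takes essentially the same route as the paper: the theorem is obtained by chaining the diameter bounds of Theorem~\ref{thm:greedy_diameter}, Theorem~\ref{thm_non_tree_diameter_upper_bound} and Corollary~\ref{cor_tree_poa_upper_bound} with Theorem~\ref{thm_diam_PoA}, and the $k=1$ case is settled by noting that the path is a $1$-local Nash equilibrium and computing its cost ratio against the star. The only quibble is quantitative: the path's ratio $(\alpha n + n^3)/(\alpha n + n^2)$ is $\Theta(n)$ only for $\alpha=\mathcal{O}(n)$ rather than $\mathcal{O}(n^2)$ as you claim, but this matches the informality of the theorem's own one-line treatment of the $k=1$ case.
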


\subsection{Conformity of $k$-Local and Nash Equilibria}
\label{subsection:eq-conformity}
We stated in Observation~\ref{observation:NEinclusions} that $\kNE \subseteq \NE$ holds.
In the following, we discuss the limits of known proof techniques to identify the parameters for that both equilibria concepts coincide, i.e., $\kNE = \NE$.
For proofs see Appendix~\ref{section:kne-and-ne-conformity}.

\begin{theorem}
\label{thm:whenkNEequalsNE}
The equilibrium concepts \kNE and \NE coincide for the following parameter combinations and yield the corresponding PoA results (cf.\ Fig.~\ref{fig:PoA_overview}):
\begin{enumerate}
    \item For $0< \alpha < 1$ and $2\leq k$, $\kNE = \NE$ and $\PoA=\mathcal{O}(1)$.
    \item For $1\leq \alpha \leq \sqrt{n/2}$ and $6\leq k$, $\kNE = \NE$ and $\PoA=\mathcal{O}(1)$.
    \item For $1\leq \alpha \leq n^{1-\varepsilon}$, $\varepsilon \geq 1/\log(n)$ and $4.667\cdot 3^{\lceil 1/\varepsilon\rceil} + 8\leq k$, $\kNE = \NE$ and $\PoA=\mathcal{O}(3^{\lceil 1/\varepsilon\rceil})$.
    \item For $1\leq \alpha \leq 12n\log n$ and $2\cdot 5^{1+\sqrt{\log n}} + 24\log(n) + 3 \leq k$, $\kNE = \NE$ and $\PoA=\mathcal{O}(5^{\sqrt{\log n}}\log n)$.
    \item For $12n\log n \leq \alpha$ and $2 \leq k$, $\kNE = \NE$ and $\PoA=\mathcal{O}(1)$.
\end{enumerate}
\end{theorem}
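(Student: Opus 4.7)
The plan is a uniform one: observe that whenever $\diam(G)\leq k$, every agent~$u$ satisfies $N_k(u)=V$, and so every possible strategy-change of that agent is by definition a $k$-local move. In that regime the inclusion $\kNE\subseteq\NE$ from Observation~\ref{observation:NEinclusions} collapses to equality. Each of the five items therefore reduces to three tasks: (a) invoke a diameter upper bound for $(G,\alpha)\in\kNE$ in the stated $\alpha$-range, drawn mainly from Theorem~\ref{thm:greedy_diameter} or Theorem~\ref{thm_non_tree_diameter_upper_bound}; (b) check that the declared lower threshold on $k$ dominates it; and (c) read off the PoA either directly from the classical \textsc{Sum}-NCG bounds recalled in the Known Results paragraph, or by applying Theorem~\ref{thm_diam_PoA} to that same diameter.

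Items~1 and~5 admit short direct arguments. For Item~1, when $\alpha<1$ and $k\geq 2$, no two vertices in a $\kNE$ network can sit at distance exactly~$2$: the closer one could buy the connecting edge as a $2$-local move, paying $\alpha<1$ and saving at least~$1$ in her distance component, contradicting the $\kNE$ assumption. Hence $\diam(G)\leq 2\leq k$, so $\kNE=\NE$, and the PoA is constant because the social optimum for $\alpha<2$ is the complete graph, which is trivially optimal. For Item~5, the Mihal\'{a}k--Schlegel tree characterisation of equilibria for large $\alpha$ applies to $\kNE$ as well (its proof uses only local swaps and buys); the resulting bounded-diameter tree is comfortably covered by $k\geq 2$, and the known classical constant PoA for this regime carries over.

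Items~2--4 all follow the same blueprint once the diameter bounds of Theorem~\ref{thm_non_tree_diameter_upper_bound} are substituted: the ranges $\alpha\leq\sqrt{n/2}$, $\alpha\leq n^{1-\varepsilon}$ and $\alpha\leq 12n\log n$ yield diameter estimates of $\mathcal{O}(1)$, $\mathcal{O}(3^{\lceil 1/\varepsilon\rceil})$ and $\mathcal{O}(5^{\sqrt{\log n}})$ respectively, and each is matched exactly by the numerical threshold on $k$ declared in the theorem. The $+24\log n$ additive slack in Item~4 is what one needs to cover the extra logarithmic factor appearing at the boundary $\alpha\approx n\log n$. After $\kNE=\NE$ is certified, the PoA stated in each item is precisely what Theorem~\ref{thm_diam_PoA} returns on the same diameter, together with the classical constant-factor guarantees.

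The main obstacle is only superficially one of circularity: we are using diameter bounds for $\kNE$ in order to promote $\kNE$ to $\NE$. This is safe because Theorems~\ref{thm:greedy_diameter} and~\ref{thm_non_tree_diameter_upper_bound} assume only that no agent improves by a $k$-local move, never a global Nash property. The genuinely delicate step is the constant-tracking: each numerical threshold on $k$ in the statement must swallow both the leading term of the corresponding diameter estimate and the hidden $\mathcal{O}(\cdot)$ constants, which is why the somewhat baroque factors $4.667$ in Item~3 and $2\cdot 5^{1+\sqrt{\log n}}$ in Item~4 appear. Once those constants are pinned down, the five items fall out mechanically.
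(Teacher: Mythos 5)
There is a genuine gap. Your uniform reduction---bound $\diam(G)$ for a \kNE network by at most $k$, conclude that every strategy-change is $k$-local, hence $\kNE\subseteq\NE$---cannot be instantiated with the diameter bounds you cite. For Item~2, Theorem~\ref{thm:greedy_diameter} gives $\diam(G)\leq \alpha/(k-1)+\tfrac{3k}{2}+1$ when $k<2\sqrt{\alpha}$; with $k=6$ and $\alpha$ as large as $\sqrt{n/2}$ this is $\Theta(\sqrt{n})$, not $\mathcal{O}(1)$, and Theorem~\ref{thm_non_tree_diameter_upper_bound} gives roughly $n^{0.7}$ there, so neither theorem certifies $\diam(G)\leq 6$. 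For Item~3, invoking Theorem~\ref{thm_non_tree_diameter_upper_bound} is circular in a way you did not anticipate: its proof explicitly calls Lemma~\ref{lemma_poa_upper_bound_for_smaller_n}, which \emph{is} the paper's proof of Item~3. For Item~4 the range extends to $\alpha\leq 12n\log n$, outside the hypothesis $\alpha\leq n^{1-\varepsilon}$ of Theorem~\ref{thm_non_tree_diameter_upper_bound} entirely. And for Item~5 your own blueprint breaks down: the equilibrium diameter there is $\Theta(\log n)$ while $k\geq 2$ suffices, so ``$\diam(G)\leq k$'' is false and you never explain why a $\kNE$ network with diameter far exceeding $k$ admits no improving \emph{non-local} deviation.

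The paper's actual route is different and is the missing idea: one re-audits the proofs of the classical \emph{non-local} results (Fabrikant et al.\ for Item~1, Demaine et al.\ Theorems~4, 10 and~12 for Items~2--4, Albers et al.\ Theorem~3.1 for Item~5) and checks that every improving response exhibited in those proofs---buying an edge to a vertex at distance $2$, at distance $6$, at distance $4.667\cdot 3^{\lceil 1/\varepsilon\rceil}+8$, etc.---is itself a $k$-local move for the stated threshold on $k$. This transfers the classical diameter and structure characterizations directly to \kNE networks, whence the equilibrium sets and the PoA bounds coincide. The numerical thresholds on $k$ in the statement are exactly the distances of the edges purchased in those classical arguments, not upper bounds on the equilibrium diameter as your plan requires. (A small additional slip: you quote Observation~\ref{observation:NEinclusions} as giving $\kNE\subseteq\NE$, but it states $\NE\subseteq\kNE$; the reverse inclusion is precisely what has to be proved.)
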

\begin{figure}
\centering
\includegraphics[width=.8\textwidth]{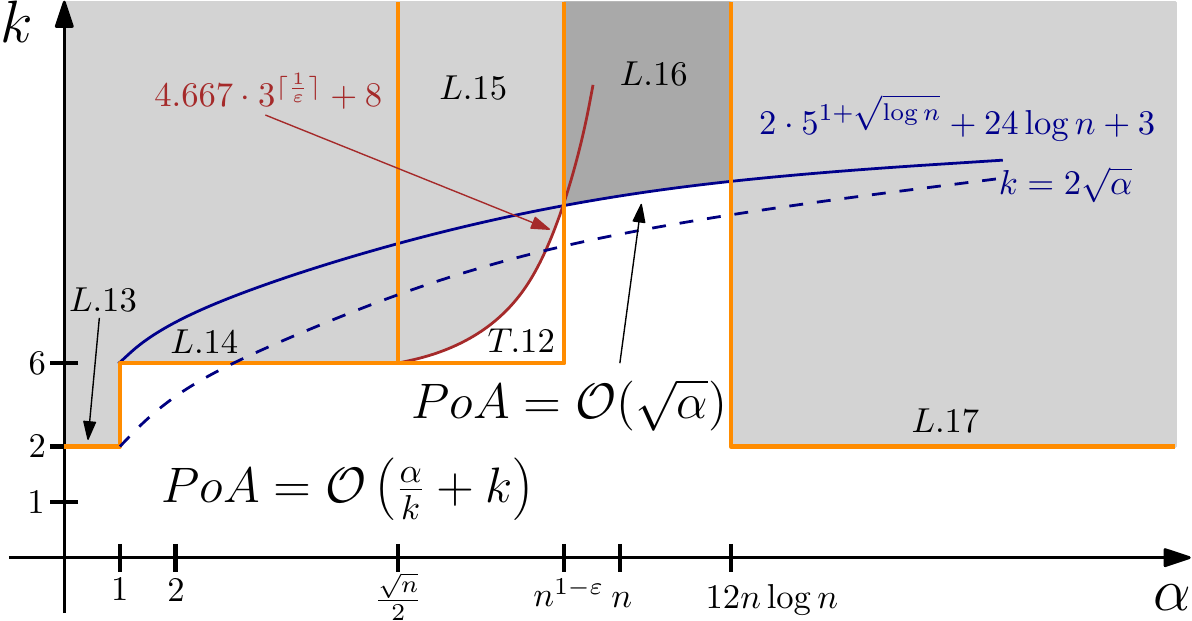}\vspace*{-0.4cm}
\caption{Overview of our results for both parameters $k$ and $\alpha$. The shaded areas show where \NE and \kNE coincide. In the light gray areas we have constant PoA, whereas in the dark gray area the PoA is bounded by $\mathcal{O}(5^{\sqrt{\log n}}\log n)$. (L=Lemma, T=Theorem)}
\label{fig:PoA_overview}
\end{figure}

\subsection{The Price of Anarchy for Tree Networks}
We consider the quality of tree networks that are in \kNE. First of all, by Lemma~\ref{lem_tree_swap}, we have that any tree network in \kGE must be in ASE and therefore, as shown in~\cite{Ehs11,MS12}, has diameter $\mathcal{O}(\log n)$. Together with Theorem~\ref{thm_diam_PoA} this yields the following upper bound.
\begin{corollary}
\label{cor_tree_poa_upper_bound}
    For \kNE tree networks with $2 \leq k \leq \log n$: $\PoA=\mathcal{O}(\log n)$.
\end{corollary}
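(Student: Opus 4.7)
The plan is to chain together three results already established in the paper (plus one cited fact) to derive the diameter bound and then apply Theorem~\ref{thm_diam_PoA}. First, I would invoke Observation~\ref{observation:NEinclusions} to note that any \kNE network is also in \kGE. In particular, if $(T,\alpha)$ is a tree network in \kNE, then no agent can strictly improve her cost by a single $k$-local edge-swap.

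Next, I would apply the contrapositive of Lemma~\ref{lem_tree_swap}: in a tree, if some agent admits an improving single edge-swap (not necessarily local), then that agent in fact admits an improving $k$-local single edge-swap for every $k\geq 2$. Since no $k$-local single swap is improving in $(T,\alpha)$, there can be no improving single edge-swap at all; hence $(T,\alpha)$ is in Asymmetric Swap Equilibrium (ASE). At this point I would cite the known diameter bound for trees in ASE, namely that any tree in ASE on $n$ vertices has diameter $\mathcal{O}(\log n)$~\cite{Ehs11,MS12}.

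Finally, I would plug $D = \mathcal{O}(\log n)$ into Theorem~\ref{thm_diam_PoA}, which states that for any $k\geq 1$ and $\alpha \geq 2$, the social cost of a \kNE network of diameter $D$ is at most $\mathcal{O}(D)$ times the optimum. This immediately yields $\PoA = \mathcal{O}(\log n)$. For the corner case $\alpha < 2$, the social optimum is a complete (or near-complete) graph and one can check directly that any connected \kNE network has cost within a constant factor of optimum, so it does not affect the asymptotic bound.

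The only conceptual subtlety is verifying that Lemma~\ref{lem_tree_swap} is applicable in its strongest form, i.e.\ that \kNE for $k\geq 2$ really rules out all improving edge-swaps (not merely those wholly contained in the $k$-neighborhood of the agent's current endpoint); however, Lemma~\ref{lem_tree_swap} is precisely the statement needed for this implication, so no additional work is required. The restriction $k\leq \log n$ in the statement is used only to ensure the diameter bound $\mathcal{O}(\log n)$ is not vacuously dominated by $k$ itself, and the final PoA estimate is otherwise insensitive to the exact value of $k$ in this range.
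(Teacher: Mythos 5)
Your proposal follows exactly the paper's argument: pass from \kNE to \kGE, use the contrapositive of Lemma~\ref{lem_tree_swap} to conclude the tree is in Asymmetric Swap Equilibrium, invoke the $\mathcal{O}(\log n)$ diameter bound for ASE trees from~\cite{Ehs11,MS12}, and finish with Theorem~\ref{thm_diam_PoA}. The reasoning is correct and matches the paper's proof; your extra remark on the $\alpha<2$ corner case is a harmless addition the paper leaves implicit.
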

\noindent Next, we show a non-constant lower bound which is tight if $k\in \mathcal{O}(1)$ or if $k\in \Omega(\log n)$.
The latter is true, since trees in \kNE have diameter $\mathcal{O}(\log n)$ and thus the technique in~\cite{Fab03} yields constant PoA.
In particular, this bound states that even with the most optimistic model of locality, there still exists a fundamental difference in the efficiency of the local versus the non-local NCG.
\begin{theorem}
    For \kNE tree networks with $2 \leq k \leq \log n$: $\PoA=\Omega\left(\frac{\log n}{k}\right)$.
\end{theorem}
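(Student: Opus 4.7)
The plan is to exhibit, for each sufficiently large $n$ and each $k$ with $2 \leq k \leq \log n$, a tree network $(T,\alpha)$ on $n$ vertices, together with an edge ownership, that is in \kNE and whose social cost is $\Omega(\log n / k)$ times that of the social optimum. Together with Corollary~\ref{cor_tree_poa_upper_bound} this would be sharp for $k \in \mathcal{O}(1)$ and becomes trivial at $k \in \Omega(\log n)$, so only the intermediate regime needs genuine work.

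My starting point is the tree family that already underpins Theorem~\ref{thm_kGe_vs_Ge_approx_lower_bound}: a symmetric tree of diameter $\Theta(\log n)$ that is in \kGE and only $\Omega(\log n / k)$-approximate \GE, the witnessing global move being a long swap towards the tree's center. I would reuse the same underlying graph and the same ownership orientation, and choose $\alpha$ inside the same regime for which this construction is already in \kGE.

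The next step is to upgrade \kGE to \kNE by excluding arbitrary combined $k$-local moves. By Lemma~\ref{lem_tree_swap}, every improving single swap in a tree is already a $k$-local swap for $k \geq 2$, so swaps of any range are ruled out just from the \kGE property. Single edge-deletions disconnect the tree and hence incur infinite cost, so they cannot appear in any improving move. The only remaining possibilities are combined $k$-local moves that contain at least one edge-purchase. For such a move by an agent $u$, each newly purchased edge lands inside $N_k(u)$, so all resulting distance savings of $u$ are confined to detours through $N_k(u)$ and are thus of size $O(|N_k(u)|)$, independent of the diameter of $T$. Since $\alpha$ is already set so that even the single most profitable $k$-local purchase fails to pay for itself, no combination of purchases, swaps, and deletions can be beneficial either, giving \kNE.

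Once \kNE is secured, the PoA calculation mirrors the one underlying Theorem~\ref{thm_kGe_vs_Ge_approx_lower_bound}: the constructed tree has sum of pairwise distances exceeding that of the spanning star by a factor $\Omega(\log n / k)$, while the edge-cost contribution $\alpha(n-1)$ is identical for every spanning tree. Since for $\alpha \geq 2$ the social optimum is the spanning star with social cost $\Theta(\alpha n + n^2)$, dividing yields the target PoA lower bound of $\Omega(\log n / k)$.

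The main obstacle I expect is the combined-move step. Ruling out single $k$-local moves is essentially what \kGE already provides, but a combined move can in principle stitch together several purchases and swaps inside $N_k(u)$ whose joint benefit exceeds the sum of the individual benefits. A clean argument would decompose the combined move edge-by-edge, charging each new or swapped edge the specific shortcut it introduces, and use the local symmetry of $N_k(u)$ in the construction to cap the total saving strictly below the $\alpha$-cost of the purchases involved. Beyond this diminishing-returns analysis, the remainder of the proof should follow by routine calculation from the construction of Theorem~\ref{thm_kGe_vs_Ge_approx_lower_bound}.
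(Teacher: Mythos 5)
There is a genuine gap, and it lies in the choice of construction. The $d$-$l$-Tree-Star $G_{d,l}$ underlying Theorem~\ref{thm_kGe_vs_Ge_approx_lower_bound} witnesses a large \emph{individual} approximation ratio for one tree leaf, but its \emph{social} distance cost is only $\Theta(n^2)$: with $l=3^d$ almost all of the $n$ agents are star leaves at mutual distance $2$, and the $2^{d+1}=\Theta(n^{\log_3 2})$ tree agents contribute only $o(n^2)$ in total. Moreover, keeping that network in \kGE requires $\alpha\geq\Delta_{d,k,l}=\Theta((k-1)n)$, so the edge cost $\alpha(n-1)=\Theta(kn^2)$ dominates both the construction and the optimum; the resulting social cost ratio is $\Theta(1)$, not $\Omega(\log n/k)$. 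Your claim that ``the constructed tree has sum of pairwise distances exceeding that of the spanning star by a factor $\Omega(\log n/k)$'' therefore fails for this family: a lower bound on one agent's cost ratio does not transfer to the social cost. The paper instead takes a complete binary tree $T_d$ of depth $d$ with all edges owned towards the root. There every agent has distance cost at least $\dist_r(T_d)>\frac{n}{4}\log n$, so the social distance cost is $\Omega(n^2\log n)$, and with $\alpha=(k-1)n$ the ratio against the star becomes $\Omega\bigl(((k-1)+\log n)/(k+1)\bigr)=\Omega(\log n/k)$.

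Your combined-move step also needs repair, though it is fixable. The total distance saving of a $k$-local move is not $O(|N_k(u)|)$: a single purchased edge to an agent at distance $k$ can shorten $u$'s distances to $\Omega(n)$ agents far outside $N_k(u)$. The correct observation, and the one the paper uses, is per-vertex: since every newly created endpoint lies within distance $k$ of $u$, the distance from $u$ to \emph{any} single agent drops by at most $k-1$, so the total saving of an arbitrary $k$-local move is at most $(k-1)n$, while each purchase already costs $\alpha=(k-1)n$; deletions disconnect the tree, and in the complete binary tree no swap is ever improving. This disposes of arbitrary combined moves without any diminishing-returns analysis.
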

\begin{proof}
 For any $2\leq k \leq d$ consider a complete binary tree $T_d$ of depth $d$ with root $r$, where every edge is owned by the incident agent which is closer to the root $r$ of $T$. Clearly, independent of $\alpha$ no agent in $T_d$ can delete or swap edges to improve. Thus, towards proving that $T_d$ is in $k$-NE, we only have to choose $\alpha$ high enough so that no agent can improve by buying any number of edges in her $k$-neighborhood. Any edge-purchase within the $k$-neighborhood of some agent $u$ can decrease $u$'s distances to other agents by at most $k-1$. Thus, if $\alpha = (k-1)n$, then no agent can buy one single edge to decrease her cost. Moreover, by buying more than one edge, agent $u$ cannot decrease her distance cost by more than $(k-1)n$, which implies that $(T_d,\alpha)$ is in $k$-NE for $\alpha = (k-1)n$.
 
 Now we consider the social cost ratio of $T_d$ and the spanning star OPT on $n = 2^{d+1}-1$ agents, which is the social cost minimizing network, since $\alpha \geq 2$. We will use $\cost(T_d) \geq \alpha(n-1) + n \dist_r(T_d)$, which is true, since the root $r$ has minimum distance cost among all agents in $T_d$. Note that $\dist_r(T_d) > \frac{n}{4} \log n$, since $r$ has at least $\frac{n}{2}$ many agents in distance $\frac{\log n}{2}$. Moreover, the spanning star on $n$ vertices has a total distance cost of less than $2n^2$, since it has diameter $2$. With $\alpha = (k-1)n$ this yields:
\begin{align*}
\frac{\cost(T_d)}{\cost(\text{OPT})} & > \frac{(k-1)\frac{n^2}{4} + \frac{n^2}{4}\log n}{(k-1)n^2 + 2n^2} > \frac{\frac{n^2}{4}((k-1)+\log n)}{n^2 (k+1)} \in \Omega\left(\frac{\log n}{k}\right).
\end{align*}
\end{proof}

\subsection{The Price of Anarchy for Non-Tree Networks}
We first provide a simple diameter upper bound that holds for any combination of $k\geq 2$ and $\alpha>0$.
For the omitted proofs see Appendix~\ref{section:kne-and-ne-conformity}.
\begin{theorem}
\label{thm:greedy_diameter}
Given a \kNE network $(G,\alpha)$ for $k\geq 2$, then it holds:
$$
\diam(G) \leq
\begin{cases}
    \alpha/(k-1) + k\frac{3}{2} + 1    & k < 2\sqrt{\alpha},\\
    2\sqrt{\alpha}                      & k \geq 2\sqrt{\alpha}.
\end{cases}
$$
\end{theorem}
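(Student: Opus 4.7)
The plan is to derive the bound by considering a single strategic deviation by an agent at one end of a diameter-realising path, and to apply the \kNE condition to rule out the deviation. Fix a pair $u,v$ with $d_G(u,v)=D=\diam(G)$ and let $u=x_0,x_1,\ldots,x_D=v$ be the vertices of a shortest $u$-$v$ path $P$. For any integer $k'$ with $2\le k'\le \min(k,D)$, the vertex $x_{k'}$ lies in $N_k(u)$ because $d_G(u,x_{k'})=k'$, so the single-edge purchase $u\to x_{k'}$ is an admissible $k$-local move that $u$ may probe.

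After buying this edge, for every $j\ge k'$ the distance from $u$ to $x_j$ drops from $j$ (the shortest-path distance along $P$) to at most $1+(j-k')$, a saving of $k'-1$ per node. Summing over $j=k',\ldots,D$ yields at least $(D-k'+1)(k'-1)$ units of distance-cost reduction, against an added edge cost of $\alpha$. Since $(G,\alpha)$ is in \kNE this deviation must not be profitable, hence
\[
\alpha \;\ge\; (D-k'+1)(k'-1), \qquad\text{equivalently}\qquad D \;\le\; \frac{\alpha}{k'-1}+k'-1.
\]
The two cases of the theorem then follow by choosing $k'$ appropriately. For $k<2\sqrt{\alpha}$, setting $k'=k$ yields $D\le \alpha/(k-1)+k-1$, which is subsumed by the stated $\alpha/(k-1)+\tfrac{3k}{2}+1$. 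For $k\ge 2\sqrt{\alpha}$ the right-hand side is minimised near $k'-1=\sqrt{\alpha}$; the hypothesis $k\ge 2\sqrt{\alpha}$ guarantees that the choice $k'=\lceil\sqrt{\alpha}\rceil+1$ still satisfies $k'\le k$, giving $D\le 2\sqrt{\alpha}$ up to lower-order rounding. (If $D<k$ the bound holds trivially.)

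The main obstacle is purely bookkeeping: checking that the proposed deviation is really an admissible $k$-local move ($k'\le k$ by choice; the edge is genuinely new since $k'\ge 2$ precludes $x_{k'}$ being a neighbour of $u$ on $P$), and verifying that all listed savings accrue \emph{simultaneously} in the post-deviation graph. The latter holds because in the modified network every $x_j$ with $j\ge k'$ is reachable from $u$ in $1+(j-k')$ hops via the new edge, so each term of the sum $(D-k'+1)(k'-1)$ counts a genuine simultaneous decrease in $u$'s distance cost. No further structural properties of $G$ are needed.
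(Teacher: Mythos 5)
Your case $k<2\sqrt{\alpha}$ is essentially the paper's argument: the paper also lets the endpoint $u$ of a diameter path buy an edge to the vertex at distance $k$ along the path and charges the resulting distance savings against $\alpha$. Your accounting (a uniform saving of $k-1$ for each of the $D-k+1$ vertices $x_k,\dots,x_D$) is correct — each such $x_j$ is reachable in $1+(j-k)$ hops through the new edge while its old distance is exactly $j$, and adding an edge cannot increase any other distance — and it even yields $D\le \alpha/(k-1)+k-1$, which is slightly stronger than the stated $\alpha/(k-1)+\tfrac{3k}{2}+1$. (The paper instead collects extra quadratic savings from the vertices strictly between $u$ and $x_k$, which you simply drop; both accountings suffice here.)

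The case $k\ge 2\sqrt{\alpha}$ has a genuine, if small, gap. Optimizing $D\le \alpha/(k'-1)+(k'-1)$ over \emph{integer} $k'$ gives $\alpha/\lceil\sqrt{\alpha}\rceil+\lceil\sqrt{\alpha}\rceil$, which can exceed $2\sqrt{\alpha}$ by up to an additive $1$; since diameters are integers this is not "lower-order rounding" but a genuinely weaker conclusion (e.g.\ for $\alpha=2$ your bound is $3$ while the theorem asserts $\le 2\sqrt{2}<3$). The reason is that counting only the uniform saving of $k'-1$ per vertex beyond $x_{k'}$ discards the quadratically accumulating savings near the target of the new edge. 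The paper avoids this by using the classical argument of Fabrikant et al.: buy the edge all the way to a vertex at distance $d=\lceil 2\sqrt{\alpha}\rceil\le k$ on the path; the vertices $x_j$ with $j>(d+1)/2$ then each save $2j-d-1$, which sums to roughly $d^2/4>\alpha$ whenever $d>2\sqrt{\alpha}$, yielding the exact bound. To repair your proof you would need to add this second-half quadratic accounting rather than only the linear per-vertex saving.
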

\noindent In the following, we provide a more involved upper bound for $1\leq \alpha < n^{1-\varepsilon}$ by modifying an approach by \cite{De07}.
First, we give three lemmas that lower bound the number of agents in specific sized neighborhoods for \kNE networks.
Then, considering the locality parameter $k$, we look at the maximal neighborhoods for which these lower bounds apply and present an estimation on the network's diameter.

First, we restate Lemma~3 from \cite{De07}, which holds for any $k\geq 2$ since only $2$-local operations are considered.
\begin{lemma}[Lemma~3 from \cite{De07}]
\label{lemma:N2ballElements}
For any $k\geq 2$ and any \kNE network $G$ with $\alpha \geq 0$ it holds $|N_2(v)| > n/(2\alpha)$ for every agent $v\in V$.
\end{lemma}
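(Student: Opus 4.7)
The plan is to reuse the argument of \cite{De07}, Lemma~3, verbatim, having first checked that every deviation it considers is a $2$-local move; since every $2$-local move is also a $k$-local move for any $k \geq 2$, this immediately transfers the conclusion to our \kNE setting (without invoking Observation~\ref{observation:NEinclusions} in the other direction).

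Concretely, fix an agent $v$ and consider the deviation in which $v$ augments her strategy by purchasing every edge $vu$ with $u \in N_2(v) \setminus N_1(v)$. Each such $u$ lies at distance exactly $2$ from $v$, so the deviation is supported on $N_2(v)$ and is therefore a $2$-local move. The edge expenditure of $v$ grows by at most $\alpha \cdot |N_2(v)|$. For the distance benefit, observe that every vertex $w \notin N_2(v)$ lies at distance $\geq 3$ from $v$ along some shortest path $v \to v_1 \to v_2 \to \cdots \to w$; because $v_2 \in N_2(v)$, the newly bought edge $vv_2$ shortens this path by at least one. Hence the total distance saving is at least $n - |N_2(v)|$. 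Since $(G,\alpha)$ is in \kNE, the deviation cannot strictly improve $v$'s cost, which yields $\alpha |N_2(v)| \geq n - |N_2(v)|$ and therefore $|N_2(v)| \geq n/(\alpha+1) > n/(2\alpha)$ for $\alpha > 1$.

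The only mild obstacle is the boundary regime $\alpha \leq 1$, where $n/(2\alpha) \geq n/2$ and the weak bound $|N_2(v)| \geq n/(\alpha+1)$ does not automatically give a strict $>$. I expect to recover it by tightening the edge-cost accounting to $\alpha\bigl(|N_2(v)| - |N_1(v)|\bigr)$ (only genuinely new edges are paid for) and by noting that the distance bound in the previous paragraph is strict whenever some vertex lies at distance $>3$, which must occur unless $N_2(v) = V$ already. Neither refinement changes the $2$-local nature of the deviation, so the reduction from \NE{} to \kNE{} remains intact and the statement follows.
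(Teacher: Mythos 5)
Your argument is correct for the range of $\alpha$ where the lemma is actually used, but it takes a different route from the paper. The paper gives no self-contained proof at all: it imports Lemma~3 of \cite{De07} verbatim and merely observes that the deviations used there are $2$-local. The argument in \cite{De07} considers, for each vertex $w$ at distance exactly $2$ from $v$, the \emph{single} edge-purchase $vw$; this yields $\alpha \geq |T_w|$ for the set $T_w$ of vertices whose shortest paths pass through $w$, and summing over $w$ together with a case distinction on whether $|N_1(v)| \geq n/2$ gives the stated bound. You instead bundle all purchases towards $N_2(v)\setminus N_1(v)$ into one deviation and obtain the cleaner bound $|N_2(v)| \geq n/(\alpha+1)$ in one step, with no case distinction. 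Both deviations are $2$-local, so both transfer to \kNE for $k\geq 2$. The trade-off is worth noting: the per-vertex argument of \cite{De07} uses only single-edge purchases, i.e.\ $k$-local \emph{greedy} moves, so it would establish the bound even for \kGE networks; your bundled deviation buys many edges at once and therefore genuinely needs the full \kNE solution concept. Since the lemma is only invoked for \kNE, this costs nothing here, and your bound is slightly stronger for $\alpha>1$.

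Two small caveats. First, your handling of $\alpha\leq 1$ is only sketched, but the tightened accounting you propose does work: $\alpha(|N_2(v)|-|N_1(v)|)\geq n-|N_1(v)|$ forces $N_2(v)=V$ whenever $\alpha\leq 1$, whence $|N_2(v)|=n>n/(2\alpha)$ for $\alpha>1/2$. Second, for $\alpha\leq 1/2$ the statement as printed is simply false (the clique is a \kNE with $|N_2(v)|=n\leq n/(2\alpha)$); this is an artifact of the paper restating the lemma with ``$\alpha\geq 0$'' although it is only ever applied with $\alpha\geq 1$, so it is not a defect of your proof.
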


\begin{lemma}
\label{lemma:ballSizeIncrease}
For $k\geq 6$, let $G$ be a \kNE network and $d\leq k/3 - 1$ an integer.
If there is a constant $\lambda > 0$ such that $|N_d(u)| > \lambda$ holds for every $u\in V$, then
\begin{itemize}
    \item[(1)] either $|N_{2d+3}(u)| > n/2$ for \emph{some} agent $u\in V$
    \item[(2)] or $|N_{3d+3}(v)| > \lambda \frac{n}{\alpha}$ for \emph{every} agent $v\in V$.
\end{itemize}
\end{lemma}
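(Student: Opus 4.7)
The plan is to argue by contradiction, assuming both conclusions fail: there is some agent $v \in V$ with $|N_{3d+3}(v)| \leq \lambda n/\alpha$, yet for every agent $u$ we have $|N_{2d+3}(u)| \leq n/2$. From this I will construct a $k$-local improving move for $v$, contradicting that $G$ is in \kNE. The constraint $d \leq k/3 - 1$ is used throughout to ensure $k$-locality: it yields $3d+3 \leq k$ and $2d+3 \leq k$, so that $v$ may buy edges to any vertex in $N_k(v)$, including vertices strictly outside $N_{2d+3}(v)$.

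First I would dispose of the trivial regime $\alpha \leq \lambda$: here $\lambda n/\alpha \geq n$, so the assumed inequality $|N_{3d+3}(v)| \leq \lambda n/\alpha$ never actually rules out conclusion~(2), and the claim holds vacuously. Thus I may assume $\alpha > \lambda$, in which case the far-set $F := V \setminus N_{3d+3}(v)$ has cardinality at least $n(1 - \lambda/\alpha)$, a linear-in-$n$ fraction of the graph, which is the source of the contradiction below.

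Next I would construct a set $P \subseteq N_k(v) \setminus N_{2d+3}(v)$ of ``purchase targets'' greedily, as a maximal $(2d{+}1)$-packing. Two properties follow: (a) the $d$-balls $\{N_d(p) : p \in P\}$ are pairwise disjoint, so that savings add without double counting; and (b) by maximality every vertex of $N_k(v) \setminus N_{2d+3}(v)$ lies within distance $2d$ of some $p \in P$. Using the hypothesis $|N_d(u)| > \lambda$ for every $u$, each center $p \in P$ contributes strictly more than $\lambda$ newly-covered vertices, bounding $|P|$ from above by (covered vertices)$/\lambda$. I would then analyse the cost and savings of $v$ performing the $k$-local move that buys all edges in $\{vp : p \in P\}$. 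For every covered vertex $x \in N_d(p)$ the new distance from $v$ drops to at most $1+d$ while the old distance was at least $d(v,p)-d \geq d+4$, yielding a saving of at least $2d+3$ per covered vertex. Contrasting the total saving with the purchase cost $\alpha |P|$ already produces a contradiction in the parameter regime $(2d+3)\lambda > \alpha$.

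To close the gap for general $\alpha$, one must amplify the savings further by choosing centers that lie as deep in $N_k(v)$ as possible, and by additionally counting the savings on vertices lying \emph{behind} the $p_i$'s along shortest paths from $v$, where each such vertex saves up to $d(v,p_i)-1$. This is where the failure of~(1) enters: the bound $|N_{2d+3}(u)| \leq n/2$ for \emph{every} $u$---applied to $u=v$ and to the chosen centers---forces a linear-in-$n$ portion of $F$ to actually lie within $N_k(v)$, so that the greedy cover above is non-vacuous in the difficult regime. I anticipate that the bookkeeping of these two sources of savings (inside the $d$-balls and beyond the centers along shortest paths), all while controlling $|P|$, is the main technical obstacle; it is precisely what forces the threshold $k \geq 6$ together with $d \leq k/3 - 1$. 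The argument closely parallels the non-local ball-growth technique of Demaine et al.~\cite{De07}, but has to be carried out carefully so that every distance invoked and every edge bought stays within the $k$-local reach of $v$.
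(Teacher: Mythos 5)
There is a genuine gap here, and it is structural rather than cosmetic. Your plan uses the hypothesis $|N_d(u)|>\lambda$ only to \emph{upper}-bound the number of packing centers $|P|$, and then tries to beat the purchase cost $\alpha|P|$ by the distance savings; as you yourself compute, this only closes when $\alpha=\mathcal{O}(\lambda)$, and the step you defer as ``the main technical obstacle'' is in fact the entire content of the lemma. What the conclusion $|N_{3d+3}(v)|>\lambda n/\alpha$ actually requires is a \emph{lower} bound of order $n/\alpha$ on the number of pairwise-far centers, whose disjoint $d$-balls (each of size $>\lambda$ by hypothesis) then populate $N_{3d+3}(v)$. The paper obtains exactly this: fix $u$, let $\partial B$ be the sphere at distance exactly $2d+3$, greedily pick a maximal set $X\subseteq\partial B$ with pairwise distances $>2d$, and assign \emph{every} vertex of $V\setminus N_{2d+3}(u)$ --- a set of size at least $n/2$ precisely because alternative (1) fails --- to a cluster $C_{x}$ via a shortest path through $\partial B$. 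Buying the single $k$-local edge $ux$ then shortens $u$'s distance to every vertex of $C_x$ by at least $2$, so the equilibrium condition gives $\alpha\geq 2|C_x|$ for each $x\in X$; summing over the clusters yields $|X|\geq n/\alpha$, and the disjoint $d$-balls around $X$ sit inside $N_{3d+3}(u)$, giving $|N_{3d+3}(u)|>|X|\lambda\geq\lambda n/\alpha$. Your packing of $N_k(v)\setminus N_{2d+3}(v)$ never charges the full far half of the graph (that set may contain few vertices of $N_k(v)$ at all), and buying all edges at once forfeits the per-cluster inequality that produces the factor $n/\alpha$.

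Two smaller points. First, your per-vertex saving of $2d+3$ is miscalculated: for $x\in N_d(p)$ with $d(v,p)\geq 2d+4$ the old distance is only guaranteed to be at least $d+4$ while the new one is at most $d+1$, so the guaranteed saving is $3$, not $2d+3$ (the value $2d+3$ holds only for $x=p$ itself). Second, the case split on $\alpha\leq\lambda$ is unnecessary once the argument is run in the paper's direct form (``if (1) fails for all agents then (2) holds for all agents''), since one then proves the required inequality outright rather than refuting its negation.
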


\begin{lemma}
\label{lemma:halfNodesToAllBall}
For $k\geq 4$, let $G$ be a \kNE network with $\alpha < n/2$ and $d\leq k/2 - 1$ an integer.
If there is an agent $u\in V$ with $|N_d(u)| \geq n/2$, then $|N_{2d+1}(u)| \geq n$.
\end{lemma}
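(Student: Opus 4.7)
The plan is a contradiction argument: assuming $|N_{2d+1}(u)| < n$, I would exhibit a strictly improving $k$-local edge purchase for some agent $v$, contradicting the \kNE property.

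First, I would locate a suitable witness $v$. Since $|N_{2d+1}(u)| < n$, there is some agent at distance at least $2d+2$ from $u$; by connectedness of $G$, any shortest path from $u$ to such an agent must pass through a vertex $v$ with $d_G(u,v) = 2d+2$. The hypothesis $d \leq k/2 - 1$ rearranges to $2d+2 \leq k$, so $u \in N_k(v)$ and the single-edge purchase ``$v$ buys $vu$'' is a legitimate $k$-local move. Since $d_G(u,v)\geq 2$, the edge $vu$ is not currently present, so the move genuinely adds a new edge at cost $\alpha$.

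Second, I would estimate the distance savings this purchase yields for $v$. For every $w \in N_d(u)$, the triangle inequality gives
\[
d_G(v,w) \;\geq\; d_G(u,v) - d_G(u,w) \;\geq\; (2d+2) - d \;=\; d+2,
\]
while in the new network $G'$, agent $v$ can reach $w$ via the fresh edge $vu$ followed by a shortest $u$-$w$ path in $G$, so $d_{G'}(v,w) \leq 1 + d$. Hence $v$ saves at least one unit of distance for each of the at least $n/2$ agents in $N_d(u)$, for a total distance saving of at least $n/2$. Since the edge costs $\alpha < n/2$, the net change in $v$'s cost is strictly negative, contradicting the assumption that $(G,\alpha)$ is in \kNE.

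The only real subtlety is the locality check: the purchase must be a $k$-local move for $v$, which is why one should not pick $v$ to be the farthest agent from $u$ (it could lie outside $N_k(u)$) but rather a vertex on a shortest path from $u$ to a far agent at the specific distance $2d+2$. The hypothesis $d \leq k/2 - 1$ is precisely what makes this choice compatible with locality. Everything else is a routine application of the triangle inequality and a direct comparison of the aggregate distance savings against the edge price $\alpha$.
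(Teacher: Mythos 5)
Your proof is correct and follows essentially the same route as the paper's: both identify an agent $v$ at distance exactly $2d+2$ from $u$, use the triangle inequality to show that buying the edge $vu$ saves at least one unit of distance to each of the $|N_d(u)| \geq n/2$ agents, and conclude from $\alpha < n/2$ that this contradicts (or, in the paper's contrapositive phrasing, bounds) the equilibrium condition. Your explicit verification that $2d+2 \leq k$ makes the purchase a legitimate $k$-local move is a detail the paper leaves implicit.
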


\begin{theorem}
\label{thm_non_tree_diameter_upper_bound}
For $k \geq 6$, $n\geq 4$, and $1\leq \alpha \leq n^{1-\varepsilon}$ with $\varepsilon \geq 1/\log n$, the maximal diameter of any \kNE network is
$
    \mathcal{O}\left(n^{1 - \varepsilon(\log(k-3) - 1)}\right)
    .
$
\end{theorem}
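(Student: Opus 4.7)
The plan is to iteratively grow a uniform lower bound on ball sizes by repeatedly invoking Lemma~\ref{lemma:ballSizeIncrease}, seeded by Lemma~\ref{lemma:N2ballElements}, and to close out either via Lemma~\ref{lemma:halfNodesToAllBall} (if the bulk of the agents already sits inside some ball) or via a standard packing argument along a longest shortest path (otherwise). Concretely, I set $d_0 := 2$ and $\lambda_0 := n/(2\alpha)$, so Lemma~\ref{lemma:N2ballElements} supplies the base invariant $|N_{d_0}(v)| > \lambda_0$ for every agent $v\in V$. Each successful application of case~(2) of Lemma~\ref{lemma:ballSizeIncrease} updates $d_{i+1}:=3d_i+3$ and $\lambda_{i+1}:=\lambda_i \cdot n/\alpha$, preserving $|N_{d_i}(v)| > \lambda_i$. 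Solving the recurrences gives $d_i = \Theta(3^i)$ and, using $\alpha \leq n^{1-\varepsilon}$, $\lambda_i \geq \Theta(n^{(i+1)\varepsilon})$. The lemma's precondition $d_i \leq k/3 - 1$ therefore caps the iteration at $i_{\max} = \Theta(\log(k-3))$ steps.

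Two events can terminate the iteration. In the first, case~(1) of Lemma~\ref{lemma:ballSizeIncrease} fires at some step $t \leq i_{\max}$, delivering an agent $u$ with $|N_{2d_t+3}(u)| > n/2$. Since $\alpha \leq n^{1-\varepsilon} < n/2$ under the theorem's hypotheses (using $\varepsilon \geq 1/\log n$ and $n\geq 4$), Lemma~\ref{lemma:halfNodesToAllBall} applies and boosts this to $|N_{4d_t + 7}(u)| \geq n$, yielding $\diam(G) = \mathcal{O}(d_t) = \mathcal{O}(k)$. In the second event, case~(1) never fires and after $i_{\max}$ steps we have $|N_{d_{i_{\max}}}(v)| > \lambda_{i_{\max}}$ for every $v$. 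A packing argument then bounds the diameter: along any shortest path one picks vertices at mutual distance exceeding $2d_{i_{\max}}$, so their $d_{i_{\max}}$-balls are pairwise disjoint by the triangle inequality and together contain at most $n$ agents; hence at most $n/\lambda_{i_{\max}}$ such vertices fit, forcing $\diam(G) \leq (2d_{i_{\max}}+1)\cdot n/\lambda_{i_{\max}}$.

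Collecting the estimates, I substitute $d_{i_{\max}} = \Theta(3^{i_{\max}}) = \Theta(k)$ and $\lambda_{i_{\max}} \geq \Theta(n^{(i_{\max}+1)\varepsilon})$ with $i_{\max}+1 = \Theta(\log(k-3))$; both termination branches are then dominated by $\mathcal{O}(n^{1 - \varepsilon(\log(k-3)-1)})$, as claimed. The main obstacle is aligning the two geometric growth rates---the radius grows like $3^i$, while the common ball-size lower bound grows like $n^{i\varepsilon}$---so that the iteration budget afforded by $k$ converts, via the packing argument, into precisely the exponent $\varepsilon(\log(k-3)-1)$ rather than something weaker; sharpening this balance is where the constants inside $\Theta(\cdot)$ really have to be pinned down. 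A secondary bookkeeping burden is verifying that the preconditions of both auxiliary lemmas---$d_i \leq k/3 - 1$ for Lemma~\ref{lemma:ballSizeIncrease}, and $2d_t+3 \leq k/2 - 1$ together with $\alpha < n/2$ for Lemma~\ref{lemma:halfNodesToAllBall}---remain satisfied throughout the iteration under $k\geq 6$, $n\geq 4$, and $1\leq \alpha \leq n^{1-\varepsilon}$.
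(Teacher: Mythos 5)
Your setup is the paper's: seed with Lemma~\ref{lemma:N2ballElements}, iterate case~(2) of Lemma~\ref{lemma:ballSizeIncrease} with $d_{i+1}=3d_i+3$ and $\lambda_{i+1}=\lambda_i\cdot n/\alpha$, and close case~(1) via Lemma~\ref{lemma:halfNodesToAllBall}. The genuine gap is in your second termination branch. A pure packing argument with centers spaced $>2d_{i_{\max}}$ apart along the path only yields $D\leq (2d_{i_{\max}}+1)\cdot n/\lambda_{i_{\max}}=\Theta(k)\cdot n/\lambda_{i_{\max}}$, i.e.\ $\mathcal{O}\bigl(k\cdot n^{1-\varepsilon(\log(k-3)-1)}\bigr)$ rather than the claimed $\mathcal{O}\bigl(n^{1-\varepsilon(\log(k-3)-1)}\bigr)$: the spacing $\Theta(k)$ between centers enters the numerator and nothing cancels it. The paper avoids this by invoking the \kNE condition one last time instead of merely counting: the first selected center $c_1$ could buy a $k$-local edge toward $c_2$ along the path, saving on the order of $k$ in distance to each of the $>\lambda$ agents in each of the $|C|-1$ far clusters, so $\alpha\geq\Omega\bigl(k\,(|C|-1)\,\lambda\bigr)$ and hence $D\leq 2k|C|=\mathcal{O}(\alpha/\lambda+k)$. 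The gain of $\Theta(k)$ per agent cancels the $\Theta(k)$ spacing, and the numerator becomes $\alpha$ instead of $n$; relative to your bound this wins a factor of order $kn/\alpha\geq k\,n^{\varepsilon}$. The loss is not cosmetic: the hidden constant in the theorem must be independent of $k$ (otherwise the application in Theorem~\ref{thm_collectionOfPoaUpperBounds}, where one takes $k$ growing like $3^{1/\varepsilon}$ to get constant PoA, collapses), and in that regime your extra factor of $k$ can be polynomial in $n$.

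A secondary point you assert but do not justify is that "both termination branches are dominated by $\mathcal{O}(n^{1-\varepsilon(\log(k-3)-1)})$". The case-(1) branch gives $\diam(G)=\mathcal{O}(k)$, and neither this nor the residual additive $\Theta(k)$ term from the other branch is automatically below $n^{1-\varepsilon(\log(k-3)-1)}$ when $k$ is large (the exponent shrinks as $k$ grows, so the target bound can drop below $k$). The paper closes this regime by falling back on Lemma~\ref{lemma_poa_upper_bound_for_smaller_n}, which caps the diameter once $k\geq 4.667\cdot 3^{\lceil 1/\varepsilon\rceil}+8$; you would need the same appeal to finish.
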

\begin{proof}
Let $G$ be a \kNE network.
We define a sequence $(a_i)_{i\in\mathbb{N}}$ by $a_1\coloneqq 2$ and for any $i\geq 2$ with $a_i\coloneqq 3 a_{i-1} + 3$.
We want to apply Lemma~\ref{lemma:ballSizeIncrease} iteratively with $\lambda_i\coloneqq (n/\alpha)^i/2$.
Lemma~\ref{lemma:N2ballElements} ensures that with $|N_2(v)| > n/(2\alpha)=\lambda_1$ for all $v\in V$, we have a start for this.

Let $m$ be the highest sequence index with $a_m \leq k/4 - 3$.
If there is a $j \leq m$ such that case (1) of Lemma~\ref{lemma:ballSizeIncrease} applies, then there is a agent $u\in V$ with $|N_{2a_m + 3}(u)| > n/2$.
With $\alpha \leq n^{1-\varepsilon} < n/2$
%note: (here we need n >= 4 to ensure log_2(4) > 2)
we get with Lemma~\ref{lemma:halfNodesToAllBall} that $|N_{4 a_m + 7}| \geq n$ holds and hence the diameter is at most $a_m < k$.
Else, case (2) applies for all $i\leq m$ and we know that for every $v\in V$ it holds $|N_{3a_m+3}(v)| > (n/\alpha)^{m-1} / 2$.
Using $a_i = \frac{7}{6} 3^i - 3/2$ and $a_m \leq k/4 - 3$, we get $m\geq \log(k-3) - 1$.

Let $D$ be the diameter of $G$ and $p$ a longest shortest path.
We define a set $C$ by selecting the first agent of $p$ as $c_1$ and than along the path selecting every further agent with distance of $2k$ to the last selected agent.
Now consider the operation of $c_1$ buying an edge to a agent at distance $k$ in the direction of $c_2$.
Using $k\geq 3a_m + 3$, $|N_{3a_m + 3}(c)| > (n/\alpha)^{m-1} /2$ for all $c\in C$ and that $G$ forms an equilibrium:
$$
    \alpha \geq (k-1)(|C|-1)(n/\alpha)^{\log(k-3)-2} /2
         \geq \frac{k-1}{2}\left(\frac{D}{2k} - 1\right)(n/\alpha)^{\log(k-3)-2}
$$
This gives:
$
    D \leq \frac{4k}{k-1}\alpha\left(\frac{\alpha}{n}\right)^{\log(k-3)-2} + 2k
%       \leq 5 \frac{\alpha^{\log(k-3)-1}}{n^{\log(k-3)-2}} + 2k
      \leq 5 n^{1 - \varepsilon(\log(k-3) - 1)} + 2k
.
$
By using Lemma~\ref{lemma_poa_upper_bound_for_smaller_n}, we get the claimed diameter upper bound for any $k\geq 6$.
%\hfill\qed
\end{proof}

\section{Conclusion and Open Problems}
Our results show a major gap in terms of social efficiency between the worst case locality model by Bilò et al.\ \cite{Bil14local} and our more optimistic locality assumption. This gap is to be expected since agents in our model can base their decisions on more information. Interestingly, since most of our upper bounds on the PoA are close to the non-local model, this shows that the natural approach of probing different local strategies is quite convenient for creating socially efficient networks. On the other hand, the non-constant lower bound on the PoA and our negative results concerning the approximation of non-local strategies by local strategies show that the locality constraint does have a significant impact on the game, even in the most optimistic setting. Moreover, our negative results on the hardness of computing best responses and on the dynamic properties show that these problems seem to be intrinsically hard and mostly independent of locality assumptions.

The exact choice of the distance cost function seems to have a strong impact in our model. For the \textsc{Max}-NCG, which was extensively studied by Bilò et al.\ \cite{Bil14local}, it is easy to see that a cycle of odd length, where every agent owns exactly one edge, yields even in our model a lower bound of $\Omega(n)$ on the PoA for $k=2$ and $\alpha > 1$. This seems to be an interesting contrast between the \textsc{Sum}-NCG and the \textsc{Max}-NCG, which should be further explored. 
It might also be interesting to further study what happens if agents are limited to probe only a certain number of strategies. So far, we only considered the cases of probing all strategies and of probing the quadratic number of greedy strategies.
Also, applying our locality approach to other models, in particular those motivated from the economics perspective, like \cite{jackson1996strategic} and \cite{BG00}, seems a natural next step of inquiry.

\bibliographystyle{abbrv}
\bibliography{locallocal}

\newcommand{\SortNoop}[1]{}
\begin{thebibliography}{10}

\bibitem{Al06}
S.~Albers, S.~Eilts, E.~Even{-}Dar, Y.~Mansour, and L.~Roditty.
\newblock On nash equilibria for a network creation game.
\newblock {\em {ACM} Trans. Economics and Comput.}, 2(1):2, 2014.

\bibitem{ADHL10}
N.~Alon, E.~D. Demaine, M.~T. Hajiaghayi, and T.~Leighton.
\newblock Basic network creation games.
\newblock {\em {SIAM} J. Discrete Math.}, 27(2):656--668, 2013.

\bibitem{ADKTWR}
E.~Anshelevich, A.~Dasgupta, J.~Kleinberg, E.~Tardos, T.~Wexler, and
  T.~Roughgarden.
\newblock The price of stability for network design with fair cost allocation.
\newblock {\em SIAM Journal on Computing}, 38(4):1602--1623, 2008.

\bibitem{flpLocalityGapArya}
V.~Arya, N.~Garg, R.~Khandekar, A.~Meyerson, K.~Munagala, and V.~Pandit.
\newblock Local search heuristics for k-median and facility location problems.
\newblock {\em {SIAM} J. Comput.}, 33(3):544--562, 2004.

\bibitem{BG00}
V.~Bala and S.~Goyal.
\newblock A noncooperative model of network formation.
\newblock {\em Econometrica}, 68(5):1181--1229, 2000.

\bibitem{Bil14local}
D.~Bil\`{o}, L.~Gual\`{a}, S.~Leucci, and G.~Proietti.
\newblock Locality-based network creation games.
\newblock In {\em {SPAA} 2014}, pages 277--286, New York, NY, USA, 2014. ACM.

\bibitem{Bil14traceroute}
D.~Bilò, L.~Gualà, S.~Leucci, and G.~Proietti.
\newblock Network creation games with traceroute-based strategies.
\newblock In {\em Structural Information and Communication Complexity}, volume
  8576 of {\em LNCS}, pages 210--223. Springer International Publishing, 2014.

\bibitem{CP05}
J.~Corbo and D.~Parkes.
\newblock The price of selfish behavior in bilateral network formation.
\newblock In {\em {PODC} 2005 Proceedings}, pages 99--107, New York, NY, USA,
  2005. ACM.

\bibitem{De07}
E.~D. Demaine, M.~T. Hajiaghayi, H.~Mahini, and M.~Zadimoghaddam.
\newblock The price of anarchy in network creation games.
\newblock {\em ACM Trans. on Algorithms}, 8(2):13, 2012.

\bibitem{Ehs11}
S.~Ehsani, M.~Fazli, A.~Mehrabian, S.~S. Sadeghabad, M.~Safari, M.~Saghafian,
  and S.~ShokatFadaee.
\newblock On a bounded budget network creation game.
\newblock In {\em {SPAA} 2011}, pages 207--214. {ACM}, 2011.

\bibitem{Fab03}
A.~Fabrikant, A.~Luthra, E.~Maneva, C.~H. Papadimitriou, and S.~Shenker.
\newblock On a network creation game.
\newblock In {\em {PODC} 2003 Proceedings}, pages 347--351. ACM, 2003.

\bibitem{H11}
M.~Hoefer.
\newblock Local matching dynamics in social networks.
\newblock In {\em Automata, Languages and Programming - 38th International
  Colloquium, {ICALP} 2011, Zurich, Switzerland, July 4-8, 2011, Proceedings,
  Part {II}}, pages 113--124, 2011.

\bibitem{jackson1996strategic}
M.~O. Jackson and A.~Wolinsky.
\newblock A strategic model of social and economic networks.
\newblock {\em Journal of economic theory}, 71(1):44--74, 1996.

\bibitem{KL13}
B.~Kawald and P.~Lenzner.
\newblock On dynamics in selfish network creation.
\newblock In {\em Proc. 25th ACM Symp. on Parallelism in Alg. and
  Architectures}, pages 83--92. ACM, 2013.

\bibitem{KP99}
E.~Koutsoupias and C.~Papadimitriou.
\newblock Worst-case equilibria.
\newblock In {\em {STACS} 1999}, pages 404--413, Berlin, Heidelberg, 1999.
  Springer-Verlag.

\bibitem{L11}
P.~Lenzner.
\newblock On dynamics in basic network creation games.
\newblock In G.~Persiano, editor, {\em Algorithmic Game Theory}, volume 6982 of
  {\em LNCS}, pages 254--265. Springer, 2011.

\bibitem{L12}
P.~Lenzner.
\newblock Greedy selfish network creation.
\newblock In P.~Goldberg, editor, {\em Internet and Network Economics}, LNCS,
  pages 142--155. Springer Berlin Heidelberg, 2012.

\bibitem{Mih13}
A.~Mamageishvili, M.~Mihalák, and D.~Müller.
\newblock Tree nash equilibria in the network creation game.
\newblock In A.~Bonato, M.~Mitzenmacher, and P.~Prałat, editors, {\em Alg. and
  Models for the Web Graph}, volume 8305 of {\em LNCS}, pages 118--129.
  Springer, 2013.

\bibitem{MS10}
M.~Mihal\'{a}k and J.~C. Schlegel.
\newblock The price of anarchy in network creation games is (mostly) constant.
\newblock In {\em {SAGT} 2010}, pages 276--287. Springer-Verlag, 2010.

\bibitem{MS12}
M.~Mihal\'{a}k and J.~C. Schlegel.
\newblock Asymmetric swap-equilibrium: A unifying equilibrium concept for
  network creation games.
\newblock In {\em {MFCS} 2012}, volume 7464 of {\em LNCS}, pages 693--704.
  Springer Berlin / Heidelberg, 2012.

\bibitem{MS96}
D.~Monderer and L.~S. Shapley.
\newblock Potential games.
\newblock {\em Games and Economic Behavior}, 14(1):124 -- 143, 1996.

\bibitem{AGT}
N.~Nisan, T.~Roughgarden, E.~Tardos, and V.~V. Vazirani.
\newblock {\em Algorithmic Game Theory}.
\newblock Cambridge University Press, New York, NY, USA, 2007.

\bibitem{WS11}
D.~P. Williamson and D.~B. Shmoys.
\newblock {\em The design of approximation algorithms}.
\newblock Cambridge University Press, 2011.

\end{thebibliography}

\newpage
\appendix

\section{Omitted Hardness Details from Section~\ref{section_hardness_dynamics}}
\label{section:appendix_hardness}
\begin{proof}[Proof of Theorem~\ref{thm_br_computation}]
 We reduce from the well-known \textsc{Dominating Set} problem~\cite{WS11}. Given any graph $G=(V,E)$, then the \textsc{Dominating Set} problem asks for a minimum cardinality dominating set $D \subseteq V$. A set $D$ is dominating for a graph $G$, if every vertex of $G$ belongs to $D$ or has a neighbor in $D$.

 Let $G = (V,E)$ be any \textsc{Dominating Set} instance. Let $(G',\alpha)$ be an instance of the NCG obtained from $G$ in the following way: network $G'$ is the network $G$ where the edge ownership of each edge is chosen arbitrarily among its end-points. Furthermore, in $G'$ there is a new agent $u \notin V$ who owns an edge towards all other agents. We choose the edge-price $\alpha$ such that $1 < \alpha < 2$ holds.

 We claim that agent $u$'s minimum cost strategy which $u$ can obtain by modifying owned edges in her $k$-neighborhood in $(G',\alpha)$ is identical to the minimum cardinality dominating set in $G$ for any $k\geq 1$.

 Let $S_u$ be the current strategy of $u$ in $(G',\alpha)$ and let $S_u^k$ be a minimum cost strategy agent for $u$ which can be obtained by deleting edges and by buying or swapping edges in the $k$-neighborhood of $u$ in $G'$. First of all, notice that agent $u$ has eccentricity $1$ in $G'$ and, since $u$ clearly does not want to buy additional edges, we have that any strategy-change for agent $u$ is performed by edge-deletions only. Deleting an edge is a modification in the $1$-neighborhood, thus we have that $S_u^1 = S_u^2 = \dots $ for all $k\geq 1$. Let $G''$ be the network obtained by agent $u$'s strategy-change towards $S_u^1$. Notice that in $G''$ agent $u$ cannot have eccentricity of at least $3$, since $1<\alpha<2$ and thus not deleting the edges to agents in distance at least $3$ in $G''$ would yield a strategy having strictly lower cost which contradicts our assumption that $S_u^1$ yields minimum cost for agent $u$. It follows that $S_u^1$ must be a dominating set in $G$. Moreover, since $\alpha >1$ and since $S_u^1$ is agent $u$'s minimum cost strategy in $(G',\alpha)$, it follows that $S_u^1$ must be a minimum cardinality dominating set, since any additional edge strictly increases agent $u$'s cost.
\end{proof}

\begin{proof}[Proof of Theorem~\ref{thm_3GE_vs_3NE}]
We show that if an agent cannot improve by a $k$-local greedy move, then this agent can decrease her cost to at most a third of her current cost by performing a $k$-local move. 

For this, similar to \cite{L12}, we reduce the best response computation of any agent to the solution of a corresponding \textsc{Uncapacitated Metric Facility Location} (UMFL)\cite{WS11} instance.
Here, UMFL is the problem to select a subset $X\subset\mathcal{F}$ of facilities, such that for a given set of clients, individual opening costs $f_v\geq 0$ for every facility $v\in\mathcal{F}$, and a metric distance function $d:\mathcal{F}\times \mathcal{C}\rightarrow \mathbb{R}$, such that $\sum_{v\in X} f_v + \sum_{x\in \mathcal{C}} \min_{v\in X}d(x,v)$ is minimized.
Arya et al.\ \cite{flpLocalityGapArya} provide a locality gap result that (beside other results) states:
When starting with an arbitrary facility set and performing only the operations of closing a single facility, opening a single facility, or swapping a facility until no further improvement is possible, this greedy local search heuristic results in a $3$-approximation of the optimal solution.

Given a \kGE network $G$ with nodes $V$, edges $E$ and edge creation cost $\alpha$, consider an arbitrary node $u\in V$.
For $u$ let $S_u$ be the set of agents to which $u$ owns an edge and let $\overline{S_u}$ be the set of agents which own edges to agent $u$.
Using this, we define a special instance $I=(\mathcal{F},\mathcal{C},\{f_v\},d)$ for the UMFL problem:
\begin{itemize}
    \item the set of facilities $\mathcal{F}$ is given by $\mathcal{F}\coloneqq N_k(u)\setminus \{u\}$
    \item the set of clients $\mathcal{C}$ is given by $\mathcal{C}\coloneqq V\setminus \{u\}$
    \item for every facility $v\in\mathcal{F}\cap \bar s_u$, we define the opening cost as $f_v\coloneqq 0$, and for all others as $f\coloneqq \alpha$
    \item for a facility $v\in\mathcal{F}$ and a client $x\in\mathcal{C}$ we set the distance as $d(v,x)\coloneqq d_G(v,x) + 1$; the distance is $\infty$, if there is no path from $v$ to $x$ in $G$
\end{itemize}
Note that by using the shortest path metric to define the distances in $I$, we ensure that the distances in $I$ are metric.
It is easy to see that $\cost_u(G) = \cost(I) = \sum_{v\in S_u} f_v + \sum_{x\in \mathcal{C}} \min_{v\in S_u}d(x,v)$.
Since  we assume that agent $u$ cannot perform any improve $k$-local greedy move, the locality gap for UMFL~\cite{flpLocalityGapArya} yields that the cost of agent $u$ in $G$ is at most $3$ times her cost after performing a best possible $k$-local move.
\end{proof}

\section{Detailed Discussion of Game Dynamics}\label{section:appendix_dynamics}
We start with the case $k=1$:
\begin{theorem}\label{thm_k=1_dynamics}
 If $k=1$, then no agent can perform any strategy-change if only edge-swaps are allowed or in all versions if starting network is a tree. In the $1$-GBG or the $1$-BG there is guaranteed convergence to a $1$-\GE or $1$-\NE in $\Theta(n^2)$ many moves.
\end{theorem}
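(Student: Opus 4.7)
The plan is to first observe that for $k=1$ the set $N_1(u)$ is $\{u\}\cup\{v:uv\in E\}$, so any target of a $1$-local swap or buy is already a neighbor of $u$: a buy of $uw$ with $w\in N_1(u)$ is blocked because $uw$ already exists, and a swap $uv\to uw$ with $w\in N_1(u)$ either recreates an existing edge or is the no-op $uv\to uv$. Consequently the only effective $1$-local operation is an edge-deletion by its owner. This immediately gives three of the four claims: no admissible move exists in $1$-SG or $1$-ASG; and on a tree every owned edge is a bridge, so its deletion disconnects the graph and raises the owner's cost to $\infty$, ruling out improving deletions in all four versions.

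For $1$-GBG each move is exactly one single edge-deletion, so the number of moves is at most the number of edges initially present, i.e.\ $\mathcal{O}(n^2)$. For the matching $\Omega(n^2)$ lower bound I would start from $K_n$ with canonical ownership $S_{u_i}=\{u_j : j>i\}$ and fix $\alpha\in(1,2)$. As long as the star centered at $u_1$ is contained in the current graph, deleting any owned edge $u_iu_j$ with $2\le i<j$ is strictly improving for its owner: the distance to $u_j$ only rises from $1$ to $2$ via the path $u_iu_1u_j$, while $\alpha>1$ is saved. Executing these deletions in any order that leaves $u_1$'s star intact throughout yields $\sum_{i=2}^{n-1}(n-i)=\Theta(n^2)$ successive improving single-edge deletions before equilibrium is reached.

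For $1$-BG a move is a best-response subset deletion of the active agent's owned edges, and I would prove the upper bound of $\mathcal{O}(n)$ moves via a monotonicity argument showing that each agent moves at most once. Suppose $u$ has just played her best response and now keeps the subset $T_u\subseteq S_u$: her choice of $T_u$ over every $T_u\setminus D$ encodes $|D|\alpha<\Delta_G(D)$, where $\Delta_G(D)$ is the increase in $u$'s aggregate distance cost when $D\subseteq T_u$ is additionally removed from $G$. When other agents subsequently delete edges, the graph shrinks to some $G'\subseteq G$ with $u$'s edges unchanged, and I would argue that $\Delta_{G'}(D)\geq\Delta_G(D)$ for every $D\subseteq T_u$: the alternative $D$-free paths that $u$ has to use can only become longer under further edge loss, so the penalty $u$ would pay for abandoning $D$ is preserved. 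Since buying is also blocked at $k=1$, agent $u$ never finds a new improving move. The matching $\Omega(n)$ lower bound comes again from $K_n$ with canonical ownership and $\alpha\in(1,2)$, where each of the $n-1$ edge-owning agents has a profitable combined deletion and therefore contributes exactly one move. The main obstacle is the monotonicity step: at the level of individual $(u,v)$-pairs the distance saving of a single edge can drop when a third party removes an edge on the corresponding shortest path, so the proof must work at the aggregate level and exploit the global optimality of $T_u$ rather than vertex-wise comparisons.
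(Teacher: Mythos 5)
Your handling of the first three claims is correct and matches the paper: for $k=1$ every admissible buy/swap target is already a neighbour, so only deletions remain; on trees deletions disconnect; and for the $1$-GBG the edge count is a strictly decreasing potential, giving $\mathcal{O}(n^2)$ moves, with the clique and $\alpha\in(1,2)$ (the paper uses $\alpha>2$, which works equally well) giving the matching $\Omega(n^2)$ lower bound via deletions that preserve a spanning star. The $\Omega(n)$ lower bound for the $1$-BG is also fine.

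The gap is in your $\mathcal{O}(n)$ upper bound for the $1$-BG. Your argument rests on the claim that each agent moves at most once because $\Delta_{G'}(D)\geq\Delta_G(D)$ for every $D\subseteq T_u$ once other agents have deleted further edges. This monotonicity is false, and not only ``at the level of individual pairs'': shortest-path distance is not supermodular in the set of removed edges. Concretely, let $u$ own $uv$, let $a$ be adjacent to $u$, $v$ and to a vertex $b$, let $v$ own edges to $x_1,\dots,x_m$, and let $b$ be adjacent to each $x_i$. Initially $d_G(u,x_i)=2$ via $uv$ while the $uv$-free route has length $3$, so each $x_i$ contributes $1$ to $\Delta_G(\{uv\})=m+1>\alpha$ and $u$ keeps $uv$. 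After $v$ profitably deletes all edges $vx_i$ (which she does for, say, $\alpha=2.5$ and $m\geq 4$), both routes to each $x_i$ have length $3$, so $x_i$ contributes $0$ and $\Delta_{G'}(\{uv\})=1<\alpha$; agent $u$ now has a second improving deletion. So ``global optimality of $T_u$'' does not rescue the aggregate inequality, and the proof of the $\mathcal{O}(n)$ bound is genuinely open in your write-up. To be fair, the paper's own proof dispatches this case with the single phrase that ``an agent may remove all her edges in one move'', i.e., it implicitly relies on the same one-move-per-agent claim without proving it; you have correctly located the crux, but the proposal as written does not close it, and a different argument (or a weaker bound) is needed for this part.
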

\begin{proof}
 Clearly, if $k=1$, then no agent can create new edges, that is, edges to non-neighbors, since all non-neighbors are in distance at least $2$. Thus, in the $1$-SG or the $2$-SG no strategy-change is possible, which implies that any network (even disconnected networks) are in $1$-Swap Equilibrium\footnote{This is the corresponding $k$-local solution concept to the \emph{Swap Equilibrium} introduced by Alon et al.~\cite{ADHL10}.} and in $1$-ASE. Similarly, we have that any tree network is in $1$-\GE or in $1$-\NE.
 
 If we start with a non-tree network in the $1$-GBG or the $1$-BG, then agents can only delete edge to improve on their current situation. Thus, the number of edges in the network is an ordinal potential function~\cite{MS96} for the game, which implies guaranteed convergence. Since a $n$-agent network can have $\Theta(n^2)$ many edges and since the network creation process stops if a tree is created, it follows that there can be $\mathcal{O}(n^2)$ many moves in the $1$-GBG, since only single edges can be deleted in one move, and $\mathcal{O}(n)$ many moves in the $1$-BG, since an agent may remove all her edges in one move. For both versions, a complete network, where every agent owns roughly half her incident edges, and $\alpha > 2$ yields a matching lower bound on the number of moves, since the process must converge to a star. 
\end{proof}
\noindent For the case $k=2$ we have:
\begin{theorem}\label{thm_k=2_dynamics}
 If $k=2$, then there exists a best response cycle for the $2$-SG, the $2$-GBG and the $2$-BG. For the $2$-GBG and the $2$-BG such a cycle can be reached even if the starting network is a tree. 
\end{theorem}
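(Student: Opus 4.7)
The plan is to construct, for each of the three game variants, an explicit small network together with a prescribed cyclic sequence of agent activations and moves, and to verify that each move in the sequence is actually a best $2$-local (respectively greedy) move for the activated agent. By the definition of a BR-cycle, it suffices to exhibit configurations $G_0, G_1, \dots, G_t = G_0$ and agents $u_1, \dots, u_t$ such that $G_i$ arises from $G_{i-1}$ by $u_i$ performing her best move in the respective move-set and this move strictly decreases her cost.

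For the 2-SG, I would look among graphs with a high degree of symmetry for a configuration in which, after an agent performs her best $2$-local swap, the resulting network is isomorphic to the original up to a rotation of vertex labels. A natural candidate is a small cycle (for instance $C_5$ or $C_6$) augmented with pendant vertices or chords, with an asymmetric edge ownership chosen so that each activated agent finds exactly one swap that reduces her distance cost. The key invariants to engineer are (i) that after a swap there is still a neighbor who can perform an analogous improving swap; (ii) that the distances change in a way that eventually restores the original distance profile up to a permutation of vertex identities; (iii) that no alternative swap yields a strictly larger improvement and thereby breaks out of the cycle. Note that the 2-SG preserves the total number of edges, which simplifies bookkeeping.

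For the 2-GBG and 2-BG starting from a tree, the plan is to pick a tree (plausibly a caterpillar or a short path with attached leaves) and an edge price $\alpha$ such that some agent $u$ has a vertex at distance exactly two to whom buying an edge strictly decreases her distance cost by more than $\alpha$. After $u$ buys this edge the network contains a short cycle, enabling a next agent $v$ to perform either a strictly improving single buy/swap/delete (in the greedy case) or a combined $2$-local move. By chaining such moves and exploiting symmetry, one returns after finitely many steps to a tree isomorphic to the initial one, yielding the desired BR-cycle. Since every $2$-local greedy move is a $2$-local move, the same construction witnesses a cycle in the 2-BG; where helpful, consecutive single-edge actions can be bundled into one multi-edge move to shorten the cycle.

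The main obstacle is verifying that each prescribed move is actually a best response in its move-class, not merely an improving one --- in particular, ruling out any alternative swap, buy, or delete that would yield a strictly larger cost decrease and deflect the process out of the cycle. This forces $\alpha$ into a precisely tuned interval and requires a finite but tedious case analysis over all candidate $2$-local moves available to the activated agent at each step. A secondary technicality is guaranteeing genuine cyclicity (return to an identical ownership pattern, not merely an isomorphic one), which may require extending the prescribed sequence over a full symmetry orbit of the construction.
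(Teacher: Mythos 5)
There is a genuine gap: your proposal is a plan for finding a proof, not a proof. For an existence statement of this kind, the entire mathematical content lies in exhibiting a concrete network (vertices, edges, ownerships), a concrete value of $\alpha$, a concrete cyclic sequence of activations and moves, and a verification that each move is the activated agent's \emph{best} move in the relevant move-class. You correctly identify this verification as ``the main obstacle,'' but you do not overcome it: no candidate graph is fixed, no $\alpha$-interval is computed, and no move is checked. The paper's proof consists precisely of these missing pieces --- for the $2$-SG it reuses an explicit BR-cycle construction from Kawald--Lenzner that happens to use only $2$-local swaps, and for the $2$-GBG/$2$-BG it gives an explicit $6$-agent network with a $12$-step cycle for $2<\alpha<3$, verified move by move. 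Without an analogous explicit object, nothing has been proved.

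A secondary issue concerns the tree claim. You propose to engineer a cycle that ``returns after finitely many steps to a tree isomorphic to the initial one,'' i.e., a cycle passing through trees. The theorem only asserts that a BR-cycle can be \emph{reached} from a tree starting network, and the paper establishes this by exhibiting a short best-response \emph{path} from a tree into the (non-tree) cycle. Insisting that the cycle itself contain trees is a strictly stronger requirement for $k=2$ than what is needed, and it is not clear your version is even achievable with $2$-local moves; you should separate the ``cycle exists'' claim from the ``cycle is reachable from a tree'' claim and prove the latter via an entry path rather than by constraining the cycle's structure.
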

\begin{proof}
 If we consider the $2$-SG, then we can simply use the best response cycle construction from \cite{L11}, since this only uses $2$-local edge-swaps. Actually, this proves that there is a BR-cycle in the $2$-SG for all $k\geq 2$, since in every step of the cycle these $2$-local edge-swaps are best possible swaps in the corresponding network.
 
 For the $2$-GBG and the $2$-BG Fig.~\ref{fig:2local_br_cycle} shows a best response cycle.
 \begin{figure}[h!]
  \centering
  \includegraphics[width=0.9\textwidth]{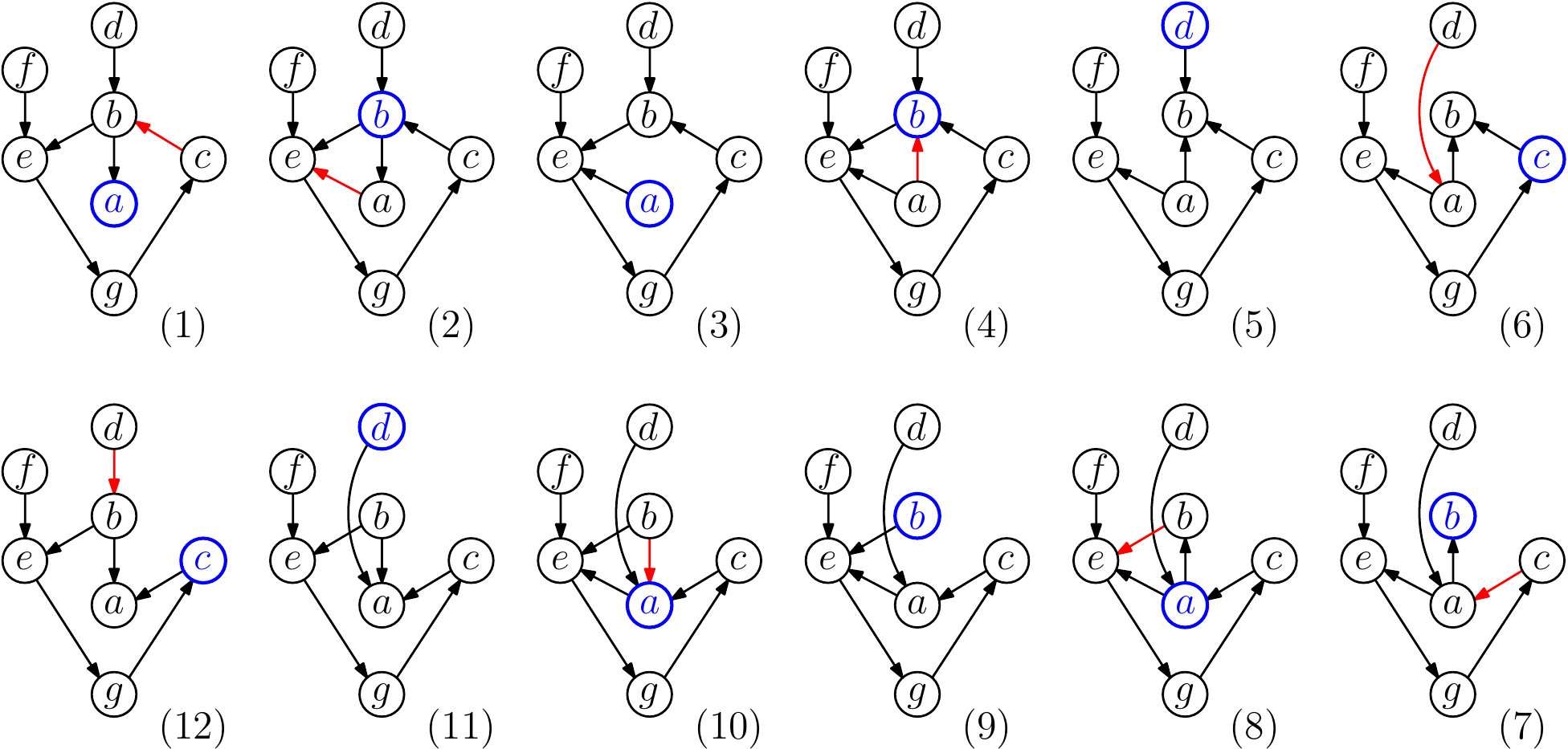}
  \caption{The BR-cycle for $k=2$ and $2< \alpha <3$. Network ($i$) leads to ($i+1 \bmod 12$) as follows: 
  In (1) $a$ buys $ae$, in (2) $b$ deletes $ab$, in (3) $a$ buys $ab$, in (4) $b$ deletes $be$, in (5) $d$ swaps $db\to da$, in (6) $c$ swaps $cb \to ca$, in (7) $b$ buys $be$, in (8) $a$ deletes $ab$, in (9) $b$ buys $ab$, in (10) $a$ deletes $ae$, in (11) $d$ swaps $da \to db$, in (12) $c$ swaps $ca \to cb$. Note that in any step of the cycle there may be more than one agent who can improve, e.g. agent $f$ can also improve by swapping $fe \to fb$ in network (1). The active agents are chosen by an adversary to sustain the BR-cycle.}
  \label{fig:2local_br_cycle}
 \end{figure}
 It is easy to check, that in every step of the cycle in Fig.~\ref{fig:2local_br_cycle} the active agent performs a strategy-change to her best possible strategy and that this strategy-change is a $2$-local greedy move. Observe, that in network (5) from Fig.~\ref{fig:2local_br_cycle} agent $d$ actually has a strictly better $3$-local greedy move (the swap $db \to de$), than her best possible $2$-local greedy move. Thus, the shown construction cannot be used for the case $k=3$.
 
 There actually is a best response path starting with a tree network towards which ends in network (1) of the BR-cycle in Fig.~\ref{fig:2local_br_cycle}. This shows, that there is no convergence guarantee even if the starting network is a tree. The best response path is depicted in Fig.~\ref{fig:2local_br_path}. 
 \begin{figure}[h!]
  \centering
  \includegraphics[width=0.4\textwidth]{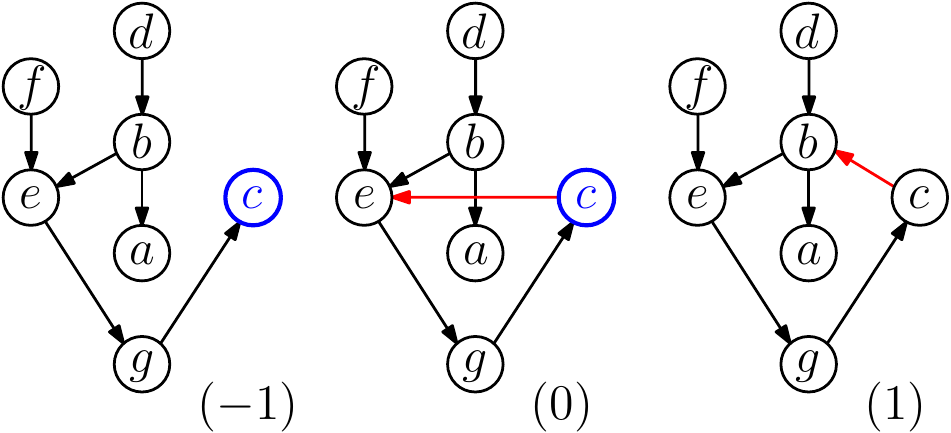}
  \caption{The BR-path for $k=2$ and $2< \alpha <3$ from a tree network to network (1) from Fig.~\ref{fig:2local_br_cycle}. 
  In (-1) $c$ buys $ce$, in (0) $c$ swaps $ce \to cb$.}
  \label{fig:2local_br_path}
 \end{figure}
 Again, it is easy to see that all strategy-changes shown in Fig.~\ref{fig:2local_br_path} are towards the best possible strategy in within the $2$-neighborhood and that all moves are $2$-local greedy moves.
\end{proof}
\noindent For the case $k\geq 3$, we have:
\begin{theorem}\label{thm_k=3_dynamics}
 For any $k\geq 3$ there exists a best response cycle for the $k$-ASG, the $k$-GBG and the $k$-BG.
\end{theorem}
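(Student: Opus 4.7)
The plan is to construct, for each of the three game variants and for each fixed $k\geq 3$, an explicit best response cycle by taking an existing non-local (or $k=2$) cycle and ``stretching'' it so that the critical strategy-changes become exactly $k$-local while no shorter improving move exists. For the $k$-GBG and $k$-BG, the starting point is Fig.~\ref{fig:2local_br_cycle}: replace the edge $ae$ (and symmetric edges used in the moves of steps (1), (7), (10), (11)) by an induced path of length $k-1$ whose internal nodes are all uniquely owned and placed so that each of them is trivially locally optimal. In the stretched network $G_k$ the distance between any two ``core'' agents that interact in the original cycle is exactly $k$, so the designated buy/swap/delete step is a $k$-local greedy move, whereas no closer target is available.

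The key steps I would carry out in order are: (i) specify the stretched gadget and its ownership pattern, making the six core agents $a,b,c,d,e,f$ of Fig.~\ref{fig:2local_br_cycle} into the endpoints of paths of the appropriate length; (ii) pick $\alpha$ in a narrow interval (a generalization of $2<\alpha<3$) that forces each designated active agent to prefer exactly the designated move over all alternatives — in particular over buying, deleting, or swapping several edges simultaneously within her $k$-neighborhood (for the $k$-BG); (iii) verify for each of the twelve transitions that (a) the move lies in the $k$-neighborhood of the active agent, (b) it strictly decreases her cost, and (c) it is a best $k$-local greedy move (respectively $k$-local move); (iv) analogously supply a starting path from a tree network if we want to extend the tree-version claim. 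For the $k$-ASG variant the plan is the same but using the Kawald--Lenzner asymmetric-swap cycle as the seed: each swap in the seed cycle involves only an own edge, and stretching the target side to distance $k$ preserves ownership and turns the global swap into a $k$-local asymmetric swap. The extra room that $k\geq 3$ affords compared to $k=2$ is exactly what allows this stretching to succeed in the ASG (matching the \emph{open} entry for $k=2$ in Table~\ref{table_dynamics}).

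The main obstacle will be step (iii) — certifying the best-response property after stretching. When the $k$-neighborhood is enlarged, the number of potential strategy-changes the active agent could try grows, and we must rule out all of them: nonlocal edges can be ignored by locality of the game, but local multi-edge purchases in the $k$-BG and local multi-swaps within the $k$-neighborhood must be shown to be strictly worse than the designated move. I would handle this by exploiting that any $k$-local operation can reduce an agent's distance cost by at most $(k-1)\cdot n$ in total while each additional purchased edge costs $\alpha$; for the interior path agents the stretched gadget is symmetric enough that their distance vectors are balanced, so they cannot profitably reroute. For the core agents the analysis reduces to the same case distinction already verified at $k=2$, applied inside the stretched $k$-neighborhood.

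Once the gadget and the interval for $\alpha$ are written down, the remaining verifications are mechanical case checks, one per transition and one per potential counter-move, analogous to (but more involved than) the verification implicit in the $k=2$ construction of Fig.~\ref{fig:2local_br_cycle}. This yields the claimed BR-cycles for the $k$-ASG, $k$-GBG, and $k$-BG for every $k\geq 3$.
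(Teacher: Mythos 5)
Your plan is a genuinely different route from the paper's, and it has a gap at exactly the step you flag as the ``main obstacle.'' The paper does not stretch anything: it observes that the known non-local BR-cycles of Kawald--Lenzner already consist of moves that are best responses among \emph{all} (unrestricted) strategy-changes and that happen to be $3$-local (resp.\ suitably local) greedy moves; such a cycle is therefore automatically a BR-cycle in the $k$-local game for every $k\geq 3$, with no new verification needed. Only for the $3$-GBG/$3$-BG does the paper supply a fresh small gadget (Fig.~\ref{fig:3local_br_cycle}, $3<\alpha<4$). Your proposal instead seeds from the $k=2$ cycle of Fig.~\ref{fig:2local_br_cycle} and stretches edges into paths of length $k-1$. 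But the paper explicitly notes that this seed \emph{breaks} already at $k=3$: in network (5) agent $d$ has a strictly better $3$-local move ($db\to de$) than her designated $2$-local one. Enlarging the view range enlarges the set of competing moves, and this is precisely the failure mode your construction must rule out. Subdividing edges changes the distance cost of every agent to every one of the newly inserted nodes, so the delicate inequalities that certify ``designated move $=$ best $k$-local move'' in the seed do not transfer; you would also have to exclude purchases/swaps targeting the \emph{interior} nodes of your stretched paths (which lie at distance $<k$ and are therefore admissible), and exhibit a nonempty $\alpha$-interval that works simultaneously for all twelve transitions. None of this is carried out, and it is not a ``mechanical'' residue --- it is the entire content of the theorem.

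Two further points. First, for the $k$-ASG your stretching is unnecessary even under your own strategy: the asymmetric-swap cycle you cite as a seed already uses only own-edge swaps that are globally best and $3$-local, so it works verbatim for all $k\geq 3$ --- this is the paper's one-line argument, and adopting it would spare you the hardest case. Second, your worry about the interior path agents being ``trivially locally optimal'' is moot: in a BR-cycle the active agents are chosen by an adversary, so auxiliary agents who could improve but are never activated do not invalidate the cycle. The real obligation is solely that each \emph{designated} active agent's move is her best $k$-local (greedy) move, which is exactly the part your proposal leaves open.
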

\begin{proof}
 For the $k$-ASG with $k\geq 3$, we can simply reuse the corresponding construction in~\cite{KL13}. There the existence of a best response cycle is shown, where the best possible strategy-changes are in fact $3$-local greedy moves. 
 
 For the $3$-GBG and the $3$-BG Fig.~\ref{fig:3local_br_cycle} shows a best response cycle.
 \begin{figure}[h!]
  \centering
  \includegraphics[width=0.6\textwidth]{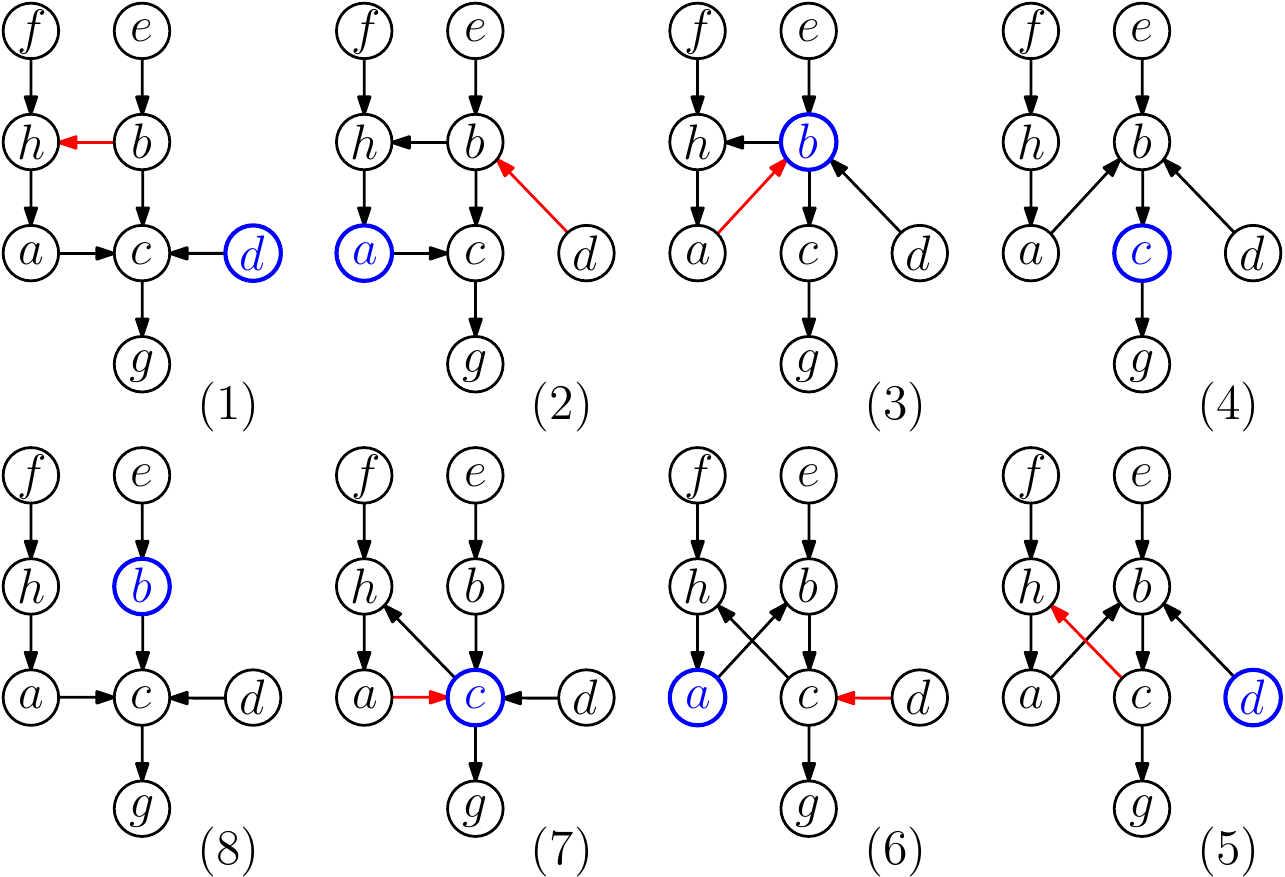}
  \caption{The BR-cycle for $k=3$ and $3< \alpha <4$. Network ($i$) leads to ($i+1 \bmod 8$) as follows: 
  In (1) $d$ swaps $dc \to db$, in (2) $a$ swaps $ac \to ab$, in (3) $b$ deletes $bh$, in (4) $c$ buys $ch$, in (5) $d$ swaps $db\to dc$, in (6) $a$ swaps $ab \to ac$, in (7) $c$ deletes $ch$, in (8) $b$ buys $bh$.}
  \label{fig:3local_br_cycle}
 \end{figure}
 It is easy to check, that in every step of the cycle in Fig.~\ref{fig:3local_br_cycle} the active agent performs a best possible strategy-change which happens to be a $k$-local greedy move. Moreover, note that the best response cycle contains trees, which implies that even with a tree network, there is no guaranteed convergence for $k=3$.
 
 For $k\geq 4$, we can reuse the corresponding constructions in \cite{KL13} to show that there exists a best response cycle in the $k$-GBG and the $k$-BG for $k\geq 4$.
\end{proof}
\noindent Finally, we analyze the convergence behavior if the starting network is a tree and agents are only allowed to swap edges.
\begin{theorem}\label{thm_tree_swap_dynamics}
 Starting from a tree network the network creation process converges in $\mathcal{O}(n^3)$ many moves to an equilibrium network in both the $k$-SG and the $k$-ASG.
\end{theorem}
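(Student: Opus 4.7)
The plan is to exhibit an integer-valued monotone potential function $\Phi$ that strictly decreases with every improving move and has range $O(n^3)$. I take
\begin{equation*}
\Phi(G) \coloneqq \sum_{u\in V} \dist_u(G) = \sum_{(a,b)\in V\times V} d_G(a,b),
\end{equation*}
the total pairwise hop-distance (over ordered pairs). Since every swap keeps the edge count at $n-1$, once $\Phi$ is shown strictly monotone, both convergence and preservation of the tree property will follow.

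First I would verify that improving swaps preserve the tree structure. If the current network is a tree $T$ and agent $u$ performs a swap $uv\to uw$, then deleting $uv$ splits $T$ into two subtrees $T_u\ni u$ and $T_v\ni v$. Should $w\in T_u$, the resulting network disconnects $T_v$ from $u$ and gives $\dist_u=\infty$, so the swap cannot be improving; hence $w\in T_v$ and the result is again a spanning tree. In particular, hop-distances inside $T_u$ and inside $T_v$ are unaffected.

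Next I would compute $\Phi(G')-\Phi(G)$ for the swapped network $G'$. Ordered pairs with both endpoints in the same subtree contribute zero. For $a\in T_u$ and $b\in T_v$ the old and new distances are $d_G(a,u)+1+d_G(v,b)$ and $d_G(a,u)+1+d_G(w,b)$ respectively, so summing over all ordered cross-pairs and invoking symmetry of $d$ yields
\begin{equation*}
\Phi(G')-\Phi(G) \;=\; 2\,|T_u|\cdot\sum_{b\in T_v}\bigl(d_G(w,b)-d_G(v,b)\bigr) \;=\; 2\,|T_u|\cdot\bigl(\dist_u(G')-\dist_u(G)\bigr),
\end{equation*}
where the second equality uses $u\in T_u$, so the only part of $\dist_u$ that changes is the sum over $T_v$. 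Because the move is improving for $u$ and all $d_G$-values are integers, $\dist_u$ must drop by at least one: in the $k$-ASG her edge count is unchanged, so the distance cost alone decides the cost-change; in the $k$-SG her edge count may grow by one if $v$ previously owned $uv$, but then the distance cost must still decrease by strictly more than $\alpha$ and hence by an integer $\geq 1$. Either way $\Phi$ drops by at least $2|T_u|\geq 2$ per move.

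Since $\Phi$ is a non-negative integer bounded by $n(n-1)^2=O(n^3)$, the process must terminate after at most $O(n^3)$ moves, as claimed. The one step requiring slight care is confirming that the distance-cost decrement remains a positive integer in the $k$-SG after accounting for a possible ownership transfer of the new edge; beyond that, the argument is a direct split of the pairwise-distance sum and is essentially mechanical, so I do not anticipate a serious technical obstacle.
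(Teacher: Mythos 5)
Your proof is correct and rests on the same idea as the paper's: an ordinal potential argument via the total pairwise distance, which the paper simply imports from \cite{L11,KL13} (where exactly this potential is used for tree swap dynamics) rather than re-deriving. Your version is self-contained — the subtree split, the identity $\Phi(G')-\Phi(G)=2|T_u|(\dist_u(G')-\dist_u(G))$, and the integrality argument covering both the $k$-SG and $k$-ASG cost conventions are all sound — so it establishes the $\mathcal{O}(n^3)$ bound without the citation.
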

\begin{proof}
 The statement follows, since it was shown in \cite{L11,KL13} that there exists an ordinal potential function for this scenario, even if agents can perform arbitrary swaps. Thus, this carries over to our setting, where only swaps within an agent's $k$-neighborhood are allowed. Moreover, it was shown in \cite{L11,KL13}, then there can be at most $\mathcal{O}(n^3)$ many moves until an equilibrium is reached. Restricting the number of strategy-changes of the agents can only decrease this number.
\end{proof}

\section{Omitted Details from Section~\ref{section_approx}}\label{section:appendix_approx}
We analyze the relationship between $k$-greedy stability and greedy stability. In $k$-greedy stable networks, no agent can improve by buying, deleting or swapping one own edge within her $k$-neighborhood. In contrast, in greedy-stable networks, no agent can improve by buying, deleting or swapping one own edge even if all possible moves are allowed.

For the following constructions, we have to analyze an agent's distance cost towards vertices in a special subnetwork.

Let the subtree $H_{d,l}$ of some tree network $G$ for some agent $u$ be defined as follows: $H_{d,l}$ is a complete balanced binary tree of depth $d$, where the edges are owned by the endpoint which is closer to the root, with an additional agent $v$ who has distance $l$ towards $u$ and who owns an edge towards the root of the tree.

Let $K$ be any induced subgraph of $G$, then $\dist_u(K) = \sum_{v\in V(K)} \dist_G(u,v)$ is the distance cost of agent $u$ towards all agents in the subnetwork $K$.
\begin{lemma}\label{lem_parttreecost}
 We have that $\dist_u(H_{d,l}) = 2^{d+1}(d+l)+1$.
\end{lemma}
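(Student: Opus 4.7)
The plan is a direct computation: count the vertices of $H_{d,l}$ layer by layer according to their distance to $u$, then sum.

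Recall the structure of $H_{d,l}$: there is a distinguished agent $v$ with $d_G(u,v)=l$, and $v$ owns an edge to the root $r$ of a complete balanced binary tree of depth $d$. Since the tree edges are owned by the parent (the endpoint closer to $r$), the shortest path from $u$ to any vertex in the binary tree must route through $v$ and then through $r$. Thus for the $2^i$ vertices at depth $i \in \{0,1,\dots,d\}$ inside the binary tree, each has distance exactly $l+1+i$ from $u$. Together with $v$ itself (at distance $l$), these account for all of $H_{d,l}$.

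The key step is then the summation:
\begin{equation*}
\dist_u(H_{d,l}) \;=\; l \;+\; \sum_{i=0}^{d} 2^i\,(l+1+i)
\;=\; l \;+\; (l+1)\sum_{i=0}^{d} 2^i \;+\; \sum_{i=0}^{d} i\cdot 2^i.
\end{equation*}
I would plug in the two standard identities $\sum_{i=0}^{d} 2^i = 2^{d+1}-1$ and $\sum_{i=0}^{d} i\cdot 2^i = (d-1)\,2^{d+1}+2$ and collect terms. The linear-in-$l$ parts cancel the $-(l+1)$, the constant from the second identity combines with the $-1$ to yield $+1$, and the $2^{d+1}$ terms combine to $(d+l)\,2^{d+1}$, giving the claimed $2^{d+1}(d+l)+1$.

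There is no real obstacle here; the only thing to be careful about is the bookkeeping of the "$+1$" shift caused by the edge $vr$ (so that a vertex at depth $i$ in the binary tree is at distance $l+1+i$, not $l+i$, from $u$) and the fact that $v$ itself must be included exactly once, at distance $l$. Once these are tracked correctly, the identity falls out of the two standard geometric-type sums.
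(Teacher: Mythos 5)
Your computation is correct, and it uses exactly the ingredients of the paper's proof: the layer-by-layer count (a vertex at depth $i$ of the binary tree lies at distance $l+1+i$ from $u$, plus $v$ itself at distance $l$) together with the identities $\sum_{i=0}^d 2^i = 2^{d+1}-1$ and $\sum_{i=0}^d i\,2^i=(d-1)2^{d+1}+2$. The only difference is cosmetic: the paper evaluates the sum directly for $l=0$ and then inducts on $l$ via $\dist_u(H_{d,l+1})=\dist_u(H_{d,l})+|V(H_{d,l})|=\dist_u(H_{d,l})+2^{d+1}$, whereas you fold the $l$-dependence into a single closed-form summation.
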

\begin{proof}
 We prove the statement by induction on $l$. For $l=0$ we have \begin{align*}\dist_u(H_{d,0}) &= \sum_{i=0}^d 2^i(i+1) = \sum_{i=0}^d 2^i i + \sum_{i=0}^d 2^i = 2^{d+1}(d-1)+2 + 2^{d+1}-1 \\&= 2^{d+1}(d+0) + 1.\end{align*}
 For the induction step from $l$ to $l+1$ we use that $|V(H_{d,l})| = 2^{d+1}$ for any $l$. With this, the step be seen as follows:
 \begin{align*}
    \dist_u(H_{d,l+1}) &= \dist_u(H_{d,l}) + |V(H_{d,l})| = 2^{d+1}(d+l)+1 + 2^{d+1} \\&= 2^{d+1}(d+l+1) +1.
 \end{align*}
\end{proof}
\noindent Let $T_d$ be a complete balanced binary tree of depth $d$, let $u$ be a leaf of $T_d$, let $r$ be the root of $T_d$ and let $x$ be the neighbor of $r$ who has the largest distance towards $u$ in $T_d$. 
We have the following:
\begin{lemma}\label{lem_distcost_Td} 
Let $d$ be even, then we have $\dist_u(T_d) = 2^{d+1}(2d-3) + d + 6$.
\end{lemma}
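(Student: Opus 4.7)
My plan is to decompose $V(T_d)\setminus\{u\}$ into $d$ disjoint pieces, each matching the structure of some $H_{d',l'}$ from Lemma~\ref{lem_parttreecost}, and then sum the per-piece distance costs. Concretely, I would let $u = v_d, v_{d-1}, \ldots, v_0 = r$ denote the unique path from the leaf $u$ to the root of $T_d$, and for each $i \in \{1,\ldots,d\}$ let $w_i$ be the sibling of $v_i$. Since $T_d$ is a complete balanced binary tree of depth $d$, the subtree $B_i$ of $T_d$ rooted at $w_i$ is itself a complete balanced binary tree of depth $d-i$. The parent $v_{i-1}$ of $w_i$ lies at distance $d-i+1$ from $u$ along the spine and is joined to $w_i$ by an edge; hence $B_i$ together with $v_{i-1}$ forms a copy of $H_{d-i,\,d-i+1}$ (with $v_{i-1}$ playing the role of the external vertex).

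Next, I would verify that the family $\{B_i \cup \{v_{i-1}\}\}_{i=1}^{d}$ partitions $V(T_d)\setminus\{u\}$: each $w_i$ together with all its descendants lies only in the $i$-th piece (the subtrees $B_1,\ldots,B_d$ being pairwise disjoint), while each spine vertex $v_{i-1}$ appears as the external vertex only in the $i$-th piece, and $u=v_d$ is deliberately excluded. The vertex count $\sum_{i=1}^{d} 2^{d-i+1} = 2^{d+1}-2 = |V(T_d)|-1$ confirms this. Consequently, Lemma~\ref{lem_parttreecost} yields
\[
    \dist_u(T_d) \;=\; \sum_{i=1}^{d} \dist_u\bigl(H_{d-i,\,d-i+1}\bigr) \;=\; \sum_{i=1}^{d} \Bigl(2^{d-i+1}\bigl(2(d-i)+1\bigr) + 1\Bigr).
\]

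The last step is a routine summation: substituting $k = d-i$ and using the standard identities $\sum_{k=0}^{d-1} 2^k = 2^d - 1$ and $\sum_{k=0}^{d-1} k\cdot 2^k = (d-2)2^d + 2$, the expression collapses to $d + 4\bigl((d-2)2^d + 2\bigr) + 2(2^d-1) = 2^{d+1}(2d-3) + d + 6$, as claimed. I expect the combinatorial bookkeeping in the partition step to be the only delicate part; the algebra at the end is mechanical, and I would sanity-check it against the small case $d=2$, where the formula predicts $\dist_u(T_2) = 16$. I also note that the parity of $d$ plays no role in this derivation, so the identity actually holds for every $d\geq 1$; the restriction to even $d$ in the statement presumably reflects how the lemma is later applied rather than a genuine constraint on the computation.
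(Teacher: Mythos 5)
Your proof is correct and follows essentially the same route as the paper: both decompose $V(T_d)\setminus\{u\}$ along the leaf-to-root path into the sibling subtrees together with their spine parents, recognize each piece as a copy of $H_{i-1,i}$ (your indexing $H_{d-i,\,d-i+1}$ is the same sum read in reverse), and apply Lemma~\ref{lem_parttreecost} before a routine summation. Your side remarks --- the sanity check $\dist_u(T_2)=16$ and the observation that the parity restriction is not needed for this identity --- are also correct.
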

\begin{proof}
 We compute the distance cost by rephrasing it as sum of distance cost towards special trees of the form $H_{d,l}$ and then we use Lemma~\ref{lem_parttreecost}.
 \begin{align*}
  \dist_u(T_d) &= \sum_{i=1}^d \dist_u(H_{i-1,i}) = \sum_{i=1}^d (2^{i}(2i-1)+1) \\
  &= 2\sum_{i=1}^d (2^{i}i) - \sum_{i=1}^d 2^i + d = 2(2^{d+1}(d-1)+2) - ( 2^{d+1}-2) + d\\
  &= 2^{d+1}(2d-3) + d + 6.
 \end{align*}
\end{proof}
\noindent Now we are ready to prove Theorem~\ref{thm_kGe_vs_Ge_approx_lower_bound}.
\begin{proof}[Proof of Theorem~\ref{thm_kGe_vs_Ge_approx_lower_bound}]
 We consider the $d$-$l$-Tree-Star, i.e., a combination of a complete binary tree with depth $d$ and a star having $l$ leaves. A $d$-$l$-Tree-Star $G_{d,l}$ is constructed as follows: Let $T_d$ be a complete binary tree where every edge is owned by the endpoint which is closer to the root of the tree and let $r$ be the root of $T_d$. Let $S_l$ be a star having $l$ leaves and let $z$ be the center vertex of the star owning all edges of $S_l$.
The network $G_{d,l}$ is obtained by adding an agent $y$, which owns an edge towards $r$ and an edge towards $z$ and thereby connects the subgraphs $T_d$ and $S_l$. See Fig.~\ref{fig:tree_star}.
\begin{figure}[ht]
 \centering
 \includegraphics[width=0.5\textwidth]{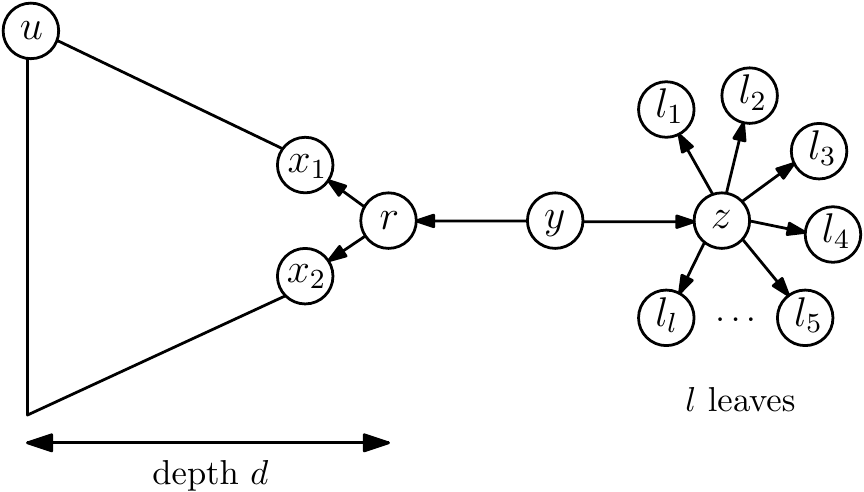}
 \caption{Illustration of the $d$-$l$-Tree-Star $G_{d,l}$.}
 \label{fig:tree_star}
\end{figure}

\noindent Clearly, we have that $|V(G_{d,l})| = 2^{d+1}+l+1$, so for any $n'$ we can choose $d$ and $l$ such that $|V(G_{d,l})| = n \geq n'$ holds. 

Observe that no agent can swap any edge to decrease her cost and, since $G_{d,l}$ is a tree, no agent can improve by deleting any edge. We will focus on agent $u$, that is, any leaf of the tree $T_d$. Note, that $u$ has eccentricity of $d+3$. For agent $u$, we have
\begin{equation}\dist_u(G_{d,l}) = \dist_u(T_d) + 2d+3 + l(d+3) = 2^{d+1}(2d-3)+3d+l(d+3)+9, \label{eq:cost_leaf}
 \end{equation}
where we have used Lemma~\ref{lem_distcost_Td}.

In the following, we will only consider $d$-$l$-Tree-Stars $G_{d,l}$ where $d$ is even. Let $G_{d,l}''$ be the network which is obtained from $G_{d,l}$ if agent $u$ buys the edge $uz$ in $G_{d,l}$. We will need agent $u$'s distance cost in $G_{d,l}''$.
\begin{lemma}\label{lem_newleafcost}
If $d$ is even then we have $$\dist_u(G_{d,l}'') = 2^{d+1}d + d -3\cdot 2^{\frac{d}{2}+2} + 2^{d+2} + 2l + 9 .$$
\end{lemma}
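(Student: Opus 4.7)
The plan is to characterize the shortest-path distances from $u$ in $G_{d,l}''$ and then to reuse the subtree decomposition from Lemma~\ref{lem_distcost_Td}.

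First, I would observe that after $u$ buys the edge $uz$, the only new routing option out of $u$ is the path $u \to z \to y \to r$ of length $3$, which can then be extended through $T_d$. Consequently, for every tree vertex $v \in V(T_d)$ the new distance equals $\min\{d_{T_d}(u,v),\; 3 + d_{T_d}(r,v)\}$, while $d_{G''}(u,z)=1$, $d_{G''}(u,y)=2$, and each of the $l$ star leaves is now at distance $2$. These ``non-tree'' contributions total $1 + 2 + 2l$.

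Next, I would split the remaining tree distances exactly as in the proof of Lemma~\ref{lem_distcost_Td}. Label the $u$-$r$ path by $u = u_0, u_1, \ldots, u_d = r$ and let $w_i$ be the off-path child of $u_i$, rooting a complete binary tree of depth $i-1$, so that $H_{i-1,i}$ consists of $u_i$ together with the subtree at $w_i$. A vertex at depth $j$ in the subtree of $w_i$ has tree distance $i+1+j$ from $u$ but new distance $d - i + 4 + j$ through the shortcut; the inequality $i+1+j \leq d-i+4+j$ reduces to $2i \leq d + 3$, and since $d = 2m$ is even this holds exactly when $i \leq m+1$. The same threshold governs the path vertex $u_i$ itself. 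Hence for $i \leq m+1$ the slice contributes its old value $\dist_u(H_{i-1,i})$ from Lemma~\ref{lem_parttreecost}, while for $i \geq m+2$ every vertex in the slice uses the shortcut.

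For the large-$i$ slices, I would evaluate the contribution using the standard identity $\sum_{j=0}^{i-1} j \cdot 2^j = (i-2) 2^i + 2$ to obtain
\[
(d-i+3) + \sum_{j=0}^{i-1} 2^j (d - i + 4 + j) \;=\; 2^i (d+2) + 1.
\]
Assembling all pieces yields
\[
\dist_u(G_{d,l}'') \;=\; 3 + 2l + \sum_{i=1}^{m+1} \dist_u(H_{i-1,i}) + \sum_{i=m+2}^{d}\bigl(2^i(d+2) + 1\bigr),
\]
and the claim follows by evaluating the two sums via Lemma~\ref{lem_parttreecost} and the standard $\sum i \cdot 2^i$ identity; on substituting $d = 2m$, the combined coefficient $2m - 1 - (d+2)$ of $2^{m+2}$ collapses to exactly $-3$, which is the origin of the term $-3 \cdot 2^{d/2+2}$ in the statement. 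The main obstacle is purely bookkeeping: I need to pin down the switching index correctly using the parity of $d$, correctly attribute the path vertex $u_i$ and the off-path subtree of each slice to the right regime, and carefully manage the telescoping signs in the arithmetic-geometric sums so that no off-by-one error propagates into the final expression.
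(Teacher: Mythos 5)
Your proposal is correct and follows essentially the same route as the paper: decompose $T_d$ into the slices $H_{i-1,i}$ along the $u$--$r$ path, locate the switching index $d/2+1$ where the shortcut $u\to z\to y\to r$ becomes the shorter route, and sum the two regimes (your direct evaluation of a far slice as $2^i(d+2)+1$ is exactly what the paper obtains by reading that slice as $H_{d-i+2,i}$ via Lemma~\ref{lem_parttreecost}). Your version even makes explicit the threshold justification $2i\leq d+3$ that the paper leaves implicit, and the final bookkeeping checks out.
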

\begin{proof}
 We will prove the statement by using Lemma~\ref{lem_parttreecost}. We have
 \begin{align*}
  \dist_u(G_{d,l}'') &= 3 + 2l + \sum_{i=1}^{\frac{d}{2}+1}\dist_u(H_{i-1,i}) + \sum_{i=3}^{\frac{d}{2}+1}\dist_u(H_{d-i+2,i})\\
  &= 3+2l + \sum_{i=1}^{\frac{d}{2}+1}\left(2^i(2i-1)+1\right) + \sum_{i=3}^{\frac{d}{2}+1}\left(2^{d-i+3}(d+2)+1\right)\\
  &= 3 + 2l + 2^{d+1}d + d - 3\cdot 2^{\frac{d}{2}+2} + 2^{d+2} + 6.
 \end{align*}
\end{proof}
\noindent Towards a high lower bound for the approximation, we will choose $l$ large enough such that buying the edge $uz$ will be the best possible edge-purchase of agent $u$.

\begin{lemma}\label{lem_lowerbound_for_l}
 If $d$ is even and $l\geq 2^{d+1} - 2^{\frac{d}{2}+2}$  then buying the edge $uz$ is the best possible single edge-purchase of agent $u$.
\end{lemma}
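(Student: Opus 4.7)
The plan is to enumerate every non-neighbor $t$ of $u$ in $G_{d,l}$ and compare the distance-cost saving $\Delta(ut) := \dist_u(G_{d,l}) - \dist_u(G_{d,l}+ut)$ agent $u$ obtains by buying the single edge $ut$, showing $\Delta(uz) \geq \Delta(ut)$ for every such $t$. Combining equation~\eqref{eq:cost_leaf} with Lemma~\ref{lem_newleafcost} will give the closed form
\[
\Delta(uz) = 2^{d+1}(d-5) + 2d + l(d+1) + 3\cdot 2^{d/2+2},
\]
whose distinguishing feature is the coefficient $d+1$ of $l$: each of the $l$ star leaves has its distance from $u$ slashed from $d+3$ all the way down to $2$.

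I would split the agents into the \emph{star part} $\{y,z\}$ together with the leaves of $S_l$, and the \emph{tree part} $T_d$, and analyze the two contributions to $\Delta(ut)$ separately. The star-side accounting is short: for any $t\neq z$, every leaf of $S_l$ sits at distance at least $2$ from $t$, so after buying $ut$ each such leaf has distance at least $3$ from $u$. Consequently the $l$-coefficient in the star-side saving of $\Delta(ut)$ is at most $d$, giving $\Delta(uz)$ a ``free'' linear-in-$l$ surplus of at least $l$ over $\Delta(ut)$ on the star side, leaving only a $d$-dependent deficit on the tree side to control.

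The main obstacle is the tree-side comparison, where the strongest competitor is $t=r$: buying $ur$ creates a length-$1$ shortcut into $T_d$, while buying $uz$ only creates a length-$3$ back-path $u\to z\to y\to r$. To bound the excess tree saving of $ur$ over $uz$, I would parametrize each $v\in T_d$ by the depth $a$ of $\mathrm{LCA}(u,v)$ and exploit the structural identity $d_{T_d}(u,v) - (1+\mathrm{depth}(v)) = d - 2a - 1$, which is constant across the $2^{d-a}$ nodes sharing that LCA depth. Summing the per-node excess saving (which is $2$ when $a \leq d/2 - 2$ and vanishes once $a \geq d/2$) as a geometric series in $a$ yields a bound of the form $2^{d+2} - c\cdot 2^{d/2+1}$, which together with the closed-form term $3\cdot 2^{d/2+2}$ in $\Delta(uz)$ and the linear advantage from the star side turns the comparison $\Delta(uz)\geq\Delta(ur)$ into precisely the hypothesis $l \geq 2^{d+1} - 2^{d/2+2}$.

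The remaining candidates follow from the same LCA bookkeeping with strictly better constants: $t = y$ gives a length-$2$ back-path into $T_d$ and an $l$-coefficient of $d$ in the star; a leaf $t$ of $S_l$ gives an $l$-coefficient of $d$ in the star together with an even weaker tree shortcut; and any tree node $t \neq r$ gives tree savings strictly dominated by those of $t = r$ (the depth-$a$ LCA argument degrades monotonically as $t$ moves away from the root). Verifying each of these cases completes the proof that $uz$ is the best single edge-purchase available to $u$.
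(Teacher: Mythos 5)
Your overall strategy---enumerating every target $t$ and comparing $\Delta(ut)$ with $\Delta(uz)$ by separating star-side and tree-side contributions---is sound, and in fact more careful than the paper's own proof, which only computes $\dist_u$ after buying $uy$ (via Lemma~\ref{lem_newleafcost} and the analogous formula for $\dist_u(G_{d,l}')$) and then asserts that this single comparison settles all targets in $T_d$. The trouble is that your key quantitative claim fails: the comparison against $t=r$ does \emph{not} reduce to the hypothesis. Carrying out your own LCA bookkeeping, the excess tree-side saving of $ur$ over $uz$ equals $\sum_{a=0}^{d/2-2} 2\cdot 2^{d-a} + 2^{d/2+1} = 2^{d+2} - 3\cdot 2^{d/2+1}$ (the stratum $a=d/2-1$ contributes $1$ per node, not $0$), while the star-side surplus of $uz$ over $ur$ is only $2l+2$. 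Hence $\Delta(uz)\ge\Delta(ur)$ requires $l \ge 2^{d+1} - 3\cdot 2^{d/2} - 1$, which strictly exceeds the stated threshold $2^{d+1} - 2^{\frac{d}{2}+2} = 2^{d+1}-4\cdot 2^{d/2}$ by $2^{d/2}-1$. Concretely, for $d=4$ and $l=16$ (exactly the boundary of the hypothesis) a direct count gives a distance-cost saving of $110$ for buying $ur$ versus only $104$ for buying $uz$, so $uz$ is not the best purchase there. Your plan therefore cannot close as written; what your more careful accounting actually exposes is that the binding competitor is $r$, not $y$, so the lemma only becomes true under the slightly stronger condition $l\ge 2^{d+1}$ that all later applications assume anyway. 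The paper's route compares $uz$ only against $uy$, and for that pair the threshold is indeed exactly the stated hypothesis.

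A second, more repairable, problem is your claim that every tree node $t\ne r$ yields tree-side savings dominated by those of $t=r$. This fails for targets in the half of $T_d$ not containing $u$: buying an edge to $x$ (the child of $r$ off $u$'s root path) puts $u$ at distance $\delta$ from every depth-$\delta$ node of $x$'s subtree, for a tree-side saving of roughly $d\cdot 2^{d}$, which exceeds that of $ur$ (for $d=6$: $446$ versus $432$). Such candidates can only be eliminated by charging their deficit to the star side, i.e.\ by the exchange argument of Lemma~\ref{lem_bestmove_treeagent} (moving the target one step closer to $z$ gains at least $l+1$ and loses at most $2^{d+1}$), which again needs $l\ge 2^{d+1}$ rather than the lemma's stated hypothesis.
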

\begin{proof}
 Let $G_{d,l}'$ be the network which is obtained if agent $u$ buys the edge $uy$ in network $G_{d,l}$. To prove the statement, it suffices to choose $l$ large enough such that $\dist_u(G_{d,l}'')\leq \dist_u(G_{d,l}')$ holds, since in that case no edge towards a vertex in the tree $T_d$ can yield a larger distance cost decrease for agent $u$.

 First, we need the value of $\dist_u(G_{d,l}')$. We have
 \begin{align*}
  \dist_u(G_{d,l}') &= 3 + 3l + \sum_{i=1}^{\frac{d}{2}+1}\dist_u(H_{i-1,i}) + \sum_{i=2}^{\frac{d}{2}}\dist_u(H_{d-i+1,i})\\
  &= 3+ 3l + \sum_{i=1}^{\frac{d}{2}+1}\left(2^i(2i-1)+1\right) + \sum_{i=2}^{\frac{d}{2}}\left(2^{d-i+2}(d+1)+1\right)\\
  &= 3l + 2^{d+1}d + d -2^{\frac{d}{2}+3} + 2^{d+1}+9.
 \end{align*}
 Thus, we have
 \begin{align*}
  &\dist_u(G_{d,l}'') \leq \dist_u(G_{d,l}')\\
  \iff &  2^{d+1}d + d - 2^{\frac{d}{2}+2} - 2^{\frac{d}{2}+3} + 2^{d+2} + 9 - \left(2^{d+1}d + d -2^{\frac{d}{2}+3} + 2^{d+1}+9\right)  \leq l\\
  \iff & 2^{d+1} - 2^{\frac{d}{2}+2} \leq l.
 \end{align*}
\end{proof}
\noindent Before we can prove the lower bound, we have to investigate for which edge-prices $\alpha$ the network $G_{d,l}$ is in \kGE. For this, we investigate which agent in $G_{d,l}$ can gain most by $k$-local greedy move and then we analyze the maximum possible cost decrease of this agent. We start by analyzing the best possible $k$-local greedy moves by agents in $V(T_d)$.
\begin{lemma}\label{lem_bestmove_treeagent}
 Let $l \geq 2^{d+1}$. For any agent $v \in V(T_d)$ an optimal single $k$-local edge-purchase reduces the distance to vertex $z$ by the maximum possible amount.
\end{lemma}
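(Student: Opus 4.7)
The plan is to analyze, for each candidate $w\in N_k(v)$ of a single $k$-local edge-purchase by agent $v\in V(T_d)$, the total distance reduction $\Delta_{\text{total}}(w) := \dist_v(G_{d,l}) - \dist_v(G_{d,l}+vw)$, and to show that any maximizer of $\Delta_{\text{total}}$ also maximizes $\Delta_z(w) := d(v,z) - d_{\text{new}}(v,z)$.

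The central structural observation is that $z$ is a cut vertex of $G_{d,l}$ separating the $l$ star leaves $L$ from the rest. Therefore, for any $w\notin L$ and any $\ell\in L$ the new shortest $v$--$\ell$ path still goes through $z$ and $\Delta_\ell(w) = \Delta_z(w)$ exactly; and for $w = \ell^{\ast}\in L$ only $\ell^{\ast}$ gains an additional $2$, because every other leaf is reached by the length-$3$ path $v\to\ell^{\ast}\to z\to\ell$. These two facts yield the decomposition
\[
\Delta_{\text{total}}(w) \;=\; (l+1)\,\Delta_z(w)\;+\;2\,\mathbb{1}[w\in L]\;+\;\tau(w),\qquad \tau(w):=\sum_{x\in V(T_d)\cup\{y\}}\Delta_x(w)\;\ge\;0,
\]
so maximizing $\Delta_{\text{total}}$ becomes a trade-off between the multiplier $l+1$ on $\Delta_z(w)$ and the tree-side contribution $\tau(w)$.

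Next I would enumerate the candidates $w$ by the region they occupy (a tree vertex at some depth, the vertex $y$, the vertex $z$, or a star leaf). Using $d(w,z) = d(w,r)+2$ for $w\in V(T_d)$ and $d(w,z)\in\{0,1\}$ otherwise, the value $\Delta_z(w) = \max\{0, d(v,z)-1-d(w,z)\}$ is explicit in each region, and a $\Delta_z$-maximizer $w^{\ast}$ is identified as the candidate in $N_k(v)$ with minimum $d(w,z)$ (i.e., $z$ if $z\in N_k(v)$, otherwise $y$ or a star leaf, otherwise the shallowest ancestor of $v$ lying in $N_k(v)$).

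The main obstacle is controlling $\tau(w)$ for competing candidates, in particular for deep tree vertices $w$ which can shortcut many tree distances at once. My plan is to exploit the balanced binary-tree structure of $T_d$: grouping $x\in V(T_d)$ by the depth $h$ of $\mathrm{lca}(v,x)$, a direct calculation makes $\Delta_x(w)$ and hence $\tau(w)$ an explicit geometric sum in $h$ and in the depth of $w$. The worst-case comparison is then $w=r$ versus $w^{\ast} = z$, whose excess satisfies $\tau(r) - \tau(z) \le 2^{d+2}$ (essentially $2$ per vertex among the roughly $2^{d+1}$ vertices that benefit from the shortcut through $r$ but not through $z$), and it is matched by the gain $(l+1)(\Delta_z(w^{\ast})-\Delta_z(r)) = 2(l+1)$ once $l\ge 2^{d+1}$. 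The remaining comparisons -- star-leaf targets trading the additive $+2$ for a smaller $\Delta_z$, and non-extremal tree vertices losing on $\Delta_z$ while contributing strictly less to $\tau$ than $r$ does -- follow from the same case analysis, and together they establish $\Delta_{\text{total}}(w^{\ast})\ge \Delta_{\text{total}}(w)$ for every $w\in N_k(v)$, proving the lemma.
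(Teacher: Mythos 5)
Your decomposition $\Delta_{\text{total}}(w) = (l+1)\,\Delta_z(w) + \tau(w)$ (plus the additive $2$ when $w$ is a star leaf) is correct, and you have isolated exactly the right numerical leverage: the star side carries $l+1 \geq 2^{d+1}+1$ vertices against at most $2^{d+1}$ vertices in $V(T_d)\cup\{y\}$. The gap is in the case analysis that is supposed to finish the argument. First, the claimed ``worst-case comparison $w=r$ versus $w^{\ast}=z$'' ignores the $k$-locality constraint: for an agent $v$ deep in $T_d$ and small $k$, neither $r$ nor $z$ lies in $N_k(v)$, so the benchmark $w^{\ast}$ (the $\Delta_z$-maximizer \emph{within} $N_k(v)$) and the competing candidates both depend on $v$ and $k$, and what must be verified is $\tau(w)-\tau(w^{\ast}) \leq (l+1)\left(\Delta_z(w^{\ast})-\Delta_z(w)\right)$ (up to the $\pm 2$ leaf correction) for \emph{every} $w\in N_k(v)$. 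Second, the assertion that ``non-extremal tree vertices contribute strictly less to $\tau$ than $r$ does'' is neither proved nor the relevant comparison: targets $w$ in $v$'s own subtree $S(v)$, or in a sibling subtree, have $\Delta_z(w)=0$ yet $\tau(w)>0$ (for instance, $v$ a child of $r$ and $w$ the other child gives $\tau(w)\geq 2^{d}-1$), and these must be beaten by $(l+1)\Delta_z(w^{\ast})+\tau(w^{\ast})$, not measured against $\tau(r)$. Your geometric-sum program could in principle cover all of these configurations, but as written the bulk of that work is asserted rather than carried out.

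The paper closes precisely this step with a local exchange argument that sidesteps computing $\tau$ at all: if a target $w\notin S(v)$ does not reduce $d(v,z)$ by the maximum amount achievable within $N_k(v)$, replace it by a vertex $w'\in N_k(v)$ that is one step closer to $z$; this gains at least $1$ on each of the $l+1$ vertices of the star while losing at most $1$ on each of the at most $2^{d+1}$ vertices of $V(T_d)\cup\{y\}$, hence is strictly improving since $l\geq 2^{d+1}$. Targets inside $S(v)$ are handled separately: their best possible gain is shown to be $2^{d-1}-1$, which is dominated by the gain of at least $l+1$ obtained from any purchase that brings $v$ at least one step closer to $z$. If you want to keep your global decomposition, you should replace the $r$-versus-$z$ comparison and the $\tau(r)$ benchmark by this uniform one-step exchange (taking care that the exchanged-to vertex stays inside $N_k(v)$); otherwise the proof as proposed does not go through.
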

\begin{proof}
 Let $S(x)$ be the set of vertices in the subtree rooted at $x$ in $T_d$.

 First of all, we consider what happens, if agent $v$ buys an edge within her $k$-neigh\-bor\-hood towards some vertex $w \in S(v)$. We claim that the target vertex $w\in S(v)$, which yields the largest cost decrease for $v$, must have distance $2$ towards $v$, independent of the value of $k$, as long as $k\geq 2$ holds. For this, assume towards a contradiction that an edge to $w\in S(v)$, with $d_{G_{d,l}}(v,w) = j \geq 3$, yields the largest distance cost decrease for agent $v$. Let $w' \in S(v)$ be the neighbor of $w$ which lies on $w$'s shortest path to $v$ and let $w''$ be the other neighbor of $w'$ having the same distance to $v$ as vertex $w$. We claim that buying the edge $vw'$ yields a strictly larger distance cost decrease for $v$ than buying the edge $vw$. With edge $vw'$ instead of edge $vw$ agent $v$'s distances to all vertices in $S(w)$ increase by $1$ but her distances to vertices in $S(w'')$ decrease by $1$. But her distance to $w'$ decreases by one as well. Thus, buying edge $vw'$ yields strictly less cost than buying edge $vw$. Note that this argument can be applied if $d_{G_{d,l}}(v,w) \geq 3$. Thus, since $v$ already owns an edge towards her neighbors in $T_d$, the best possible purchase is buying an edge towards a vertex in distance $2$ in $S(v)$. With this, the maximum possible distance cost decrease for any vertex $v \in V(T_d)$ is $2^{d-1}-1$, which is obtained if the root $r$ buys an edge towards vertex $t \in S(r)$ having distance $2$ to $r$. This move decreases $r$'s distances to all agents in $S(t)$ by $1$ and we have $|S(t)| = 2^{d-1}-1$, since $T_d$ is a complete binary tree of depth $d$.

 A similar argument can be applied if agent $v$ buys an edge towards some vertex $w \notin S(v)$, which does not reduce $v$'s distance to $z$ by the maximum possible amount. Let $d_x = d_{G_{d,l}}(x,z)$ for any $x \in V(G_{d,l})$. There are two cases:
 \begin{enumerate}
  \item If $d_v \leq k$, then $w=z$ must hold. This is true since if $w\neq z$, there is a neighbor $w'$ of $w$ where $d_{w'}<d_w$ holds. Connecting to $w'$ instead of $w$ reduces $v$'s distances to all $l+1 \geq 2^{d+1}+1$ vertices of the star by $1$ whereas it increases $v$'s distances to at most $2^{d+1}$ vertices, that is, to all vertices in $V(T_d)\cup\{y\}$, by $1$.
  \item If $d_v > k$, then $d_w = d_v-k$ must hold. If $d_w \geq d_v-k+1$, then this implies that there is some vertex $w^*$ in $v$'s $k$-neighborhood for which $d_{w^*} = d_v-k$ holds and we have $w \neq w^*$. Thus there must be a neighbor $w'$ of $w$ which is closer to $w^*$, where $w' = w^*$ is possible. If $v$ buy the edge $vw^*$ instead of the edge $vw$, then $v$'s distances to at least $l+1 = 2^{d+1}+1$ many vertices decrease by $1$ and her distances to at most $2^{d+1}-1 = |S(r)|$ many vertices increase by $1$.
 \end{enumerate}
 If follows, that the best edge-purchase towards any vertex $w\notin S(v)$ must decrease $v$'s distance to $z$ by the maximum possible amount. Since this amount is at least $1$, because no vertex of $V(T_d)$ is a neighbor of $z$, which yields a distance cost decrease of at least $l+1 \geq 2^{d+1}+1$. Thus, any edge-purchase towards some vertex in $S(v)$ cannot be optimal.
\end{proof}
\noindent Now we analyze which agent can gain most by a $k$-local greedy move.
\begin{lemma}\label{lem_leaflargestdecrease}
 Let $G_{d,l}$ be a $d$-$l$-Tree-Star where $l\geq 2^{d+1}$. Any leaf agent $u$ of the tree $T_d$ is an agent in $V(G_{d,l})$ who can decrease her cost most by performing a $k$-local greedy move.
\end{lemma}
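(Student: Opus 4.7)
The plan is to show that for every agent $v \in V(G_{d,l})$, the largest cost decrease $\Delta_v$ achievable by a single $k$-local greedy move is at most the corresponding quantity $\Delta_u$ for the tree leaf $u$. Since $G_{d,l}$ is a tree, no agent can profit from an edge-deletion (this disconnects the network) or from an edge-swap (as already observed in the paragraph preceding equation~(\ref{eq:cost_leaf})), so it suffices to analyze single edge-purchases.

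The first main step handles tree agents $v \in V(T_d)$. By Lemma~\ref{lem_bestmove_treeagent}, the optimal $k$-local purchase for such a $v$ is an edge to the vertex $w^\ast$ lying at distance $\min(k, d_{G_{d,l}}(v,z))$ along the unique shortest path from $v$ to $z$. If $v$ sits at depth $d'$, then $d_{G_{d,l}}(v,z) = d'+2$, so this purchase reduces $v$'s distance to each of the $l+1$ star-side vertices $\{z\}\cup\{\text{star leaves}\}$ by exactly $\delta(d',k) \coloneqq \min(k-1,\, d'+1)$. Because $\delta(\cdot,k)$ is nondecreasing in its first argument, the star-side contribution to $\Delta_u$ is at least the star-side contribution to $\Delta_v$. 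The remaining contribution comes from the tree-internal vertices whose shortest path from $v$ is shortened by the new edge; a direct count shows that the set of such benefitted vertices strictly shrinks as $d'$ decreases, because the ``stuck'' subtree of $w^\ast$ containing $v$ grows. Together with the hypothesis $l\geq 2^{d+1}$, which guarantees that any tree-internal advantage an internal agent might enjoy is dominated by the star-side slack $(l+1)\bigl(\delta(d,k)-\delta(d',k)\bigr)$, this yields $\Delta_v \leq \Delta_u$ for every tree agent.

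The second main step bounds $\Delta_v$ for the three remaining agent classes. For the star center $z$, any purchase leaves distances to the star leaves and to $y$ unchanged (they are already $1$), and for tree vertices the savings are crudely at most those of buying edge $zr$, giving $\Delta_z \leq 2^{d+1}-2$. For agent $y$, an analogous counting shows $\Delta_y = \mathcal{O}(2^{d+1})$. For a star leaf $x$, the best purchase is an edge into the tree, saving at most $2$ per tree vertex and at most $1$ per other star leaf, hence $\Delta_x \leq 2(2^{d+1}-1) + l$. In each case $\Delta_u \geq (l+1)\,\delta(d,k) + (\text{tree-internal savings of }u)$ is at least as large, using $l\geq 2^{d+1}$ and that $u$ picks up nontrivial tree-internal savings already for $k\geq 2$.

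The expected main obstacle is the careful counting of tree-internal contributions in the first step, split by the sub-cases $d'+2 \leq k$ (where $v$ can buy edge $vz$ directly and picks up additional savings to $y,r$ and ancestor subtrees) versus $d'+2 > k$ (where $v$'s optimal purchase is an ancestor inside $T_d$). In both sub-cases, the comparison to $u$ reduces to the two monotonicities that (i) $\delta(d',k) \leq \delta(d,k)$ and (ii) the benefitted tree-internal population for $v$'s optimal purchase is at most that of $u$'s optimal purchase; both follow from standard tree calculations once the specific $w^\ast$ for each $v$ is pinned down, so the proof reduces to verifying these monotonicities rather than evaluating explicit closed forms.
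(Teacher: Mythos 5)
There is a genuine gap in your treatment of the star leaves, and it occurs exactly where the comparison has no slack. Your bound $\Delta_x \leq 2(2^{d+1}-1) + l$ for a star leaf $x$ is too weak: the correct value is $\Delta_x = 2|V(T_d)| = 2^{d+2}-2$ (achieved by buying $xr$), because a star leaf's distance to every other star leaf is $2$ via the center $z$ and cannot be reduced by an edge into the tree, so the ``at most $1$ per other star leaf'' term contributes nothing and the $+l$ should not be there. This matters because the leaf $u$'s guaranteed decrease for $k=2$ is exactly $\Delta_u = l + 2^{d+1} - 2$, which equals $2^{d+2}-2$ when $l = 2^{d+1}$ — the lemma holds with \emph{equality} in this boundary case. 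Against your inflated bound $2^{d+2}-2+l$ the desired inequality $\Delta_u \geq \Delta_x$ simply fails, so your step two does not close for $k=2$, which is a case the lemma must cover.

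The first step also rests on two monotonicity claims that are asserted rather than proven, and the formulation ``the set of benefitted vertices strictly shrinks as $d'$ decreases'' is not by itself sufficient: the benefitted vertices save \emph{different amounts} (between $1$ and $k-1$), so one must compare the full weighted sums, not just cardinalities of sets that moreover live relative to different base agents. Note also that your fallback — absorbing any tree-internal advantage into the star-side slack $(l+1)\bigl(\delta(d,k)-\delta(d',k)\bigr)$ — is vacuous for every agent at depth $d' \geq k-2$, since there $\delta(d',k)=\delta(d,k)=k-1$ and the slack is zero; for those agents the tree-internal comparison must carry the entire weight. The paper avoids both difficulties with a descent argument: given a non-leaf $v$ with best move $vw$, it passes to a child $v'$ and the move $v'w'$ (with $w'$ the neighbor of $w$ toward $v$) and shows $A_i \subseteq A_i'$ for each savings level $i$, so the weighted sums are compared level by level and iterating reaches a leaf. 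Your decomposition could in principle be made to work, but the ``standard tree calculations'' you defer are precisely the technical content of the lemma.
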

\begin{proof}
 $G_{d,l}$ is a tree, which implies that no agent can delete any edge and, by the special structure of the edge-ownership, it is obvious that no agent can improve by swapping any own edge. Thus, we only have to analyze edge-purchases. 

 First we consider agents in $V(T_d)$. By Lemma~\ref{lem_bestmove_treeagent}, we have that the best possible edge-purchase of such an agent must decrease this agent's distance to $z$ by the maximum possible amount. We claim that leaf agents of $T_d$ can achieve the largest distance cost decrease among all agents in $V(T_d)$. For this, assume towards a contradiction that there is some non-leaf agent $v \in V(T_d)$ who achieves the largest distance cost decrease by buying an edge towards a vertex $w$ in her $k$-neighborhood. By Lemma~\ref{lem_bestmove_treeagent}, $d_{G_{d,l}}(w,z) = d_{G_{d,l}}(v,z)-k$ must hold. Let $v'$ be a neighbor of $v$ having larger distance to the root $r$. Such vertex exists, since $v$ is not a leaf of $T_d$. Let $w'$ be a neighbor of $w$ which lies on the shortest path from $v$ to $w$. We claim that agent $v'$ can achieve a strictly larger cost decrease by buying the edge $v'w'$ than agent $v$ by buying the edge $vw$. Let $A_i \subset V(G_{d,l})$ be the set of vertices to which agent $v$ decreases her distance by exactly $i$ by buying the edge $vw$ and let $A_i' \subset V(G_{d,l})$ be the set of vertices to which agent $v'$ decreases her distance by $i$ by buying the edge $v'w'$. Thus, agent $v$ has distance cost decrease of $\sum_{i=1}^{k-1}i\cdot|A_i|$ by buying edge $vw$ and agent $v'$ has distance cost decrease of $\sum_{i=1}^{k-1}i\cdot|A_i'|$ by buying edge $v'w'$. Since $w'$ lies on the shortest path from agent $v$ to $w$ and since $w$, by Lemma~\ref{lem_bestmove_treeagent}, lies on the shortest path from $v$ to $z$, we have that $A_i \subset A_i'$ holds for all $1\leq i \leq k-1$. Thus, the claim follows.

 It remains to compare the maximum possible distance cost decrease by any agent of $V(G_{d,l})\setminus V(T_d)$ with the distance cost decrease of some leaf agent of $T_d$. Let $x \in V(G_{d,l})\setminus V(T_d)$ and let $u$ be some leaf agent of $T_d$. We show that the maximum possible distance cost decrease of agent $x$ for any $k$ is at most the distance cost decrease of agent $u$ for $k=2$. Since $u$'s maximum distance cost decrease for $k\geq 3$ must be at least her maximum distance cost decrease for $k=2$, this then finishes the proof.

 The maximum possible decrease in distance cost for any agent $x \in V(G_{d,l})\setminus V(T_d)$ and for any $k$ is achieved if some leaf agent of the star, say $l_1$, buys the edge $l_1r$. This can be seen by an analogous argument as in the first part of the proof of Lemma~\ref{lem_bestmove_treeagent}. With this move, agent $l_1$'s distance cost decreases by $2|V(T_d)| = 2^{d+2}-2$.

 In comparison, if agent $u$ buys an edge to $u^*$, where $u^*$ is on $u$'s shortest path to $z$ and $u^*$ has distance $2$ towards $u$ in $G_{d,l}$, then agent $u$'s distance to all but $3$ vertices decreases by $1$. Thus, $u$'s distance cost decrease is $|V(T_d)| + 2 + l -3 = 2^{d+1}-2 + 2^{d+1} = 2^{d+2}-2$.
\end{proof}
\noindent We are left to analyze the exact amount of agent $u$'s maximum possible distance cost decrease, depending on $d,k$ and $l$.
\begin{lemma}\label{lem_maxcostdecrease}
Let $G_{d,l}$ be a $d$-$l$-Tree-Star. The maximum possible distance cost decrease achieved by any agent for any $2 \leq k\leq d$ and $l\geq 2^{d+1}$ is
$$\Delta_{d,k,l} = (k-1)\left(l+2^{d+1}\right) -2k\left(2^{\left\lceil\frac{k}{2}\right\rceil}-1\right) + \left\lfloor\frac{k}{2}\right\rfloor 2^{\left\lceil\frac{k}{2}\right\rceil+2}-2^{k+2}+3^{\left\lceil\frac{k}{2}\right\rceil+1}-2.
 $$
\end{lemma}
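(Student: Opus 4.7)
\begin{sketch}
The plan is to combine Lemmas~\ref{lem_leaflargestdecrease} and~\ref{lem_bestmove_treeagent} to pin down the agent and the move that realize the maximum decrease, and then compute that decrease by partitioning the vertex set according to how the single new edge reshapes shortest paths.

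By Lemma~\ref{lem_leaflargestdecrease} it suffices to consider a leaf $u$ of $T_d$. Under the edge-ownership convention of $G_{d,l}$ (tree edges are owned by the endpoint closer to the root, all star edges by $z$, and the two bridging edges by $y$), such a leaf $u$ owns no edge at all, hence her only $k$-local greedy moves are edge purchases into her $k$-neighborhood. Applying Lemma~\ref{lem_bestmove_treeagent} to $u$ (using $d_u = d+2 > k$) identifies the optimal target as the unique ancestor $u^*$ of $u$ at distance exactly $k$ in $T_d$. I would write $u = p_k, p_{k-1}, \ldots, p_0 = u^*$ for this path in $T_d$ and, for $0 \leq i < k$, denote by $q_i$ the off-path child of $p_i$, so that $T(q_i)$ is a complete binary tree of depth $k-i-1$ with $|T(q_i)| = 2^{k-i}-1$; adding the edge $uu^*$ creates a cycle of length $k+1$ through these vertices.

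Next I would compute the distance-cost decrease at $u$ from this single purchase by partitioning $V(G_{d,l})$ into vertices outside the subtree $T(u^*)$ (all tree vertices not below $u^*$ together with $y$, $z$, and all $l$ star leaves) and vertices inside $T(u^*)$. For each vertex $v$ outside $T(u^*)$ the old shortest $u$-$v$ path enters $T(u^*)$ only through $u^*$, so the new edge saves exactly $k-1$. Inside $T(u^*)$ one compares, for a vertex in $T(q_i)$, the old route $u \to \ldots \to p_i \to q_i \to \ldots \to v$ of length $k-i+1+d_G(q_i,v)$ with the new route $u \to u^* \to \ldots \to p_i \to q_i \to \ldots \to v$ of length $i+2+d_G(q_i,v)$, and similarly for $v = p_i$; this shows that every vertex of $\{u^*\} \cup T(q_0)$ loses $k-1$, every vertex of $\{p_i\} \cup T(q_i)$ for $i \geq 1$ loses $\max(0,\,k-2i-1)$, and the cutoff is $i \leq \lfloor (k-1)/2 \rfloor$.

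Adding the three contributions gives an expression of the form
\[
    (k-1)\bigl(2^{d+1} + l + 2 - 2^{k+1}\bigr) \;+\; 2^k(k-1) \;+\; \sum_{i=1}^{\lfloor (k-1)/2 \rfloor} 2^{k-i}(k-2i-1),
\]
and the stated closed form for $\Delta_{d,k,l}$ is then obtained by evaluating the finite arithmetic-geometric sum on the right and collecting terms. I expect this last bookkeeping to be the main technical obstacle: the sum naturally splits by the parity of $k$, which is exactly the source of the $\lceil k/2\rceil$ and $\lfloor k/2\rfloor$ occurrences in the final expression, and one has to verify that a single consolidated formula covers both parities. A brief concluding check confirms that the assumption $l \geq 2^{d+1}$ invoked by Lemma~\ref{lem_leaflargestdecrease} is exactly what rules out larger decreases either by agents outside $V(T_d)$ or by any non-ancestor target for $u$, so that $\Delta_{d,k,l}$ is indeed the maximum over all agents and all $k$-local greedy moves.
\end{sketch}
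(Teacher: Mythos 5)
Your proposal is correct and follows essentially the same route as the paper: reduce via Lemmas~\ref{lem_leaflargestdecrease} and~\ref{lem_bestmove_treeagent} to a leaf of $T_d$ buying an edge to its ancestor at distance $k$, partition the vertices by how much their distance to $u$ drops, and sum — your intermediate expression $(k-1)\bigl(2^{d+1}+l+2-2^{k+1}\bigr)+2^k(k-1)+\sum_{i}2^{k-i}(k-2i-1)$ is algebraically identical to the one appearing in the paper's proof. The only caveat concerns the final simplification you defer: carrying it out yields a term $3\cdot 2^{\lceil k/2\rceil+1}$ where the lemma statement writes $3^{\lceil k/2\rceil+1}$ (one can check the discrepancy already at $k=2$ against the paper's own value $|V(G_{d,l})|-3$), so this is an apparent typo in the stated closed form and your sum will not collect into the formula exactly as printed.
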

\begin{proof}
By Lemma~\ref{lem_leaflargestdecrease} and Lemma~\ref{lem_bestmove_treeagent}, it suffices to compute the maximum possible distance cost decrease which any leaf agent $u$ of $T_d$ can achieve by buying one edge towards the vertex $w$ which lies on $u$'s shortest path to $z$. Moreover this edge $vw$ must decrease agent $u$'s distance to $z$ by the maximum possible amount.

Let $G_{d,l}^*$ be the network $G_{d,l}$ after agent $u$ has bought the edge $vw$.
Thus, agent $u$'s distance cost decrease is $$\Delta_{d,k,l} = \dist_u(G_{d,l}) - \dist_u(G_{d,l}^*) = \begin{cases}|V(G_{d,l})|-3, &\text{ if } k=2\\
              2(|V(G_{d,l})|-7), &\text{ if } k=3,                                                                                                                                                                       \end{cases}$$
 The general formula for any $d\geq k \geq 2$ is:
\begin{align*}
 \Delta_{d,k,l} &= \dist_u(G_{d,l}) - \dist_u(G_{d,l}^*) \\
  &= (k-1)\left(2+l+ \sum_{i=k}^d \left|V(H_{i-1,k-i+1})\right|\right) + \sum_{i=1}^{\left\lfloor\frac{k}{2}\right\rfloor-1}(k-2i-1)|V(H_{k-i-1,i})|\\
  &= (k-1)\left(2 + l + \sum_{i=k}^d 2^i \right) + \sum_{i=1}^{\left\lfloor\frac{k}{2}\right\rfloor-1}(k-2i-1)2^{k-i}\\
  &= (k-1)\left(2 + l + 2^{d+1}-2^k\right)+ \sum_{i=1}^{\left\lfloor\frac{k}{2}\right\rfloor-1}(k-2i-1)2^{k-i}\\
  &= (k-1)\left(l+2^{d+1}\right) -2k\left(2^{\left\lceil\frac{k}{2}\right\rceil}-1\right) + \left\lfloor\frac{k}{2}\right\rfloor 2^{\left\lceil\frac{k}{2}\right\rceil+2}-2^{k+2}+3^{\left\lceil\frac{k}{2}\right\rceil+1}-2.
\end{align*}
Where we have used that $|V(H_{j,q})| = 2^{j+1}$ for any $q$. Note that for $k=2$ and $k=3$ the right sum is $0$.
\end{proof}
\noindent Lemma~\ref{lem_maxcostdecrease} implies the following statement.
\begin{corollary}\label{cor_treestar_k_greedy_stable}
 Let $l \geq 2^{d+1}$, then network $G_{d,l}$ is in \kGE for any $2 \leq k \leq d$, if $\alpha \geq \Delta_{d,k,l}$.
\end{corollary}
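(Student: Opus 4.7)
The plan is a short bookkeeping argument that reduces the $k$-GE check to single edge-purchases and then invokes Lemma~\ref{lem_maxcostdecrease}. Recall that a $k$-local greedy move is exactly one of: delete an own edge, swap an own edge toward an agent in the $k$-neighborhood, or buy an edge toward an agent in the $k$-neighborhood. So I will handle deletions, swaps, and purchases in turn.

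First, since $G_{d,l}$ is a tree, every edge is a bridge, so deleting any own edge disconnects the network and drives the deleting agent's distance cost to $\infty$. Hence no single edge-deletion is improving. Second, the edge-ownership in $G_{d,l}$ was chosen precisely so that every owned edge separates its owner from a subtree of descendants (the tree edges in $T_d$ point away from the root, the two bridging edges are owned by $y$, and the star edges are owned by the center $z$); the argument already recorded in the proof of Lemma~\ref{lem_leaflargestdecrease} therefore applies and shows that no agent can strictly decrease her cost by a single edge-swap. These two steps together reduce the $k$-GE check to the single edge-purchase case.

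Third, since $l \geq 2^{d+1}$ and $2 \leq k \leq d$, the hypotheses of Lemma~\ref{lem_maxcostdecrease} are met, and it bounds the maximum distance-cost decrease achievable by any agent via any $k$-local single edge-purchase by $\Delta_{d,k,l}$. A purchase itself costs $\alpha$, so the buyer's net cost change is at least $\alpha - \Delta_{d,k,l}$, which is nonnegative by the standing hypothesis $\alpha \geq \Delta_{d,k,l}$. Thus no single edge-purchase strictly decreases the buyer's cost either, and $(G_{d,l},\alpha)$ is in \kGE.

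There is no genuine obstacle: this corollary is really a two-line consequence of Lemma~\ref{lem_maxcostdecrease} once deletions and swaps are ruled out. The only point that warrants a moment of care is verifying that the hypothesis chain Lemma~\ref{lem_bestmove_treeagent} $\to$ Lemma~\ref{lem_leaflargestdecrease} $\to$ Lemma~\ref{lem_maxcostdecrease} is invoked uniformly under $l \geq 2^{d+1}$ and $2 \leq k \leq d$, so that the statement can be phrased exactly as in the corollary; inspection of those lemmas confirms this.
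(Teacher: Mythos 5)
Your proposal is correct and follows exactly the reasoning the paper leaves implicit when it states that Lemma~\ref{lem_maxcostdecrease} "implies" the corollary: deletions are ruled out because $G_{d,l}$ is a tree, swaps by the edge-ownership structure (both already noted in the proof of Lemma~\ref{lem_leaflargestdecrease}), and purchases because the maximal distance-cost decrease $\Delta_{d,k,l}$ is offset by $\alpha$. Nothing to add.
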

\noindent Now we can finally set out for proving Theorem~\ref{thm_kGe_vs_Ge_approx_lower_bound}.
 We will use the network $G_{d,l}$ for constructing the lower bound for the approximation. For this, we will assume that $d$ is even and that $l \geq 2^{d+1}$ holds. Let $u$ and $z$ be defined as above and let $G_{d,l}''$ be the network obtained from $G_{d,l}$ if agent $u$ buys the edge $uz$. By Lemma~\ref{lem_lowerbound_for_l}, we have that this edge-purchase is the best possible single edge agent $u$ can buy in network $G_{d,l}$. By Corollary~\ref{cor_treestar_k_greedy_stable}, we have that $G_{d,l}$ is in \kGE if $\alpha \geq \Delta_{d,k,l}$.
 The corresponding ratio of agent $u$'s cost before and after the purchase of edge $uz$ is
 $$\frac{\dist_u(G_{d,l})}{\alpha + \dist_u(G_{d,l}'')} \leq \frac{\dist_u(G_{d,l})}{\Delta_{d,k,l} + \dist_u(G_{d,l}'')}. $$
 By equality~(\ref{eq:cost_leaf}) we have $\dist_u(G_{d,l}) = 2^{d+1}(2d-3)+3d+l(d+3)+9.$
 Lemma~\ref{lem_maxcostdecrease} yields
 \begin{align*}
  \Delta_{d,k,l} &= (k-1)\left(l+2^{d+1}\right) -2k\left(2^{\left\lceil\frac{k}{2}\right\rceil}-1\right) + \left\lfloor\frac{k}{2}\right\rfloor 2^{\left\lceil\frac{k}{2}\right\rceil+2}-2^{k+2}+3^{\left\lceil\frac{k}{2}\right\rceil+1}-2.
 \end{align*}
 Since $d$ is even, Lemma~\ref{lem_newleafcost} yields:
 \begin{align*}
  \dist_u(G_{d,l}'') &= 2^{d+1}d + d -3\cdot 2^{\frac{d}{2}+2} + 2^{d+2} + 2l + 9.
 \end{align*}
First, we consider what happens if $l$ tends to infinity:
\begin{align*}
 \lim_{l\to\infty}\frac{\dist_u(G_{d,l})}{\Delta_{d,k,l} + \dist_u(G_{d,l}'')} &= \lim_{l\to \infty}\frac{l(d+3)}{(k+1)l} = \frac{d+3}{k+1}
\end{align*}
This shows, that the approximation ratio for any constant neighborhood size $k$ may exceed any constant ratio $c$ by choosing $d>c(k+1)-3$ and $l$ large enough.

It suffices if $l$ grows fast enough to dominate the other terms in the numerator and in the denominator as $l$ tends to infinity.
Thus, we choose $l = 3^d \iff d = \log_3 l$, which yields
\begin{align*}
 \lim_{l\to\infty}\frac{\dist_u(G_{\log_3 l,l})}{\Delta_{\log_3 l,k,l} + \dist_u(G_{\log_3 l,l}'')} &= \frac{\Omega\left(l \log_3 l\right)}{\mathcal{O}(k\cdot l)} = \Omega\left(\frac{\log_3 l}{k}\right) = \Omega\left(\frac{\log l}{k}\right).
 \end{align*}
Since $n = 2^{d+1}+1+l$, and since $2^{d+1}+1 \leq 3^d$ for $d\geq 2$, we have that $n \geq l \geq \frac{n}{2}$, which implies that $\log l \in \Theta(\log n)$ and a lower bound of $\Omega\left(\frac{\log_3 \frac{n}{2}}{k}\right) = \Omega\left(\frac{\log n}{k}\right)$ for $2\leq k\leq \log_3 \frac{n}{2}$ and $\Omega\left(\log n\right)$ for any constant $k\geq 2$. Note that the diameter of $G_{d,l}$ is $d+3 \in \Theta(\log l)$. This finishes the proof of Theorem~\ref{thm_kGe_vs_Ge_approx_lower_bound}.
\end{proof}

\begin{proof}[Proof of Lemma~\ref{lem_tree_swap}]
 Suppose that agent $u$ can strictly decrease her cost by performing the edge-swap $uv\to uw$ in the tree network $T$. Clearly, if $\dist_T(u,w) \leq k$ then we are done. Hence, we assume that $\dist_T(u,w) = l > k \geq 2$ holds.

 First of all, observe that $v$ must be the first vertex on $u$'s shortest path to $w$ in $T$, since otherwise the removal of edge $uv$ would disconnect $T$. Let $T_v$ be subtree of vertex $v$ in the tree $T$ rooted at vertex $u$. It follows that only distances towards vertices in $V(T_v)$ may change.

 We claim that if the swap $uv\to uw$ is a best possible single edge-swap for agent $u$, then agent $u$ has an improving $2$-local swap in $T$.

 Let $P_{uw}$ be the path from $u$ to $w$ in $T$, thus we have that $$P_{uw} = u,v_1,v_2,\dots,v_{l-1},v_{l} $$ where $v_1 = v$ and $v_{l} = w$. For all vertices $z$ on $P_{uw}$ let $V_z$ denote the set of vertices of $T$ which have node $z$ on their shortest path to any neighbor of $z$ on the path $P_{uw}$. Let $T^i$ be the tree $T$ after $u$ has performed the edge-swap $uv \to uv_i$ in $T$. Thus, we have that $T^1 = T$ and $T^l$ is the tree after $u$ has performed her best possible single edge-swap $uv\to uw$. Since the swap $uv\to uw$ is a best possible edge-swap, we have that \begin{equation}\cost_u\left(T^{i}\right) \geq \cost_u\left(T^l\right), \label{ineq_cost}\end{equation} for $2\leq i \leq l-1$ must hold. Note that we have at least inequality (\ref{ineq_cost}) for $i=2$, since $l \geq 3$. Furthermore we have $$\cost_u\left(T^l\right) = \sum_{i=1}^l (l-i+1)|V_{v_i}| + \sum_{z \in V(T)\setminus V(T_v)} d_T(u,z) + \edge_u(T)$$ and $$\cost_u\left(T^{l-1}\right) = \sum_{i=1}^{l-1}(l-i)|V_{v_i}| + 2|V_{v_l}|+ \sum_{z \in V(T)\setminus V(T_v)} d_T(u,z) + \edge_u(T)$$ which, with inequality (\ref{ineq_cost}), yields
 \begin{align*}
  & \cost_u\left(T^{l-1}\right) - \cost_u\left(T^l\right) \geq 0\\
  \iff & \sum_{i=1}^{l-1}(l-i)|V_{v_i}| + 2|V_{v_l}| - \left(\sum_{i=1}^l (l-i+1)|V_{v_i}|\right) \geq 0\\
  \iff & - \sum_{i=1}^{l-1}|V_{v_i}| + |V_{v_l}| \geq 0\\
  \iff & |V_{v_l}| - \sum_{i=2}^{l-1}|V_{v_i}| \geq |V_{v_1}|\\
  \Rightarrow & |V_{v_l}| + \sum_{i=2}^{l-1}|V_{v_i}| > |V_{v_1}|\\
  \iff & \sum_{i=2}^l |V_{v_i}| > |V_{v_1}|.
 \end{align*}
Thus, it follows that the $2$-local edge-swap $uv\to uv_2$ is improving, since this swap decreases $u$'s distance to exactly $\sum_{i=2}^l|V_{v_i}|$ many agents by one each and it increases $u$'s distance to $|V_{v_1}|$ many agents by one each.
\end{proof}

\begin{proof}[Proof of Theorem~\ref{thm_tree_approx_upperbound}]
 We will show that any tree network $T$ in \kGE is in $\mathcal{O}\left(\frac{\diam(T)}{k}\right)$-approx\-imate Greedy Equilibrium, where $\diam(T)$ is the diameter of $T$. The theorem follows from this, since the contra-positive statement of Lemma~\ref{lem_tree_swap} guarantees that any tree in \kGE tree must also be in Asymmetric Swap Equilibrium and it is already known~\cite{MS12} that any $n$-vertex tree in Asymmetric Swap Equilibrium has diameter $\mathcal{O}(\log n)$.

 Let $v_0$ be any agent in $T$ who can buy an edge to decrease her cost. We assume that $v_0$ buys the edge $v_0v_l$ and that this is the best possible single edge-purchase for agent $v_0$ in $T$. Let $$P_{v_0v_l} = v_0,v_1,v_2,\dots,v_{k-1},v_k,v_{k+1},\dots,v_{l-1},v_l$$ be $v_0$'s path to $v_l$ in $T$.  Let $T_z$ be the subtree of vertex $z$ in the tree $T$ rooted at $v_0$ and let the sets $V_z$ for all $z \in V(P_{v_0v_l})$ be defined as in the proof of Lemma~\ref{lem_tree_swap}. Fig.~\ref{fig:upper_bound_notation} gives an illustration of these sets.

 \begin{figure}[h!]
 \centering
 \includegraphics[width=\textwidth]{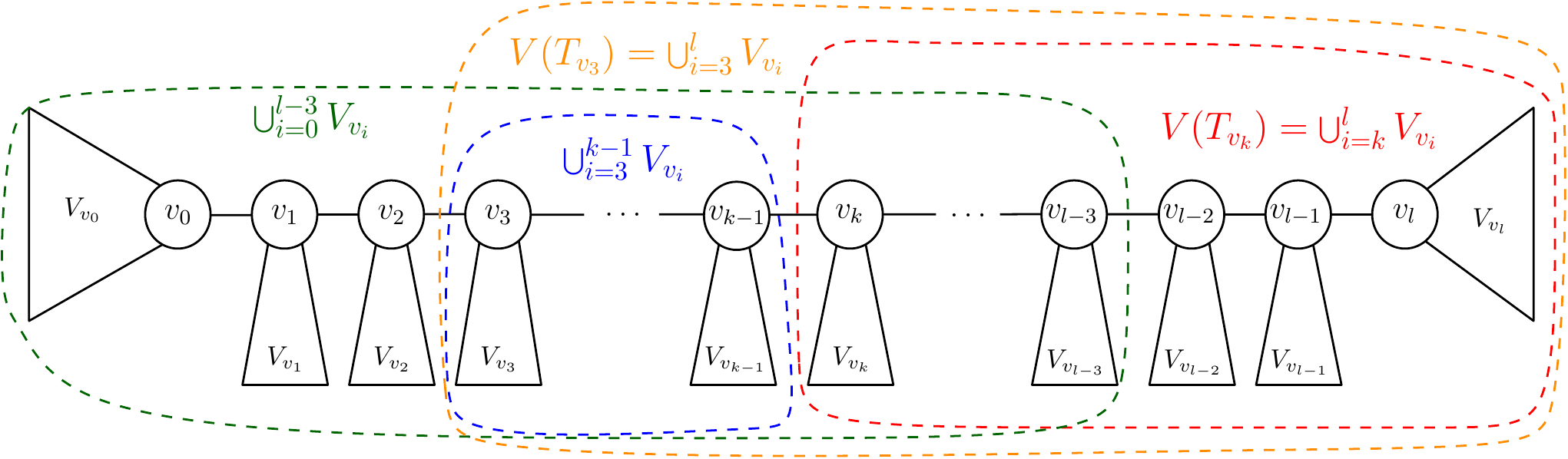}
 \caption{Illustration of the sets of vertices used in the proof.}
 \label{fig:upper_bound_notation}
\end{figure}

 We assume that agent $v_0$ cannot decrease her cost by buying an edge towards any vertex in her $k$-neighborhood. Hence we have that $\dist_T(v_0,v_l) = l > k \geq 2$ must hold.
 Since we assume that $T$ is $k$-greedy stable it follows that agent $v_0$ cannot decrease her cost by buying an edge towards $v_k$. Since this move would at least decrease $v_0$'s distance to all vertices in $V\left(T_{v_k}\right)$ by $k-1$, we have \begin{equation}
           \alpha \geq (k-1)|V(T_{v_k})|. \label{eq_alpha_bound}                                                                                                                                                                                                                                                                                                                                                                                                                                                                                                                                                                                                                                                                                                                                                                                                                                                                                                                                                                                                                                                                                                                                                                                                                                                                                                                                                                                                                                                                                                                                                                                                                                                                                            \end{equation}
 Now we consider the ratio of agent $v_0$'s cost before and after the purchase of the edge $v_0v_l$. Let $T'$ be the tree $T$ after $v_0$ has bought the edge $v_0v_l$. Moreover, let $\delta_{v_0}$ denote the distance cost decrease for agent $v_0$ obtained by buying the edge $v_0v_l$ in $T$. Thus, the ratio is:
 \begin{align*} \frac{\cost_{v_0}(T)}{\cost_{v_0}(T')}  &= \frac{\cost_{v_0}(T)}{\cost_{v_0}(T)-\delta_{v_0} + \alpha}
  = \frac{\edge_{v_0}(T) + \sum_{v\in V(T)}d_T(v_0,v)}{\edge_{v_0}(T) + \sum_{v\in V(T)}d_T(v_0,v) - \delta_{v_0} + \alpha}\\
  &\leq \frac{\sum_{v \in V(T_{v_3})}d_T(v_0,v)}{\sum_{v \in V(T_{v_3})}d_T(v_0,v)- \delta_{v_0} + \alpha}.
 \end{align*}
 The last inequality holds, since all vertices to which $v_0$ decreases her distance by buying the edge $v_0v_l$ must lie in the subtree $T_{v_3}$.
 
 We can upper bound the nominator easily by assuming that all vertices in $V(T_{v_3})$ are in maximum distance to $v_0$, that is, we have
 $$ \sum_{v \in V(T_{v_3})}d_T(v_0,v) \leq \diam(T)|V(T_{v_3})|.$$
 Since $v_0$ has at least distance $1$ to all vertices in $V(T_{v_3})$ after buying the edge $v_0v_l$, we have that $\sum_{v \in V(T_{v_3})}d_T(v_0,v)- \delta_{v_0} > 0$ must hold. Thus, we can lower bound the denominator as follows: $$\sum_{v \in V(T_{v_3})}d_T(v_0,v)- \delta_{v_0} + \alpha > \alpha.$$
 Hence, we have
 $$\frac{\cost_{v_0}(T)}{\cost_{v_0}(T')} \leq \frac{\sum_{v \in V(T_{v_3})}d_T(v_0,v)}{\sum_{v \in V(T_{v_3})}d_T(v_0,v)- \delta_{v_0} + \alpha} \leq \frac{\diam(T)|V(T_{v_3})|}{\alpha}. $$
 Using inequality (\ref{eq_alpha_bound}), we get
 $$\frac{\cost_{v_0}(T)}{\cost_{v_0}(T')} \leq \frac{\diam(T)|V(T_{v_3})|}{\alpha} \leq \frac{\diam(T)|V(T_{v_3})|}{(k-1)|V(T_{v_k})|}.  $$
 Note, that for $k\leq 3$ this already yields $$\frac{\cost_{v_0}(T)}{\cost_{v_0}(T')} \in \mathcal{O}(\diam(T))$$ for $k\leq 3$, since $|V(T_{v_i})| \geq |V(T_{v_3})|$ for $i\leq 3$.
 Towards a general upper bound of $\mathcal{O}\left(\frac{\diam(T)}{k}\right)$, it remains to show that $|V(T_{v_k})| \in \Omega(|V(T_{v_3})|)$ holds for $k>3$.

 So far, we have not yet used the assumption, that buying the edge $v_0v_l$ is the best possible edge-purchase for agent $v_0$. From this assumption follows that buying the edge $v_0v_{l-1}$ instead of $v_0v_l$ cannot be more profitable for agent $v_0$. Swapping from $v_0v_l$ to $v_0v_{l-1}$ would increase agent $v_0$'s distances to all vertices in $V_{v_l}$ by one each and it would decrease agent $v_0$'s distances to all vertices in the set $\bigcup_{i=\left\lfloor\frac{l}{2}\right\rfloor+1}^{l-1} V_{v_i}$ by one each. Since all sets $V_{v_i}$ are pairwise disjoint by definition, it follows that \begin{equation} \sum_{i = \left\lfloor\frac{l}{2}\right\rfloor+1}^{l-1}|V_{v_i}| \leq |V_{v_l}|. \label{ineq_tree_numbers} \end{equation}
 If $k = 4 = l-1$, that is, $l-2 = 3$, then, since buying $v_0v_l$ strictly decreases agent $v_0$'s cost whereas buying edge $v_0v_k =v_0v_{l-1}$ does not, we have that
 \begin{equation}|V_{v_l}|>\sum_{i=\left\lfloor\frac{l}{2}\right\rfloor+1}^{l-1}|V_{v_i}| \label{ineq_k_l-1}\end{equation} holds. Thus, we have $\left\lfloor\frac{l}{2}\right\rfloor+1 = 3$ and thus we have that $|V(T_{v_3})|<2|V_{v_l}|$ which implies that $|V(T_{v_3})|<2|V(T_{v_k})|$, which yields $|V(T_{v_k})| \in \Omega(V(T_{v_3}))$.

 If $l-2 > 3$, then we claim that the edge $v_{l-2}v_{l-1}$ must be owned by agent $v_{l-1}$. This is true, since otherwise agent $v_{l-2}$ could perform the swap $v_{l-2}v_{l-1} \to v_{l-2}v_l$ and thereby strictly decrease her cost. This can be seen as follows: If $l-1 = k$, then, by inequality~(\ref{ineq_k_l-1}), we have that $|V_{v_l}| > |V_{v_{l-1}}| = |V_{v_k}|$ holds. On the other hand, if $l-1>k$, then inequality~(\ref{ineq_tree_numbers}) directly implies $|V_{v_{l-1}}|<|V_{v_l}|$, since the sum on the left as at least one additional non-zero summand. In both cases we have that the swap $v_{l-2}v_{l-1} \to v_{l-2}v_l$ must be improving for agent $v_{l-2}$. This proves the claim.

 Having established that the edge $v_{l-2}v_{l-1}$ is owned by agent $v_{l-1}$ and using the assumption that no agent in $T$ can swap an edge in her $k$-neighborhood to strictly decrease her cost, it follows that the swap $v_{l-1}v_{l-2}\to v_{l-1}v_{l-3}$ cannot be improving for agent $v_{l-1}$, which yields
  $|V_{l-2}| \geq \sum_{i=0}^{l-3}|V_{v_i}|. $
  Since $l>5$, we have that $\left\lfloor\frac{l}{2}\right\rfloor+1 \leq l-2$ which, by inequality~(\ref{ineq_tree_numbers}), implies $ \sum_{i=0}^{l-3}|V_{v_i}| \leq |V_{v_{l-2}}| < \sum_{i = \left\lfloor\frac{l}{2}\right\rfloor+1}^{l-1}|V_{v_i}| \leq |V_{v_l}|.$
  Thus, if $k = l-1$ we have that $\sum_{i=3}^{k-1}|V_{v_i}| \leq 2|V_{v_{l-2}}| \leq 2|V_{v_l}| \leq 2|V(T_{v_k})|,$ which implies that $|V(T_{v_3})| \leq 3|V(T_{v_k})|$.
  If $k \leq l-2$, it follows that $\sum_{i=3}^{k-1}|V_{v_i}| < |V(T_{v_k})|$ which implies $|V(T_{v_3})| < 2|V(T_{v_k})|$. 
  
  In both cases this yields $|V(T_{v_k})| \in \Omega(|V(T_{v_3})|).$
\end{proof}

\begin{proof}[Proof of Lemma~\ref{lem_non-tree_swap}]
 We use the non-tree network $G_k$ depicted in Fig.~\ref{fig:kswap_vs_k+1swap} and we focus on agent $u$.
 \begin{figure}[h!]
 \centering
 \includegraphics[width=0.8\textwidth]{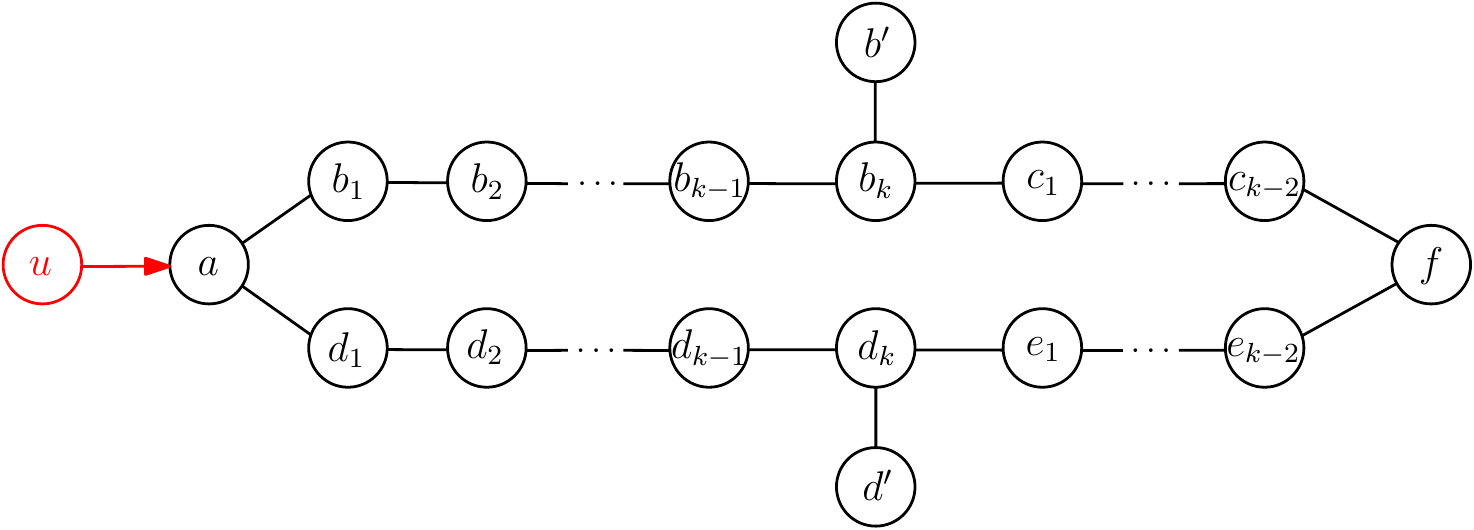}
 \caption{A non-tree network where agent $u$ cannot improve by a $k$-local swap but by a $k+1$-local swap.}
 \label{fig:kswap_vs_k+1swap}
\end{figure}

First, we show that agent $u$ cannot improve by performing a single edge-swap in her $k$-neighborhood. By symmetry, it suffices to consider the edge-swaps $ua \to ub_i$ for $1\leq i \leq k-1$. We prove that all those swaps are cost neutral, that is, that they are not improving for agent $u$.
Clearly, since the set $V(G_k) \setminus \{u,b',d'\}$ induces a simple cycle, we have that agent $u$'s total distance cost towards all vertices in $V(G_k) \setminus \{u,b',d'\}$ must be the same after the swap $ua \to ub_{i}$. The only possibility for a change in $u$'s distance cost is, if the swap $ua \to ub_i$ yields a larger distance decrease towards vertex $b'$ than the incurred distance increase towards vertex $d'$ (or vice versa). The swap $ua \to ub_i$ decreases $u$'s distance to $b'$ by exactly $i$ from $k+2$ to $k+2-i$. Let $G_k'$ be the network after $u$ has performed the swap $ua\to ub_i$. In network $G_k'$ agent $u$ has two possible paths towards vertex $d'$. The path via vertex $a$ has length $$d_{G_k'}(u,b_i)+d_{G_k'}(b_i,a)+d_{G_k'}(a,d') = 1 + i + k+1 = k+2+i.$$ The path via vertex $f$ has length $$d_{G_k'}(u,b_i) + d_{G_k'}(b_i,b_k) + d_{G_k'}(b_k,f) + d_{G_k'}(f,d') = 1 + k-i + k-1 + k = 3k -i$$ and we have $3k-i \geq k+2-i$. Hence, the path via $a$ is $u$'s shortest path to $d'$ and $u$'s distance to $d'$ increases by exactly $i$ from $k+2$ to $k+2+i$. Thus, the distance increase to $d'$ exactly neutralizes her distance decrease to $b'$. Thus, all swaps $ua \to ub_i$ for $1\leq i \leq k-1$ are cost neutral for agent $u$.

It remains to show that agent $u$ can perform a $k+1$-local edge-swap to strictly decrease her cost. With the argumentation above, we only have to analyze $u$'s distance change to $b'$ and $d'$. We consider the swap $ua\to ub_k$, which yields a distance decrease of $k+1$ towards vertex $b'$. Let $G_k''$ be the network after $u$'s swap $ua \to ub_k$. In $G_k''$ agent $u$'s distance to vertex $d'$ is
$$d_{G_k''}(u,d') \leq d_{G_k''}(u,b_k) + d_{G_k''}(b_k,f) + d_{G_k''}(f,d') = 1 + k-1 + k = 2k. $$ Thus, agent $u$'s distance increase to $d'$ is at most $k-2$ which is strictly less than her distance decrease to $b'$. It follows that the $k+1$-local swap $ua\to ub_k$ is improving for agent $u$.
\end{proof}

\begin{proof}[Proof of Theorem~\ref{thm_approx_upperbound}]
 First, we consider single edge-swaps. We assume that agent $u$ performs an edge-swap in network $G$ and the obtained network is $G'$. Let $D$ be the set of vertices to which the edge-swap strictly decreases agent $u$'s distance. And let $I$ be the set of vertices to which the edge-swap strictly increases agent $u$'s distance. We assume that the edge-swap is an improving move for agent $u$. We have that agent $u$'s distance-cost decrease $\delta_u$ is $$ \delta_u = \sum_{x\in D}(d_G(u,x) - d_{G'}(u,x)) - \sum_{x\in I}(d_{G'}(u,x) - d_G(u,x)).$$
 Clearly, we have $$\delta_u < \sum_{x\in D}(d_G(u,x) - d_{G'}(u,x)).$$ Thus, it follows that
 \begin{align*} 
   \frac{\cost_u(G)}{\cost_u(G')} &= \frac{\cost_u(G)}{\cost_u(G)-\delta_u} = \frac{\edge_u(G) + \dist_u(G)}{\edge_u(G') + \dist_u(G') - \delta_u} < \frac{\dist_u(G)}{\dist_u(G') - \delta_u}\\
   &= \frac{\sum_{x\in D}d_G(u,x) + \sum_{x \in V\setminus D}d_G(u,x)}{\sum_{x\in D}d_G(u,x) + \sum_{x \in V\setminus D}d_G(u,x) - \delta_u}
   \leq \frac{\sum_{x\in D}d_G(u,x)}{\sum_{x\in D}d_G(u,x) - \delta_u}\\
   &< \frac{\sum_{x\in D}d_G(u,x)}{\sum_{x\in D}d_G(u,x) -\left(\sum_{x\in D}(d_G(u,x) - d_{G'}(u,x))\right)}\\
   &\leq \frac{\diam(G)|D|}{\sum_{x\in D}d_{G'}(u,x)} \leq \frac{\diam(G)|D|}{|D|} = \diam(G).
 \end{align*}
 For single edge-purchases we have the following:
  \begin{align*}
  \frac{\cost_u(G)}{\cost_u(G')} &= \frac{\cost_u(G)}{\cost_u(G)-\delta_u + \alpha}
  \leq \frac{\sum_{x\in D}d_G(u,x)}{\sum_{x\in D}d_G(u,x) - \delta_u + \alpha}\\ &\leq \frac{\sum_{x\in D}d_G(u,x)}{|D|+\alpha} < \frac{\diam(G)|D|}{|D|} = \diam(G).
 \end{align*}
 The second inequality holds since in $G'$ agent $u$ must have at least distance $1$ towards all vertices in $D$.
\end{proof}

\section{Omitted Details from Section~\ref{section:quality-of-equilibria}}
\label{section:appendix_quality-proofs}

\begin{proof}[Proof of Theorem~\ref{thm_approx_diam_connection}]
Let $u$ be an arbitrary agent of $G$ and consider $T$ being a shortest path tree rooted at $u$.
(Note that paths to all other agents exist, since $G$ is in $\beta$-approximate \NE.)
For every agent $v\in V$, we consider the strategy change of removing all own edges that do not belong to $T$ and creating one new edge to $u$.
Let $T_v\subseteq T$ be the set of tree edges owned by $v$.
Since $G$ is in $\beta$-approximate \NE and $v$ is not changing $\dist_u(G)$ by this move, we get
$
    \cost_v(G,\alpha) \leq \beta(|T_v|\alpha + \alpha + (n-1) + \dist_u(G))
    .
$
Hence, for the social cost we get:
\begin{align*}
    \cost(G,\alpha) &\leq \sum_{v\in V} c_v(S)
        \leq \sum_{v\in V} \beta(|T_v|\alpha + \alpha + (n-1) + \delta_u)\\
        & %\leq \beta(|T|\alpha + (n-1)\alpha + (n-1)^2 + n\delta_u)
        \leq \beta(2(n-1)\alpha + (n-1)^2 + n(n-1)D)
\end{align*}
Since the optimal solution (i.e., a star) has cost of $\alpha(n-1) + n(n-1)$ we get as upper bound for the social cost ratio
$
    \beta(2 + 1 + D)
    .
$
\end{proof}

\begin{proof}[Proof of Theorem~\ref{thm:greedy_diameter}]
If $k\geq 2\sqrt{\alpha}$, then \kNE and \NE coincide, since it was shown in~\cite{Fab03} that no two agents can have distance $2\sqrt{\alpha}$, since otherwise one of them could buy an edge to the other and strictly decrease her cost. Since $k$ is large enough, this would still be a $k$-local move. Hence, for $k\geq 2\sqrt{\alpha}$ we get  $\diam(G) \leq 2\sqrt{\alpha}$.

Otherwise, if $k < 2\sqrt{\alpha}$, let $u$ be an agent with maximal eccentricity in $G$ and let $v$ be a most distant agent to $u$. Let $D\coloneqq \diam(G)$ and consider the distance cost improvement of $u$ by creating an edge to the agent $x$ at distance $k$ on the shortest path from $u$ to $v$.
When creating this edge, $u$ reduces her distance cost by $k-1$ to each of the $D - k$ last agents on the path, and in total
by $\sum_{i=1}^{\lfloor k/2\rfloor} (k-2i+1)$ to the $\lfloor\tfrac{k}{2}\rfloor$ last agents on the same path from $u$ to $x$, including $x$. Since $G$ forms a \kNE, we have:
$
    \alpha \geq (k-1)(D-k) + \sum_{i=1}^{\lfloor k/2\rfloor} (k-2i+1)
$,
which yields $D\leq \alpha/(k-1) + \tfrac{3k}{2} + 1$.
\end{proof}

\begin{proof}[Proof of Lemma~\ref{lemma:ballSizeIncrease}]
The claim directly holds, if there is a $u\in V$ with $|N_{2d+3}(u)| > n/2$.
Hence, we assume the contrary and fix an arbitrary $u \in V$.
Denote $u$'s $(2d+3)$-neighborhood as $B\coloneqq N_{2d+3}(u)$ and name the agents at distance exactly $(2d+3)$ as $\partial B\coloneqq \{v\in V| d(u,v) = 2d + 3\}$.
We now greedily select a maximal subset $X\subseteq \partial B$ by the following iterative algorithm:
(1) mark all agents of $\partial B$ as unassigned, (2) while there is an unassigned agent $x$ in $\partial B$, add $x$ to $X$ and create a new set $\partial C_x$ containing $x$ and all unassigned agents of $\partial B$ within distance of at most $2d$ to $x$, and mark these agents as assigned.
Note that for the so computed set $X$ it holds that for any two $x,y\in X$ with $x\not= y$ we have $d(x,y) > 2d$.

Next, we lower bound the number of clusters by $|X| \geq n/\alpha$.
For this, enumerate the elements of $X$ with $x_1,\ldots, x_{|X|}$ and define clusters $C_{x_i}$ such that every $C_{x_i}$ contains all elements of the corresponding $\partial C_{x_i}$ and further, for every agent $v\in V\setminus B$, we select an arbitrary shortest path from $u$ to $v$ and assign $v$ to the cluster $C_{x_i}$ that contains the (unique) agent on the path belonging to $\partial B$.
By construction, we have $|\bigcup_{i=1}^{|X|} C_{x_i}| \geq n/2$.
Now assume that $u$ buys an edge to some $x \in X$, say to $x_i$.
After this operation, $u$'s distance to every $v\in \partial C_{x_i}$ is at most $2d + 1$ and thus the distance to any $w\in C_{x_i}$ decreases by at least $2$.
Since $G$ forms an equilibrium and creating an edge to $x_i$ is $k$-local, we get $\alpha \geq 2|C_{x_i}|$ for every $x_i\in X$.
Hence, $|X|\alpha \geq 2\sum_{i=1}^{|X|} |C_i| \geq 2 n/2 = n$, i.e., $|X|\geq n/\alpha$.

By construction, for any $x,y\in X$ with $x\not=y$ we have $N_d(x)\cap N_d(y) = \emptyset$.
With $|N_d(x)| > \lambda$ this gives $|\bigcup_{x\in X} N_d(x)| > |X| \lambda$.
For every $x\in X$ we have $d(u,x) = 2d + 3$ and hence, the maximal distance from $u$ to any $v\in N_d(x)$ is at most $3d + 3$.
This gives, $|N_{3d+3}(u)| \geq |\bigcup_{x\in X} N_d(x)| > |X| \lambda \geq \lambda n/\alpha$.
\end{proof}

\begin{proof}[Proof of Lemma~\ref{lemma:halfNodesToAllBall}]
We prove the contra-positive:
Assume $|N_{2d+1}|< n$, then there is a $v\in V$ such that $d(u,v) = 2d + 2$.
Since for all $x\in N_d(u)$ it holds $d(u,x)\leq d$, by the triangle inequality we get that $d(v,x) \geq d + 2$ for all $x\in N_d(u)$.
Now consider $v$ buying an edge to $u$, which reduces $\dist_v(G)$ by at least $|N_d(u)|$.
Since $G$ forms an equilibrium, we get $n/2 > \alpha \geq |N_d(u)|$, which gives the claim.
\end{proof}

\section{Omitted Details from Section~\ref{subsection:eq-conformity}}
\label{section:kne-and-ne-conformity}
As stated in Observation~\ref{observation:NEinclusions}, for every $k\geq 1$ it holds $NE\subseteq \kNE$.
In this section, we discuss the combinations of $k$ and $\alpha$ for which the equilibria sets actually match, i.e., for which $NE = \kNE$ holds.
Note that having $NE = \kNE$ and further a $k$, which is bigger than any equilibrium diameter for a specific $\alpha$-range, we directly get the price for anarchy results that are known for Nash equilibria for this $\alpha$-range.

\begin{lemma}
For $0< \alpha < 1$ and $2\leq k$, it holds $\kNE = \NE$ and the price of anarchy is~$1$.
\end{lemma}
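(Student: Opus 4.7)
My plan is to show that for $0<\alpha<1$ and $k\geq 2$, both $\NE$ and $\kNE$ consist of exactly the complete graph $K_n$ (under any edge-ownership assignment), and that $K_n$ is also the unique social optimum; the $\PoA=1$ claim is then immediate.

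First, I would verify that $K_n\in\NE$. In a complete graph no edge can be bought (none is missing) and no swap is possible for the same reason, so the only deviations are deletions of owned edges. If an agent $u$ removes a set of her owned edges $uv_1,\ldots,uv_t$, her edge cost drops by $t\alpha$ while each $d(u,v_i)$ rises from $1$ to at least $2$, since every other vertex still furnishes a length-$2$ path between $u$ and $v_i$. Thus her cost changes by at least $t(1-\alpha)>0$, so no deviation is improving and $K_n\in\NE\subseteq\kNE$ by Observation~\ref{observation:NEinclusions}.

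Next, I would show that any connected $G\neq K_n$ fails to lie in $\kNE$. Since $G$ is not complete and connected, there exist two vertices at distance exactly $2$: take any shortest path $u=x_0,x_1,\ldots,x_\ell$ in $G$ with $\ell\geq 2$, and set $v\coloneqq x_2$. Then $v\in N_2(u)\subseteq N_k(u)$, so the single-edge purchase $uv$ qualifies as a $k$-local greedy move. This move raises $u$'s edge cost by $\alpha<1$ and lowers her distance to $v$ by at least $1$, while not increasing any other distance. Therefore $u$ strictly improves, so $G\notin\kNE$.

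Finally, I would observe that the very same inequality pins down the social optimum: in any connected graph, inserting a missing edge $uv$ raises social edge cost by $\alpha<1$ but drops the distance terms of $u$ and $v$ each by at least $1$, and cannot increase any other distance. Hence social cost strictly decreases until $K_n$ is reached, making $K_n$ the unique social optimum. Combining the three steps gives $\maxNE_n=\minNE_n=\OPT_n$, i.e., $\PoA=1$. There is no real obstacle here; the entire argument hinges on the elementary fact that any non-complete connected graph contains a pair of vertices at distance exactly two, which a $k$-local single-edge purchase can then exploit whenever $\alpha<1$.
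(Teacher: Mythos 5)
Your proposal is correct and follows essentially the same route as the paper: any connected non-complete graph has a pair of agents at distance exactly $2$, so for $\alpha<1$ a $2$-local edge purchase is improving, forcing every \kNE (and \NE) network to be the clique, which is also the social optimum. You additionally spell out the verification that $K_n\in\NE$ and that $K_n$ is optimal, which the paper leaves implicit or cites from \cite{Fab03}, but the core argument is identical.
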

\begin{proof}
Given a graph $G$ and $\alpha < 1$, assume there are two (closest) agents $u,v$ that are not connected by one edge, i.e., $d_G(u,v)=2$.
In this case, buying an edge $uv$ is an improving response for $u$.
Hence, the only equilibrium graph for $\alpha < 1$ is a clique, which is also the optimal solution (cf.\ \cite{Fab03}).
\end{proof}

\begin{lemma}[\cite{De07}, Theorem~4]
For $1\leq \alpha \leq \sqrt{n/2}$ and $6\leq k$, it holds $\kNE = \NE$ and the price of anarchy is at most $6$.
\end{lemma}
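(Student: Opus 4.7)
The plan is to reduce this claim to the corresponding Nash equilibrium result of Demaine et al.~\cite{De07} by showing that every \kNE network in this parameter regime already has diameter at most $k$. Once this is established, the $k$-neighborhood of every agent coincides with $V\setminus\{u\}$, so every conceivable strategy change (an arbitrary combination of edge purchases, swaps, and deletions) is automatically a $k$-local move. Combined with $\NE\subseteq\kNE$ from Observation~\ref{observation:NEinclusions}, this yields $\kNE=\NE$, and the $\PoA$ bound of $6$ is then inherited directly from \cite{De07}.

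For the diameter bound I argue by contradiction. Assume a \kNE network $(G,\alpha)$ has $\diam(G)=D>k$. Fix a diametrical pair $u,v$ and let $u_k$ be the agent on a shortest $u$--$v$-path at distance exactly $k$ from $u$. Buying the single edge $u u_k$ is a $k$-local move, hence by \kNE it is non-improving, so $\alpha$ must dominate the resulting distance-cost saving $\Delta_u$ of agent $u$.

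The key calculation is a lower bound on $\Delta_u$ that exploits only savings toward vertices in $N_2(u_k)$. For any $w\in N_2(u_k)$, the triangle inequality gives $d_G(u,w)\geq k-d_G(u_k,w)\geq k-2$ in the original network, while after the purchase $d_{G'}(u,w)\leq 1+d_G(u_k,w)\leq 3$. Hence each such $w$ contributes at least $k-5$ to $\Delta_u$, and for $k\geq 6$ this is at least $1$ per vertex. Combining with Lemma~\ref{lemma:N2ballElements}, which gives $|N_2(u_k)|>n/(2\alpha)$, we obtain
$$\Delta_u \;>\; (k-5)\cdot\frac{n}{2\alpha} \;\geq\; \frac{n}{2\alpha}.$$
Plugging into the equilibrium inequality $\alpha\geq\Delta_u$ yields $2\alpha^2>n$, i.e., $\alpha>\sqrt{n/2}$, contradicting the hypothesis $\alpha\leq\sqrt{n/2}$.

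The main obstacle I expect is the borderline case $k=6$, where the per-vertex saving drops to exactly $k-5=1$ so the overall slack shrinks to zero. Here the \emph{strict} inequality in Lemma~\ref{lemma:N2ballElements} is essential: paired with the non-strict hypothesis $\alpha\leq\sqrt{n/2}$, it is what turns the chain $\alpha\geq\Delta_u>n/(2\alpha)$ into the strict contradiction $\alpha>\sqrt{n/2}$. With the diameter bound $D\leq k$ in hand, the equivalence $\kNE=\NE$ is immediate from the observation above, and invoking the $\PoA\leq 6$ bound of \cite{De07} for ordinary Nash equilibria in this $\alpha$-range completes the proof.
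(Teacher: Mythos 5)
Your proof is correct and follows essentially the same route as the paper: both arguments show that in this parameter range every \kNE network has small diameter, so that every strategy change is automatically a $k$-local move, whence $\kNE=\NE$ via Observation~\ref{observation:NEinclusions} and the $\PoA\leq 6$ bound is inherited from \cite{De07}. The only difference is presentational: the paper merely audits the proof of Theorem~4 in \cite{De07} and notes that the single improving move used there (an agent at distance $6$ buying an edge toward the root) is $k$-local for $k\geq 6$, whereas you make the diameter bound self-contained by combining Lemma~\ref{lemma:N2ballElements} with a purchase toward the vertex at distance exactly $k$ on a diametrical path --- a slightly more careful adaptation that is equally valid.
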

\begin{proof}
In \cite{De07}, the authors show that every shortest path tree rooted at a agent $v$ has a height of at most $5$.
For this, they assume the contrary and show the existence of an improving response where a agent at distance of at least $6$ buys an edge towards $v$.
This operation is allowed with $k\geq 6$, hence there is no $k$-local equilibrium with diameter of more than $5$, i.e., $\kNE = \NE$ and the PoA bound of \cite{De07} applies.
\end{proof}

\begin{lemma}[\cite{De07}, Theorem~10]
\label{lemma_poa_upper_bound_for_smaller_n}
For $1 \leq \alpha < n^{1-\varepsilon}$, $\varepsilon \geq 1/\lg(n)$ and $k \geq 4.667\cdot 3^{\lceil 1/\varepsilon\rceil} + 8$, it holds $\kNE = \NE$ and the price of anarchy is at most $4.667\cdot 3^{\lceil 1/\varepsilon\rceil} + 8$.
\end{lemma}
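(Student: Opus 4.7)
The plan is to lift the original diameter and price of anarchy bounds of Demaine et al.\ from $\NE$ to $\kNE$ by showing that under the hypothesis $k \geq 4.667\cdot 3^{\lceil 1/\varepsilon\rceil}+8$ every $\kNE$ network already has diameter at most $k$. Once that is established, every agent's $k$-neighborhood equals $V$, so the set of $k$-local deviations coincides with the set of arbitrary deviations, giving $\kNE = \NE$ in the stated parameter range, and the PoA bound then falls out of Theorem~\ref{thm_diam_PoA}.

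Concretely, let $D^\ast\coloneqq 4.667\cdot 3^{\lceil 1/\varepsilon\rceil}+7$ denote the diameter bound that Demaine et al.\ establish for Nash equilibria in the range $1 \leq \alpha \leq n^{1-\varepsilon}$, $\varepsilon \geq 1/\log n$. Their proof follows the same template as that of Theorem~\ref{thm_non_tree_diameter_upper_bound}: it starts from $|N_2(v)| > n/(2\alpha)$ as in Lemma~\ref{lemma:N2ballElements} and iteratively grows neighborhoods by an analogue of Lemma~\ref{lemma:ballSizeIncrease}; the contradicting deviation in each step is an edge purchase from some agent $u$ to a cluster representative $w$ with $d_G(u,w)$ at most the current would-be diameter. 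The first thing I would do is verify that every witnessing deviation in their argument has $d_G(u,w)\leq D^\ast \leq k-1$ under our lower bound on $k$; this is immediate for purchases from arbitrary agents to cluster representatives as in our Lemma~\ref{lemma:ballSizeIncrease}, and the boundary case of the final diameter-witness purchase between two agents at distance $D^\ast$ is again $k$-local by hypothesis. Consequently all deviations used in their proof are $k$-local moves, so the proof transfers verbatim and yields $\diam(G)\leq D^\ast$ for every $(G,\alpha)\in\kNE$ in the stated range.

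With the diameter bound in hand, $\diam(G)\leq D^\ast \leq k$ implies $N_k(u)=V$ for every agent $u$. Any conceivable strategy change — buying, swapping, or deleting any subset of own edges with targets anywhere in $V$ — is therefore automatically a $k$-local move. Hence the set of improving deviations available in $\kNE$ equals that available in $\NE$, giving $\kNE=\NE$. The PoA bound then follows immediately from Theorem~\ref{thm_diam_PoA} applied with $D=D^\ast$, producing a social cost ratio of $\mathcal{O}(D^\ast) = \mathcal{O}(3^{\lceil 1/\varepsilon\rceil})$; the constant $4.667\cdot 3^{\lceil 1/\varepsilon\rceil}+8$ in the statement is precisely the concrete version of this bound carried over from Demaine et al.

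The main obstacle is the careful audit in step one: one must inspect Demaine et al.'s diameter argument and confirm that no deviation it invokes requires the endpoint to lie farther than $D^\ast$ from the deviating agent. For the ball-growing induction this is a direct consequence of the construction, but any ``composed'' deviation they use (for instance, an agent simultaneously deleting several edges and buying one shortcut) must be checked to verify that its support lies within $N_k(u)$. Once this uniform range bound on deviations is confirmed, the remainder of the argument is a mechanical instantiation of Theorem~\ref{thm_diam_PoA}.
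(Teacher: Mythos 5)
Your proposal is correct and follows essentially the same route as the paper: both audit Demaine et al.'s proof of their Theorem~10 to confirm that every deviation it invokes (the iterative ball-growing purchases and the final diameter-witnessing edge purchase) stays within distance $4.667\cdot 3^{\lceil 1/\varepsilon\rceil}+8 \leq k$ of the deviating agent, conclude that the diameter bound therefore transfers to \kNE, and then deduce $\kNE=\NE$ and the PoA bound. The paper identifies the same two sources for the lower bound on $k$ that you flag (the maximal radius in the induction via their Lemma~9, and the edge purchase needed for their Corollary~7), so your ``careful audit'' step is exactly what the paper's proof does.
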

\begin{proof}
In Theorem~10 of \cite{De07}, the authors use an inductive argument to find a agent $u$ and a radius $d$ such that the $d$-neighborhood of $u$ contains more than $n/2$ many agents.
For this, they start with their Lemma~3 (for which only $k\geq 2$ must hold) and  apply Lemma~9 iteratively.
They show that the maximal radius $d$, for which Lemma~9 must be applied, is at most $4.667\cdot 3^{\lceil 1/\varepsilon\rceil} + 8$, which gives a first lower bound for $k$.
Using this result, they apply their Corollary~7 to show that actually all agents are contained in a ball of radius $4.667\cdot 3^{\lceil 1/\varepsilon\rceil} + 7$, for which they need the operation of creating an edge to a agent at distance $4.667\cdot 3^{\lceil 1/\varepsilon\rceil} + 8$, which is the second lower bound for $k$.

Using both results, they show that the diameter of every equilibrium is at most $4.667\cdot 3^{\lceil 1/\varepsilon\rceil} + 8$.
By the choice of $k$, the same holds for $k$-local equilibria, i.e., $\kNE = \NE$ holds and we get a price of anarchy of at most $4.667\cdot 3^{\lceil 1/\varepsilon\rceil} + 8$.
\end{proof}

\begin{lemma}[\cite{De07}, Theorem~12]
For $1 \leq \alpha \leq 12n\log n$ and $k\geq 2\cdot 5^{1+\sqrt{\lg n}} + 24\lg(n) + 3$, it holds $\kNE = \NE$ and the price of anarchy is $\mathcal{O}\left(5^{\sqrt{\lg n}}\lg n\right)$.
\end{lemma}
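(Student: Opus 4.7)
The approach mirrors that of the proof of Lemma~\ref{lemma_poa_upper_bound_for_smaller_n}: I would trace through the proof of Theorem~12 in \cite{De07} and verify that every improving deviation invoked there is a $k$-local move for any $k\geq 2\cdot 5^{1+\sqrt{\log n}} + 24\log(n) + 3$. Once this structural containment is in hand, the diameter bound $\mathcal{O}(5^{\sqrt{\log n}}\log n)$ of Demaine et al.\ transfers verbatim to \kNE networks, and the PoA bound then follows immediately from Theorem~\ref{thm_diam_PoA}.

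More concretely, I would first identify the operation in the proof of Theorem~12 that requires the largest distance between the deviating agent and the endpoint of the purchased edge. The proof of Theorem~12 combines two ingredients: an iterated ball-growing step (analogous to our Lemma~\ref{lemma:ballSizeIncrease}) that produces an agent $u$ whose $d$-neighborhood contains more than $n/2$ vertices, together with a final covering step (analogous to our Lemma~\ref{lemma:halfNodesToAllBall}) that forces the remaining vertices into a slightly larger ball around $u$. In the regime $1\leq\alpha\leq 12n\log n$, the number of ball-growing iterations needed to surpass the $n/2$ threshold is $\mathcal{O}(\sqrt{\log n})$, and each iteration multiplies the working radius by the constant factor that Demaine et al.\ extract from their inductive step (roughly $5$ in their formulation). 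This produces the $5^{1+\sqrt{\log n}}$ contribution to the threshold, while the final covering step together with the price-of-connectivity bookkeeping contributes the additive $\mathcal{O}(\log n)$ summand. The factor of $2$ in the stated threshold absorbs the subsequent diameter-doubling step in the covering argument, and the additive $3$ is a small bookkeeping buffer.

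Having verified that every buy, swap, and delete operation in the proof of Theorem~12 lies strictly within the $k$-neighborhood of its performer for our choice of $k$, the same contradiction rules out the existence of \kNE networks whose diameter exceeds $\mathcal{O}(5^{\sqrt{\log n}}\log n)$. Together with Observation~\ref{observation:NEinclusions}, this yields $\kNE = \NE$ for the stated parameter regime, and Theorem~\ref{thm_diam_PoA} then gives the PoA bound $\mathcal{O}(5^{\sqrt{\log n}}\log n)$.

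The main obstacle is the careful constant-chasing required to pin down the explicit threshold for $k$: the maximal distance at which an improving edge-purchase is considered changes from iteration to iteration, and the several slack factors that appear in Demaine et al.'s estimates must all be tracked through the $\sqrt{\log n}$-fold iteration and then combined into a single uniform bound. This is routine but tedious, and is the reason for the somewhat opaque constants $2\cdot 5^{1+\sqrt{\log n}}$ and $24\log(n) + 3$ in the statement. Once the longest distance that occurs in any of these estimates is identified, the verification that $k$ exceeds it is a direct inequality chase.
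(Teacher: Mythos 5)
Your proposal is correct and follows essentially the same route as the paper: both arguments trace Demaine et al.'s proof of Theorem~12 (iterated ball-growing to reach an $n/2$-sized neighborhood, contributing the $5^{1+\sqrt{\log n}}$ term, followed by a final covering step contributing the additive $\mathcal{O}(\log n)$ term), verify that every improving deviation used there is $k$-local for the stated threshold, and conclude $\kNE=\NE$ together with the transferred diameter and PoA bounds. The only cosmetic difference is that you route the final PoA bound through Theorem~\ref{thm_diam_PoA} while the paper simply inherits Demaine et al.'s PoA bound once the equilibrium sets coincide; these are equivalent.
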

\begin{proof}
Similar to their proof of Theorem~10 in \cite{De07}, the authors provide a price of anarchy upper bound for a larger range of $\alpha$.
Again, they use an inductive argument to find a agent $u$ and a radius $d$ such that the $d$-neighborhood of $u$ contains more than $n/2$ many agents.
For this, they start with looking at the number of agents in any radius $12\lg n$ neighborhood and then apply Lemma~11 iteratively.
They show that the maximal radius $d$, for which Lemma~11 must be applied, is at most $5^{1+\sqrt{\lg n}}$, which gives a first lower bound for $k$.
Using this result, they apply their Corollary~8 to show that actually all agents are contained in a specific ball, for which they need the operation of creating an edge to a agent at distance $2\cdot 5^{1+\sqrt{\lg n}} + 24\lg(n) + 3$, which is the second lower bound for $k$.

Using both, they show that in every equilibrium there is a agent that contains all other agents withing its $8\cdot 5^{1+\sqrt{\lg n}} + 24\lg(n) + 2$ neighborhood.
With the choice of $k$, the same holds for $k$-local equilibria and we get $\kNE = \NE$ as well as a price of anarchy upper bound of $\mathcal{O}\left(5^{\sqrt{\lg n}}\lg n\right)$.
\end{proof}

\begin{lemma}[\cite{Al06}, Theorem~3.1]
For $12n\log n \leq \alpha$ and $2 \leq k$, it holds $\kNE = \NE$ and the price of anarchy is $\mathcal{O}(1)$.
\end{lemma}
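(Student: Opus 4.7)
The plan is to establish $\kNE = \NE$ for $\alpha \geq 12n\log n$ and $k\geq 2$; the $\PoA=\mathcal{O}(1)$ bound then transfers immediately from \cite[Theorem~3.1]{Al06}. Since $\NE \subseteq \kNE$ holds by Observation~\ref{observation:NEinclusions}, only the reverse inclusion requires an argument. My overall strategy mirrors the pattern already used for the other cases in Theorem~\ref{thm:whenkNEequalsNE}: revisit each improving deviation employed in the proof of \cite[Theorem~3.1]{Al06} and verify that it is realised by a $k$-local move for $k\geq 2$.

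First, I would show that every $\kNE$ network in this range must be a tree. The argument in \cite{Al06} rules out cycles by exhibiting an agent owning a cycle edge who can profitably delete it: deletion saves $\alpha\geq 12n\log n$, while the resulting distance increases are dominated by this saving. Since a single edge deletion is a $1$-local move, the same witness is available to any agent in a $\kNE$ with $k\geq 1$, so the cycle-exclusion argument carries over verbatim and every $\kNE$ in this range is a tree.

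Second, on tree networks I would leverage Lemma~\ref{lem_tree_swap}, which converts any improving single edge swap into a $2$-local improving swap, so a tree in $\kNE$ with $k\geq 2$ is also swap-stable in the global sense. Combined with the Asymmetric Swap Equilibrium diameter bound of $\mathcal{O}(\log n)$ from \cite{MS12} (applicable via Lemma~\ref{lem_tree_swap}), this implies that for any single edge purchase the resulting distance-cost decrease is at most $\mathcal{O}(n\log n)$, which is strictly less than $\alpha\geq 12n\log n$ for a sufficiently large hidden constant. Hence no single edge purchase is improving either.

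The main obstacle is ruling out bundled deviations in which an agent simultaneously buys several edges and possibly deletes or swaps others — such bundles are $k$-local but not greedy, so Theorem~\ref{thm_3GE_vs_3NE} does not directly apply. To handle these, I would argue via a diminishing-returns estimate on logarithmic-diameter trees: the cost of $t$ simultaneous purchases is $t\alpha$, while the marginal distance-saving of each additional purchased edge is bounded by the current diameter times the number of affected agents, which stays in $\mathcal{O}(n\log n)$ throughout the bundle. Summing these marginals gives total benefit $t\cdot\mathcal{O}(n\log n) < t\alpha$, so no bundle is improving. Together with the previous steps this yields $\kNE=\NE$, and the claim follows from \cite[Theorem~3.1]{Al06}.
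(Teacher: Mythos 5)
Your route is genuinely different from the paper's. The paper's proof is purely by inspection of \cite{Al06}: it observes that the only edge-creation/deletion witnesses used in Albers et al.'s Lemmas~3.2 and 3.3 are buying an edge to an agent at distance $2$ and deleting an edge, both of which are $2$-local, so their entire chain (girth at least $12\lceil\log n\rceil$, hence a tree of diameter less than $6\lceil\log n\rceil$, hence $\PoA\leq 1.5$) applies verbatim to \kNE networks. You instead try to rebuild the argument from the paper's own toolbox (Lemma~\ref{lem_tree_swap}, the ASE diameter bound of \cite{MS12}) and then kill all remaining deviations directly. That is a legitimate plan, but as written it has two concrete gaps. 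First, a quantitative one: your step ruling out purchases needs $n\cdot\diam(T)<\alpha$, i.e.\ it needs the hidden constant in the $\mathcal{O}(\log n)$ ASE diameter bound of \cite{MS12} to be \emph{smaller} than the constant $12$ in $\alpha\geq 12n\log n$; writing that the inequality holds ``for a sufficiently large hidden constant'' gets the direction of the requirement backwards, and you never verify the actual constant. The paper sidesteps this entirely because Albers et al.'s girth argument is already calibrated to the threshold $12n\log n$. (A smaller imprecision of the same kind: single edge-deletion only excludes cycles of length below roughly $12\log n$; long cycles are excluded in \cite{Al06} via the diameter bound, not ``verbatim'' by deletion.)

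Second, an accounting gap in the bundled-deviation step. On a tree, an agent who deletes $s$ own edges must buy $t\geq s$ new ones to stay connected, so the edge-cost change is $(t-s)\alpha\geq 0$ --- it is \emph{not} $t\alpha$. For a pure multi-swap ($t=s$) the edge-cost change is zero, and your comparison ``total benefit $t\cdot\mathcal{O}(n\log n)<t\alpha$'' yields nothing; you would have to show the distance change alone is nonnegative. This can be repaired (each deleted edge $uv_i$ detaches a subtree that must receive exactly one new edge, the distance effects of these component swaps are confined to pairwise disjoint subtrees and therefore decompose into single swaps, which Lemma~\ref{lem_tree_swap} already handles), but that argument is missing, and your step~2 covers only single swaps. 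Until both points are closed, the claim $\kNE=\NE$ is not established; the $\NE\subseteq\kNE$ direction from Observation~\ref{observation:NEinclusions} is of course free.
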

\begin{proof}
In \cite{Al06}, the authors provide a proof that characterizes equilibria for $12n\log n \leq \alpha$.
The main insight for their bound is that there are different types of agents that lead (see their Proposition~1, which uses their Lemma~3.1 and Lemma~3.2) to the equilibrium characterization that any equilibrium graph with girth of at least $12\lceil\log n\rceil$ has a diameter of less than $6\lceil\log(n)\rceil$ and is a tree.
In their Lemma~3.3, they prove that this big $\alpha$ ensures a girth of at least $12\lceil\log n\rceil$.
The result of Theorem~3.1 then comes from a comparison to the social optimum and gives a price of anarchy upper bound of at most $1.5$.

Interestingly, in all used statements, there are only two statements concerning creation or deletion of edges.
For Lemma~3.2, the operation of buying an edge in distance $2$, and for Lemma~3.3 the operation of deleting an edge is considered.
Both operations are allowed with $k\geq 2$.
Hence, for any $k\geq 2$, we have $\kNE = \NE$ and the PoA bound of $1.5$ from \cite{Al06} applies.
\end{proof}

\end{document}